\documentclass[ecta,nameyear,final]{econsocart1}

\usepackage[colorlinks,citecolor=blue,linkcolor=blue,urlcolor=blue]{hyperref}

\usepackage{amsmath}
\usepackage{amssymb}
\usepackage{comment}
\usepackage{enumitem}
\usepackage{graphicx}
\usepackage{natbib}

\graphicspath{{Figure/}}

\startlocaldefs

\numberwithin{equation}{section}

\theoremstyle{plain}

\newtheorem{theorem}{Theorem}

\newtheorem{lemma}[theorem]{Lemma}

\theoremstyle{definition}

\newtheorem{remark}{Remark}
\newtheorem{proposition}{Proposition}[section]
\newtheorem{assumption}{Assumption}[section]

\newcommand{\argmax}{\mathrm{argmax}}
\newcommand{\Var}{\operatorname{Var}}

\newcommand{\plim}{\operatorname{plim}}

\newcommand{\Op}{O_p}
\newcommand{\R}{\mathbb R}
\newcommand{\cT}{\mathcal T}
\newcommand{\mydate}{June, 2026}

\endlocaldefs

\allowdisplaybreaks[2]

\begin{document}
	
\begin{frontmatter}

\title{Causal Inference Using Factor Models}
\runtitle{Causal Factor Models}

\begin{aug}
\author[add1]{\fnms{Jushan}~\snm{Bai}\ead[label=e1]{jb3064@columbia.edu}}
\author[add2]{\fnms{Peng}~\snm{Wang}\ead[label=e2]{pwang@ust.hk}}
\address[add1]{%
	\orgdiv{Department of Economics},
	\orgname{Columbia University}}

\address[add2]{%
	\orgdiv{Department of Economics},
	\orgname{HKUST}}
\end{aug}	

\begin{funding}
	Wang gratefully acknowledges financial support from the Hong Kong Research Grants Council through GRF 16506024.
\end{funding}

\mydate

\begin{abstract}

We develop a factor-model framework for causal inference in panels with
policy interventions. Treatment effects are represented as structural changes in treated units' exposure
to latent common shocks and, in extensions, changes in the factor process itself.
The approach does not impose the standard
parallel-trends restriction, accommodates one or many treated units, and
targets systematic effects when unit-time idiosyncratic effects are not
point identified. We provide estimation and inference under both fixed
and treatment-dependent factor processes. Simulations show coverage close to nominal levels. In applications to
California tobacco control and German reunification, the method produces
estimates broadly consistent with synthetic control while delivering
formal confidence intervals.

\end{abstract}

\begin{keyword}
	\kwd{Causal inference}
	\kwd{factor models}
	\kwd{synthetic control}
	\kwd{panel data}
	\kwd{structural breaks}
	\kwd{ JEL: C23, C32, C33, C54}
\end{keyword}

\end{frontmatter}


\section{Introduction}\label{sec:intro}

Causal inference methods for policy evaluation, particularly synthetic control
and difference-in-differences, have become widely used in empirical work. The
synthetic control method, introduced by \citet{Abadie_Gardeazabal_2003} and
developed further by \citet{Abadie_Diamond_Hainmeuller_2010,Abadie_Diamond_Hainmeuller_2015}, constructs a weighted
combination of control units that matches the treated unit's pre-treatment
characteristics and outcomes, and uses this synthetic counterpart to form a
post-intervention counterfactual. Similarly, the difference-in-differences
approach, exemplified by \citet{Card_Krueger_1994}, compares the change in
outcomes for a treated group before and after an intervention to the
corresponding change for a control group over the same period.

Both methodologies rely on restrictions that make the untreated
potential outcomes for treated units recoverable from the behavior of
untreated or control units. In difference-in-differences, this
restriction is commonly expressed as parallel trends; in synthetic
control, it is based on pre-treatment fit and the stability of the
relationship between treated and control outcomes. However, real-world
data often exhibit heterogeneous trends, challenging the plausibility
of these restrictions. \citet{Gobillon_Magnac_2016} and
\citet{Xu_2017} address heterogeneous trends through panel data models
with interactive fixed effects, allowing for unit-specific exposure to
common shocks.

In this paper, we employ a factor model to decompose potential outcomes into
systematic and idiosyncratic components. A policy intervention can affect both
components. The individual causal effect can therefore be written as the sum
of a systematic and an idiosyncratic causal component. We emphasize that,
while the idiosyncratic component is not point identified for a fixed unit and
date, the systematic component is identified.
Moreover, we interpret the impact of the policy intervention as a
structural-break problem in the factor representation, which can be rigorously studied using established
structural-break theory. The structural-break perspective also clarifies
which features of the data are required for identification under different
empirical environments.

The paper makes three contributions. First, it separates the realized unit-time treatment effect into a systematic
component, generated by changes in factor loadings, slope coefficients, and
possibly factors, and an idiosyncratic component that is not point identified
for a fixed unit and date.
Second,
it provides estimation and inference for the systematic component in
settings with one treated unit or many treated units, making the framework
applicable to empirical settings in which synthetic-control methods are
commonly used.
Third, it extends the framework to policies that may alter the factor
process itself, showing when such effects are identifiable and when additional
restrictions, such as constant or affine factor shifts, are needed.

These contributions
are developed across a number of empirically relevant environments. First, the policy
intervention may leave the common-shock process unchanged but alter treated
units' exposure to these shocks, so that treatment effects arise through
structural breaks in factor loadings and possibly slope coefficients. This
benchmark is appropriate for targeted interventions whose general-equilibrium
feedback to aggregate trends is plausibly negligible. It remains feasible even
when the number of treated units is very small, including a single treated
unit, because the common factors can be learned from the control group and
treated parameters are identified from time-series variation. Second, the
intervention may also alter the common-shock process itself; in this case, a
flexible post-treatment factor process can be identified when the treated
cross section is large, which permits direct estimation of post-treatment
treated factors. Third, when the intervention may affect both factors and
loadings but the treated group is small, a fully flexible post-treatment factor
process is not separately identifiable; we therefore impose a restricted
post-treatment factor perturbation, such as an affine shift, which preserves
tractable identification and inference while still allowing the policy to
change the common-shock process.

A main feature of our approach is a \emph{dual} modeling strategy: 
we model both the counterfactual potential outcome $Y_{it}(0)$ and the observed outcome
$Y_{it}(1)$ within the same factor structure. In many panel settings,
researchers observe only a subset of the relevant determinants of outcomes, and
idiosyncratic variation can be large relative to the predictable component.
Consequently, if one models only $Y_{it}(0)$, the gap between the realized
outcome $Y_{it}(1)$ and the model-implied counterfactual $\hat Y_{it}(0)$ may
be dominated by residual noise and unmeasured determinants, making the
difference $Y_{it}(1)-\hat Y_{it}(0)$ an unstable object for inference.  
By modeling both potential outcomes, the analysis targets the change
in the systematic component directly, instead of treating unit-time idiosyncratic noise as part of the causal signal.

Our framework relates to, but differs from, existing approaches to panel-data
program evaluation. Difference-in-differences and synthetic-control methods
typically focus on constructing the untreated counterfactual outcome for treated
units. Interactive fixed-effect methods enrich this counterfactual by treating
latent factors as controls for unobserved confounding, analogous to observed
covariates. In these approaches, the factor component is primarily a control
for the untreated outcome process.

In contrast, our framework models both potential outcomes within a factor
structure and allows the treated units' factor loadings to change after the
intervention. Thus the factor component is not merely a control for unobserved
heterogeneity: changes in exposure to common shocks are themselves part of the
systematic causal effect. This shift leads to a different target, the
systematic component of the unit-time treatment effect, rather than the
difference between a realized treated outcome and an imputed untreated outcome.
Section \ref{sec:model-setup} discusses these connections in more detail.

The remainder of the paper is organized as follows. Section \ref{sec:model-setup} introduces the
model and conceptual framework and compares our approach to existing methods.
Section \ref{sec:Identification} studies identification of systematic causal effects. Section \ref{sec:potential-factor} extends the framework
to allow for potential factors, in which the intervention may also alter the
common-shock process. Section \ref{sec:Estimation-and-Inference} develops estimation and inference procedures.
Section \ref{sec:mcmc} reports Monte Carlo evidence. Section \ref{sec:applications} presents empirical
applications to the two datasets analyzed by \citet{Abadie_Diamond_Hainmeuller_2010,Abadie_Diamond_Hainmeuller_2015}.
The last section concludes. Additional technical material is collected in the Appendix, and proofs are
provided in the Online Appendix.

\section{Modeling the Potential Outcomes}\label{sec:model-setup}

We consider a panel data set that includes an outcome variable and
some covariates. The observed outcome variable $Y$ is indexed by
unit and time, i.e., $Y_{it}$, $i=1,2,\ldots,n$, $t=1,2,\ldots,T$.
The potential outcome for unit $i$ in period $t$ is denoted by $Y_{it}\left(d\right)$,
$d=0,1$, with $d=1$ referring to the case of treatment and $d=0$
for the case of no treatment. Assume that a policy intervention occurs
in period $T_{0}+1$ with $1<T_{0}<T$. Let $D_{it}$ denote the observed
treatment dummy. The policy intervention only applies to units $i\leq n_{0}$
without directly affecting units $i>n_{0}$. To focus on the main
idea, we assume that the policy intervention occurs in the same period
for all treated units. The treatment status can be summarized by
\begin{equation}
D_{it}=\begin{cases}
0, & i>n_{0}\;\&\;1\leq t\leq T,\;\text{ or }\;i\leq n_{0}\;\&\;t\leq T_{0},\\
1, & i\leq n_{0}\;\&\;t>T_{0}.
\end{cases}\label{eq:Dit}
\end{equation}
More compactly, $D_{it}=1\left\{ i\leq n_{0},t>T_{0}\right\} $,
where $1\left\{ \cdot\right\} $ is the indicator function.

Our theory will be developed under the conditions where $n-n_{0}\rightarrow\infty$,
$T_{0}\rightarrow\infty$, and $T-T_{0}\rightarrow\infty$. Within
this framework, the number of treated units, $n_{0}$, may either
be a fixed constant or approach infinity. Importantly, we allow for
the case of a single treated unit ($n_{0}=1$), which is directly
related to the synthetic control literature.

We use a dual modeling strategy that characterizes both potential
outcomes through a factor model:
\begin{equation}
Y_{it}\left(d\right)=\lambda_{i}\left(d\right)^{\prime}f_{t}+X_{it}^{\prime}\beta\left(d\right)+\varepsilon_{it}\left(d\right),\;d=0,1,\label{eq:mainmodel}
\end{equation}
where $X_{it}$ denotes the vector of observed covariates and $f_{t}$
is an $r\times1$ vector of unobserved factors. For a given treatment
status $d$, $\lambda_{i}\left(d\right)$ represents the $r\times1$
vector of unobservable factor loadings, $\beta\left(d\right)$ is the vector of
coefficients, and $\varepsilon_{it}\left(d\right)$ denotes the idiosyncratic
error, where the factor loadings, slope coefficients, and idiosyncratic errors are
indexed by treatment status  $d$.

We interpret the factors $f_t$ as latent common shocks or trends generated
outside individual units, such as aggregate macroeconomic conditions,
industry-wide demand shifts, or other pervasive forces that move many units
simultaneously. The policy intervention is assumed not to create these common
forces; rather, it changes how treated units respond to them. In the factor
representation, this corresponds to structural breaks in the treated units'
factor loadings (and possibly slope coefficients), while the factor process
itself remains common across units. This benchmark is especially appropriate
when the treated group is small relative to the population or when the
intervention is plausibly partial-equilibrium, so that general-equilibrium
feedback to aggregate trends is negligible. We relax this restriction in the
potential-factor analysis, allowing the intervention to alter the factor
process when such feedback is plausible.

For treated units, the observed outcome is
\begin{align*}
Y_{it} & =\begin{cases}
\lambda_{i}\left(0\right)^{\prime}f_{t}+X_{it}^{\prime}\beta\left(0\right)+\varepsilon_{it}\left(0\right)=Y_{it}\left(0\right), & t\leq T_{0},\\
\lambda_{i}\left(1\right)^{\prime}f_{t}+X_{it}^{\prime}\beta\left(1\right)+\varepsilon_{it}\left(1\right)=Y_{it}\left(1\right), & t>T_{0},
\end{cases},\;i=1,\ldots,n_{0}.
\end{align*}
For untreated units, the observed outcome is
\[
Y_{it}=\lambda_{i}\left(0\right)^{\prime}f_{t}+X_{it}^{\prime}\beta\left(0\right)+\varepsilon_{it}\left(0\right)=Y_{it}\left(0\right),\;i=n_{0}+1,\ldots,n,\;t=1,\ldots,T.
\]

The causal effect for the treated units is defined as:
\[
\tau_{it}=Y_{it}\left(1\right)-Y_{it}\left(0\right)=Y_{it}-Y_{it}\left(0\right),\;t>T_{0},\;i=1,\ldots,n_{0}.
\]
Using the factor model, we can rewrite $\tau_{it}$ for $i\le n_{0}$
and $t>T_{0}$ as the sum of a systematic causal effect and an
idiosyncratic causal effect:
\begin{align}
\tau_{it} & =\left\{ \lambda_{i}\left(1\right)^{\prime}f_{t}+X_{it}^{\prime}\beta\left(1\right)+\varepsilon_{it}\left(1\right)\right\} -\left\{ \lambda_{i}\left(0\right)^{\prime}f_{t}+X_{it}^{\prime}\beta\left(0\right)+\varepsilon_{it}\left(0\right)\right\} \nonumber \\
 & =\underbrace{\left[\lambda_{i}\left(1\right)-\lambda_{i}\left(0\right)\right]^{\prime}f_{t}+X_{it}^{\prime}\left[\beta\left(1\right)-\beta\left(0\right)\right]}_{\text{systematic causal effect}}+\underbrace{\left[\varepsilon_{it}\left(1\right)-\varepsilon_{it}\left(0\right)\right]}_{\text{idiosyncratic causal effect}}.\label{eq:unit-causaleffect}
\end{align}
This representation enables us to assess the sources of heterogeneous
causal effects arising from structural breaks in factor loadings,
covariate coefficients, or both. We will use $\tau_{it}^{*}$ to denote
the systematic causal effect
\begin{equation}
\tau_{it}^{*}=\left[\lambda_{i}\left(1\right)-\lambda_{i}\left(0\right)\right]^{\prime}f_{t}
+X_{it}^{\prime}\left[\beta\left(1\right)-\beta\left(0\right)\right].\label{eq:sys-causaleffect}
\end{equation}

Our object of interest is the systematic causal effect $\tau_{it}^*$. Note that the
idiosyncratic causal effect $\varepsilon_{it}\left(1\right)-\varepsilon_{it}\left(0\right)$
is unobserved, making the individual causal effect $\tau_{it}$ non-identifiable
without additional assumptions. Nevertheless, the cross-sectional average
causal effect $\bar{\tau}_{t}\equiv\frac{1}{n_{0}}\sum_{i=1}^{n_{0}}\tau_{it}$
remains identifiable, as the average difference in errors $\frac{1}{n_{0}}\sum_{i=1}^{n_{0}}\left[\varepsilon_{it}\left(1\right)-\varepsilon_{it}\left(0\right)\right]$
will converge to zero in probability if $n_{0}\rightarrow\infty$, under some standard assumptions on the error terms $\varepsilon_{it}(d)$.
Likewise, the time average causal effect $\bar{\tau}_{i}\equiv\frac{1}{T-T_{0}}\sum_{t=T_{0}+1}^{T}\tau_{it}$
is also identifiable if $T-T_0$ goes to infinity. Thus the average causal effect and the
average systematic causal effect can be asymptotically equivalent under additional assumptions.
In Section \ref{sec:Identification}, we detail the identification
of the systematic causal effect $\tau_{it}^{*}$.

The existing literature on causal inference typically focuses on modeling
the potential outcome $Y_{it}\left(0\right)$ exclusively, estimating
a model-implied $\hat{Y}_{it}\left(0\right)$, while imposing no model
on $Y_{it}\left(1\right)$. Consequently, the implied estimator for
the treatment effect for the treated unit is defined as $Y_{it}\left(1\right)-\hat{Y}_{it}\left(0\right)$.
We argue in subsections \ref{subsec:CompareIFE} and \ref{subsec:CompareSynth}
that this approach may lead to estimators with large imputation errors due to the presence
of $\varepsilon_{it}\left(d\right)$, which is $O_{p}\left(1\right)$.
To see a simple example, assume $Y_{it}\left(1\right)$ is model-free
while the model for $Y_{it}\left(0\right)$ consists of a single regressor
and no factors, i.e., $Y_{it}\left(0\right)=X_{it}\beta\left(0\right)+\varepsilon_{it}\left(0\right)$.
Given a consistent estimator $\hat{\beta}\left(0\right)$ for $\beta\left(0\right)$,
this model implies $\hat{Y}_{it}\left(0\right)=X_{it}\hat{\beta}\left(0\right)$.
The conventional estimator for the treatment effect for the treated
would be $Y_{it}-\hat{Y}_{it}\left(0\right)$ for $i\leq n_{0}$ and
$t>T_{0}$, which can be further represented as
\begin{align}
Y_{it}-\hat{Y}_{it}\left(0\right) & =Y_{it}\left(1\right)-Y_{it}\left(0\right)+Y_{it}\left(0\right)-\hat{Y}_{it}\left(0\right)\nonumber \\
 & =\tau_{it}+\underbrace{X_{it}\left[\beta\left(0\right)-\hat{\beta}\left(0\right)\right]}_{=o_{p}\left(1\right)}
 +\underbrace{\varepsilon_{it}\left(0\right)}_{=O_{p}\left(1\right)},\label{eq:conventionalcausal}
\end{align}
where $\tau_{it} =Y_{it}\left(1\right)-Y_{it}\left(0\right)$.
The estimator contains an idiosyncratic imputation error that does not vanish for a fixed treated unit and post-treatment period.
As a result, the conventional estimator $Y_{it}-\hat{Y}_{it}\left(0\right)$
is generally inconsistent for $\tau_{it}$.

In contrast, our approach models both potential outcomes,
\(Y_{it}(0)\) and \(Y_{it}(1)\). The purpose is not to claim that each
potential outcome can be predicted without error. In panel applications,
the fitted value of a potential outcome may differ substantially from
the realized outcome because of omitted variables, idiosyncratic shocks,
or model misspecification. This concern is especially relevant when
panel regressions have low explanatory power.

The distinction matters because conventional imputation methods usually
model only the missing untreated outcome \(Y_{it}(0)\), while using the
realized treated outcome \(Y_{it}(1)\) directly. Even if the imputation
recovers the systematic component of \(Y_{it}(0)\), the resulting
estimator may still contain the one-sided idiosyncratic error associated
with the untreated potential outcome. Thus, for a fixed unit and period,
the imputation error need not vanish, and the resulting estimator need
not be consistent for the realized individual effect. This problem can
be severe when \(\varepsilon_{it}(0)\) is large.

Our dual-modeling approach instead places the two potential outcomes on
equal footing. Both \(Y_{it}(0)\) and \(Y_{it}(1)\) are represented by
their systematic components under the factor structure. Using
(\ref{eq:unit-causaleffect}) and (\ref{eq:sys-causaleffect}), we can
write
\begin{align}
\tau_{it}^{*}
&=
\tau_{it}
-
\underbrace{
\left[
\varepsilon_{it}(1)-\varepsilon_{it}(0)
\right]
}_{\text{idiosyncratic causal component}}.
\label{eq:systematic}
\end{align}
Hence, even when \(\varepsilon_{it}(1)\) and \(\varepsilon_{it}(0)\)
are individually large, their difference may be small. In that case,
the systematic causal effect \(\tau_{it}^{*}\) can be close to the
realized individual effect \(\tau_{it}\).

The condition $\varepsilon_{it}(1)-\varepsilon_{it}(0)=0$
is not imposed as an identifying assumption for \(\tau_{it}^{*}\).
Rather, it is a special case under which the systematic causal effect
coincides with the realized individual effect. In general, the object of
interest in this paper is \(\tau_{it}^{*}\), the component of the
unit-time treatment effect generated by the systematic parts of the two
potential outcomes.

Table \ref{tab:Outcomes} provides a timeline of the potential and
observed outcomes. For the control group $(j>n_{0})$, the observed
outcome $Y_{jt}$ equals the potential outcome $Y_{jt}(0)$ for all
$t$. For the treatment group $(i\leq n_{0})$, the observed outcome
$Y_{it}$ equals the potential outcome $Y_{it}(0)$ before the intervention
$(t\leq T_{0})$ and equals the potential outcome $Y_{it}(1)$ after the intervention
$(t>T_{0})$. The last three rows of Table \ref{tab:Outcomes}
give, respectively, the counterfactual, individual and systematic
causal effects for the treated group. Our discussion of (\ref{eq:unit-causaleffect})
and (\ref{eq:sys-causaleffect}) demonstrates that we can consider
the systematic causal effect ($\tau_{it}^{*}$) as the primary objective
of interest.

Figure \ref{fig:unit-trends} provides an example of the relationship
among factors, realized individual trend before the intervention ($\lambda_{i}\left(0\right)^{\prime}\cdot f_{s}$,
$s\leq T_{0}$) and after the intervention ($\lambda_{i}\left(1\right)^{\prime}\cdot f_{t}$,
$t>T_{0}$), as well as the potential individual trend ($\lambda_{i}\left(0\right)^{\prime}\cdot f_{t}$,
$t>T_{0}$). In the figure, $f_{t}$ is represented as a smooth
function of time $t$, though smoothness is not required. The illustration
depicts only a single factor for clarity. We have omitted the unit
subscript $i$ from the factor loadings for notational simplicity;
thus, $\lambda\left(d\right)$ in the figure corresponds to $\lambda_{i}\left(d\right)$
for the treated unit $i$, where $d=0,1$. To emphasize the conceptual
framework, we have omitted the covariates in Table \ref{tab:Outcomes}
and Figure \ref{fig:unit-trends}.

\begin{table}[!tp]
\begin{centering}
\begin{tabular}{c|c|c}
\hline
 & $s\leq T_{0}$  & $t>T_{0}$\tabularnewline
\hline
\hline
$\underset{\left(i\leq n_{0}\right)}{\text{Treated}}$  & $Y_{is}=\underbrace{\lambda_{i}\left(0\right)'f_{s}+\varepsilon_{is}\left(0\right)}_{Y_{is}\left(0\right)}$  & $Y_{it}=\underbrace{\lambda_{i}\left(1\right)^{\prime}f_{t}+\varepsilon_{it}\left(1\right)}_{Y_{it}\left(1\right)}$\tabularnewline
\hline
$\underset{\left(j>n_{0}\right)}{\text{Control}}$  & $Y_{js}=\underbrace{\lambda_{j}\left(0\right)^{\prime}f_{s}+\varepsilon_{js}\left(0\right)}_{Y_{js}\left(0\right)}$  & $Y_{jt}=\underbrace{\lambda_{j}\left(0\right)^{\prime}f_{t}+\varepsilon_{jt}\left(0\right)}_{Y_{jt}\left(0\right)}$\tabularnewline
\hline
$\underset{\left(i\leq n_{0}\right)}{\text{Counterfactual}}$  &  & $Y_{it}\left(0\right)=\lambda_{i}\left(0\right)^{\prime}f_{t}+\varepsilon_{it}\left(0\right)$\tabularnewline
\hline
$\underset{\left(i\leq n_{0}\right)}{\text{Individual causal effect}}$  &  & $\tau_{it}=Y_{it}\left(1\right)-Y_{it}\left(0\right)$\tabularnewline
\hline
$\underset{\left(i\leq n_{0}\right)}{\text{Systematic causal effect}}$  &  & $\tau_{it}^{*}=\lambda_{i}\left(1\right)^{\prime}f_{t}-\lambda_{i}\left(0\right)^{\prime}f_{t}$\tabularnewline
\hline
\end{tabular}
\end{centering}
\caption{Outcomes before and after the intervention.}\label{tab:Outcomes}
\end{table}

\begin{figure}[!tp]
\begin{centering}
\includegraphics[width=0.85\columnwidth]{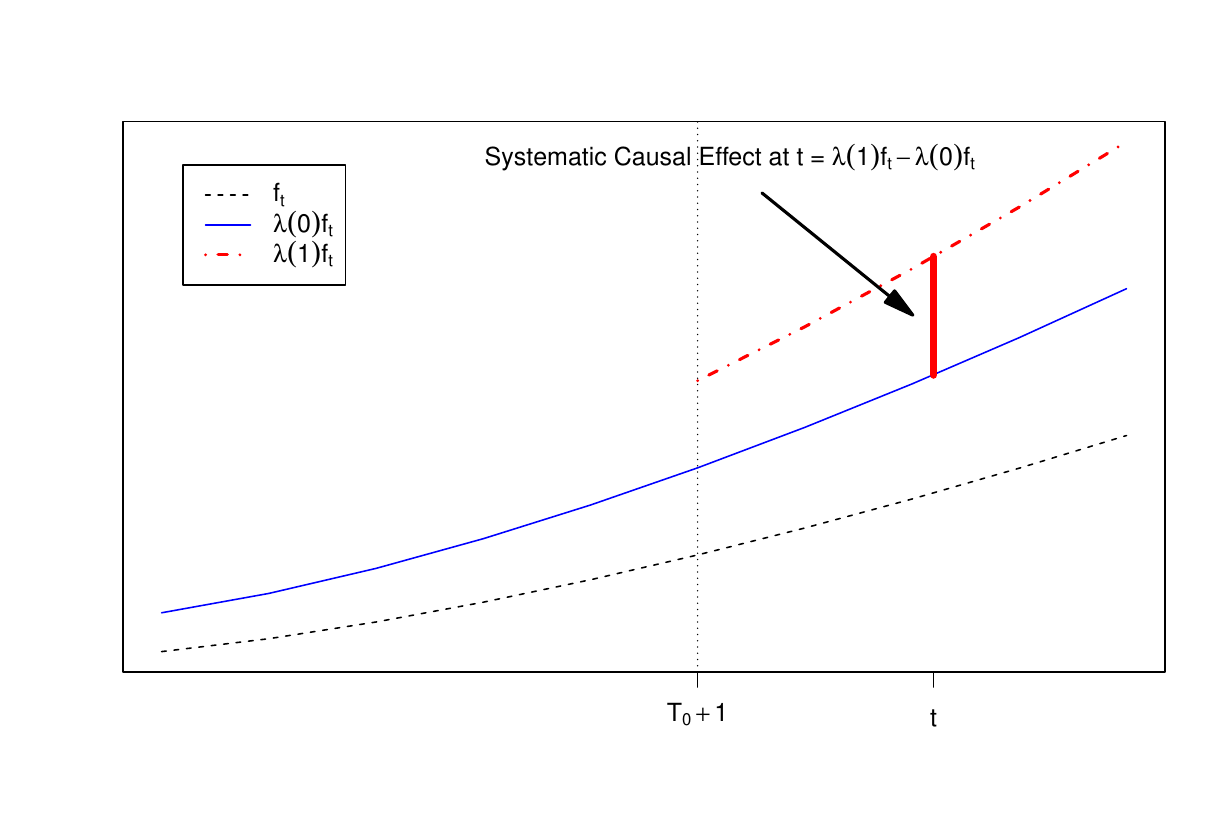}
\end{centering}
\caption{Individual trend before and after the intervention.}\label{fig:unit-trends}
\end{figure}

In Section \ref{sec:Identification}, we show that the systematic
causal effect $\tau_{it}^{*}$ in (\ref{eq:sys-causaleffect}) is
identifiable because $\left\{ \lambda_{i}\left(d\right),f_{t},\beta\left(d\right)\right\} $,
$d=0,1$, are all identifiable. Then a natural estimator for $\tau_{it}^{*}$
is given by
\begin{align}
\hat{\tau}_{it}^{*} & =\left[\hat{\lambda}_{i}\left(1\right)-\hat{\lambda}_{i}\left(0\right)\right]^{\prime}\hat{f}_{t}
+X_{it}^{\prime}\left[\hat{\beta}\left(1\right)-\hat{\beta}\left(0\right)\right],\label{eq:estimator-main}
\end{align}
where the hatted variables are the corresponding estimators.

We will show that $\hat{\tau}_{it}^{*}$ is a consistent estimator for the systematic causal effect $\tau_{it}^{*}$, i.e.,
\[
\hat{\tau}_{it}^{*}-\tau_{it}^{*}=o_{p}\left(1\right),
\]
when the number of untreated units ($n-n_{0}$) and the number of
periods before and after the treatment ($T_{0}$ and $T-T_{0}$)
are large. Such a model-based approach allows us to
identify the systematic component irrespective of the idiosyncratic errors. Existing causal inference methods, such
as difference-in-differences, synthetic control, and matrix completion
(e.g., \citet{Athey_2021,Bai_Ng_2021}) focus on constructing the counterfactual $Y_{it}\left(0\right)$
from the control group. In contrast, our model-based method emphasizes
the systematic component of the difference $Y_{it}\left(1\right)-Y_{it}\left(0\right)$,
rather than attributing a one-sided idiosyncratic residual to the causal effect.

\subsection{Relationship with Interactive Fixed Effects Models}\label{subsec:CompareIFE}

The causal model that we propose is closely related to interactive
fixed-effect approaches such as \citet{Gobillon_Magnac_2016} and \citet{Xu_2017},
but differs in the role assigned to the factors. In those approaches, the
factor component is primarily used as a latent control for unobserved
confounding: common shocks enter the untreated potential outcome with
unit-specific exposures, analogous to observed covariates. The treatment
effect is then estimated as the difference between the realized treated
outcome and an imputed untreated outcome.

In contrast, our framework allows the treated unit's exposure to the same
common shocks to change after the intervention. Thus the factor loadings are
not merely nuisance parameters used to control for unobserved heterogeneity;
changes in these loadings are themselves part of the systematic causal effect.
This distinction is especially important when the intervention changes how a
treated unit responds to aggregate conditions, industry-wide shocks, or other
latent forces.

The potential outcomes in the causal model of \citet{Gobillon_Magnac_2016} and \citet{Xu_2017} have the following representation
\begin{align}
Y_{it}\left(0\right) & =\lambda_{i}^{\prime}f_{t}+X_{it}^{\prime}\beta+\varepsilon_{it}, \nonumber\\
Y_{it}\left(1\right) & =\delta_{it}+\lambda_{i}^{\prime}f_{t}+X_{it}^{\prime}\beta+\varepsilon_{it}, \label{eq:interactive-effects}
\end{align}
where the error term $\varepsilon_{it}$ is not indexed by the treatment
status ($d=0,1$) and $\delta_{it}=Y_{it}\left(1\right)-Y_{it}\left(0\right)$
is defined as the individual causal effect. The estimator for this
causal effect is
\begin{align}
\hat{\delta}_{it} & =Y_{it}\left(1\right)-\hat{Y}_{it}\left(0\right)\nonumber \\
 & =Y_{it}-\left(\hat{\lambda}_{i}^{\prime}\hat{f}_{t}+X_{it}^{\prime}\hat{\beta}\right),\;\;t>T_{0},\;i=1,\ldots,n_{0}.\label{eq:estimand-old}
\end{align}
Plugging in the model for $Y_{it}\left(1\right)$ to obtain
\begin{align*}
\hat{\delta}_{it} & =\left(\delta_{it}+\lambda_{i}^{\prime}f_{t}+X_{it}^{\prime}\beta+\varepsilon_{it}\right)-\left(\hat{\lambda}_{i}^{\prime}\hat{f}_{t}+X_{it}^{\prime}\hat{\beta}\right)\\
 & =\delta_{it}+\left(\lambda_{i}^{\prime}f_{t}-\hat{\lambda}_{i}^{\prime}\hat{f}_{t}\right)+X_{it}^{\prime}\left(\beta-\hat{\beta}\right)+\varepsilon_{it}.
\end{align*}
Under appropriate assumptions such as those in \citet{Bai_2009},
$\lambda_i'f_t-\hat{\lambda}_i'\hat f_t=o_p(1)$ and
$\beta-\hat\beta=o_p(1)$. Therefore,
\begin{equation}
\hat{\delta}_{it}-\delta_{it}
=
o_p(1)+\varepsilon_{it}
=
O_p(1).
\label{eq:imputation-error}
\end{equation}
Consequently, under the modeling strategies of \citet{Gobillon_Magnac_2016} and \citet{Xu_2017}, the idiosyncratic error $\varepsilon_{it}$
persists as an imputation error in the individual causal estimate. This
component does not vanish for a fixed treated unit and time period, but
it may be averaged out when the number of treated units, $n_0$, is large.
This  framework may be less suitable when $n_0$ is small, as in
many applications in the synthetic control literature.

Similarly, \citet{Callaway_Karami_2023} propose to model the untreated
potential outcomes as
$
Y_{it}\left(0\right)=\theta_{t}+\xi_{i}+\lambda_{i}^{\prime}f_{t}+\varepsilon_{it}.
$
This framework focuses on the average treatment effect. Their framework is
helpful for the applications in the difference-in-differences literature
but is not designed for the small $n_{0}$ case as in the synthetic
control literature.

More importantly, \citet{Gobillon_Magnac_2016,Xu_2017,Callaway_Karami_2023} focus on modeling the untreated potential outcome
$Y_{it}(0)$ and impose no restrictions on the causal effect $\delta_{it}$.
They do not specify an explicit model for the treated potential outcome
$Y_{it}(1)$. This flexibility comes at the cost of a nonvanishing
imputation error in estimators of individual causal effects, as shown
in \eqref{eq:imputation-error}. In contrast, we explicitly model both
$Y_{it}(0)$ and $Y_{it}(1)$, which restricts the causal effect
$\tau_{it}$ through \eqref{eq:unit-causaleffect}. The resulting estimator
\eqref{eq:estimator-main} targets the systematic unit-time causal effect
and does not contain the one-sided idiosyncratic imputation component that
appears in \eqref{eq:imputation-error}, even when the number of treated
units is small.

Recently, \citet{Brown_2023} proposed a framework
for modeling potential outcomes across all treatment statuses using
factor models, employing \citet{Pesaran_2006}'s Common Correlated Effects
(CCE) method to identify average causal effects. Their approach differs
from ours in two main ways. First, they concentrate on identifying
average causal effects, which necessitates that the group size for
treated units approaches infinity at the same rate as the group size
for untreated units.
In contrast, our framework can identify unit-time systematic effects even when
the treated group is small, and average realized effects under additional
averaging assumptions.
 While their method is designed
for a large $n$ and small $T$ setup, we focus on a large $n$ and
large $T$ configuration. Second, in their framework, the common component
$\lambda_{i}^{\prime}f_{t}$ remains unaffected by treatment status,
aligning with the setup used by \cite{Ferman_2021} in the context
of the synthetic control method and \cite{Freyaldenhoven_2021} in the
context of event-study design. In contrast, our approach permits this
component to vary across treatment statuses.

Panel models with factor structures have been studied extensively; see,
for example,\citet{Ahn_2001,Ahn_2013,Pesaran_2006,Bai_2009}. Much of this literature focuses on estimating slope
coefficients associated with observed covariates. Recent work has brought
factor models into causal inference, mainly to estimate treatment effects
averaged across treated units or post-treatment periods.
This paper instead uses a dual factor structure for $Y_{it}(0)$ and $Y_{it}(1)$ to
study systematic unit-time causal effects for a given unit $i$ and time period
$t$. The resulting estimand has a direct structural interpretation and
can be consistently estimated under standard large-panel conditions.

\subsection{Relationship with Difference-in-Differences Using Interactive Fixed Effects}

We next show that difference-in-differences, with or without interactive fixed
effects, can be viewed as a special case of our setup.
 A standard difference-in-differences model with constant treatment effects and
a common treatment timing is represented by the following two-way-fixed-effects
(TWFE) panel regression model:
\begin{equation*}
Y_{it}=\alpha_{i}+\theta_{t}+\rho D_{it}+X_{it}^{\prime}\beta+\varepsilon_{it},
\end{equation*}
where $D_{it}$ is the treatment indicator defined in (\ref{eq:Dit}).
Adding the interactive fixed effects gives
\begin{equation*}
Y_{it}=\alpha_{i}+\theta_{t}+\rho D_{it}+\lambda_{i}^{\prime}f_{t}+X_{it}^{\prime}\beta+\varepsilon_{it}.
\end{equation*}
In both specifications, $\rho$
represents the treatment effect.

The corresponding potential outcomes are given by
\begin{align*}
Y_{it}\left(0\right) & =\alpha_{i}+\theta_{t}+\lambda_{i}^{\prime}f_{t}+X_{it}^{\prime}\beta+\varepsilon_{it},\nonumber \\
Y_{it}\left(1\right) & =\alpha_{i}+\theta_{t}+\rho+\lambda_{i}^{\prime}f_{t}+X_{it}^{\prime}\beta+\varepsilon_{it},  
\end{align*}
where the error term $\varepsilon_{it}$ is not indexed by the treatment
status $d$. As a result, $\tau_{it}=\tau_{it}^{*}$. Define $g_{t}=\left[1,\theta_{t},f_{t}^{\prime}\right]^{\prime}$,
$\lambda_{i}\left(0\right)=\left[\alpha_{i},1,\lambda_{i}^{\prime}\right]^{\prime}$,
$\lambda_{i}\left(1\right)=\left[\alpha_{i}+\rho,1,\lambda_{i}^{\prime}\right]^{\prime}$,
and then the potential outcomes can be represented as
\begin{align}
Y_{it}\left(0\right) & =\lambda_{i}\left(0\right)^{\prime}g_{t}+X_{it}^{\prime}\beta+\varepsilon_{it},\nonumber \\
Y_{it}\left(1\right) & =\lambda_{i}\left(1\right)^{\prime}g_{t}+X_{it}^{\prime}\beta+\varepsilon_{it}, \label{eq:did-po-1}
\end{align}
which is a special case of causal model (\ref{eq:mainmodel})
in which $\beta\left(d\right)=\beta,\;d=0,1$. Our representation
of the causal effect (\ref{eq:unit-causaleffect}) or (\ref{eq:sys-causaleffect})
takes into account variations in $\lambda_{i}$ and $\beta$ across
different treatment statuses to model the heterogeneous causal effects.

If one wants to use a TWFE model to learn about the heterogeneous causal
effects, the regression model can be specified as
\begin{equation*}
Y_{it}=\alpha_{i}+\theta_{t}+\rho_{i}D_{it}+\lambda_{i}^{\prime}f_{t}+X_{it}^{\prime}\beta+\varepsilon_{it},
\end{equation*}
where the coefficient of $D_{it}$ is individual-dependent. The corresponding
potential outcomes are given by
\begin{align*}
Y_{it}\left(0\right) & =\alpha_{i}+\theta_{t}+\lambda_{i}^{\prime}f_{t}+X_{it}^{\prime}\beta+\varepsilon_{it},\nonumber \\
Y_{it}\left(1\right) & =\alpha_{i}+\theta_{t}+\rho_{i}+\lambda_{i}^{\prime}f_{t}+X_{it}^{\prime}\beta+\varepsilon_{it}. 
\end{align*}
Define $g_{t}=\left[1,\theta_{t},f_{t}^{\prime}\right]^{\prime}$,
$\lambda_{i}\left(0\right)=\left[\alpha_{i},1,\lambda_{i}^{\prime}\right]^{\prime}$,
$\lambda_{i}\left(1\right)=\left[\alpha_{i}+\rho_{i},1,\lambda_{i}^{\prime}\right]^{\prime}$.
Then the preceding potential-outcome equations can be represented as
(\ref{eq:did-po-1}), again a special case of our causal model
(\ref{eq:mainmodel}).

\subsection{Relationship with Synthetic Control}\label{subsec:CompareSynth}
To deliver the main idea, assume that the potential outcomes follow
a factor model without covariates and the error term is not indexed
by $d$ (so $\tau_{it}=\tau_{it}^{*}$),
\begin{equation*}
Y_{it}\left(d\right)=\lambda_{i}\left(d\right)^{\prime}f_{t}+\varepsilon_{it},\;\;d=0,1.  
\end{equation*}
Assume $i=1$ is treated with a policy intervention in period $T_{0}+1$,
and the unaffected control units are $i=2,\ldots,n$. The synthetic control
method constructs the counterfactual $Y_{1t}\left(0\right)$ as a
weighted average of the observed outcomes for the control units:
\[
\widehat{Y}_{1t}\left(0\right)=\sum_{i=2}^{n}\omega_{i}Y_{it},\quad t>T_{0},\;\omega_{i}\ge0,\;\sum_{i=2}^{n}\omega_{i}=1.
\]
Then $\widehat{Y}_{1t}\left(0\right)$ is equal to
\begin{align*}
\widehat{Y}_{1t}\left(0\right)
 & =\left(\sum_{i=2}^{n}\omega_{i}\lambda_{i}\left(0\right)^{\prime}\right)f_{t}+\sum_{i=2}^{n}\omega_{i}\varepsilon_{it}.
\end{align*}
The synthetic causal effect for $t>T_{0}$ is
\begin{align*}
\tau_{1t}^{synth} & \equiv Y_{1t}-\widehat{Y}_{1t}\left(0\right)\\
 & =\lambda_{1}\left(1\right)^{\prime}f_{t}+\varepsilon_{1t}-\left\{ \left(\sum_{i=2}^{n}\omega_{i}\lambda_{i}\left(0\right)^{\prime}\right)f_{t}+\sum_{i=2}^{n}\omega_{i}\varepsilon_{it}\right\} \\
 & =\left[\lambda_{1}\left(1\right)-\sum_{i=2}^{n}\omega_{i}\lambda_{i}\left(0\right)\right]^{\prime}f_{t}+\left[\varepsilon_{1t}-\sum_{i=2}^{n}\omega_{i}\varepsilon_{it}\right].
\end{align*}
Our model-based causal effect is given by
\[
\tau_{1t}^{factor}\equiv\left[\lambda_{1}\left(1\right)-\lambda_{1}\left(0\right)\right]^{\prime}f_{t}=Y_{1t}\left(1\right)-Y_{1t}\left(0\right),\quad t>T_{0}.
\]
The difference between these two is
\[
\tau_{1t}^{synth}-\tau_{1t}^{factor}=\left[\lambda_{1}\left(0\right)-\sum_{i=2}^{n}\omega_{i}\lambda_{i}\left(0\right)\right]^{\prime}f_{t}+\left[\varepsilon_{1t}-\sum_{i=2}^{n}\omega_{i}\varepsilon_{it}\right],\;t>T_{0}.
\]
The synthetic control chooses the weights such that the distance between
$Y_{1t}$ and $\sum_{i=2}^{n}\omega_{i}Y_{it}$ is small for $t\leq T_{0}$.
Note that for $t\leq T_{0}$,
\begin{align*}
Y_{1t}-\sum_{i=2}^{n}\omega_{i}Y_{it} & =Y_{1t}\left(0\right)-\sum_{i=2}^{n}\omega_{i}Y_{it}\left(0\right)\\
 & =\lambda_{1}\left(0\right)^{\prime}f_{t}+\varepsilon_{1t}-\sum_{i=2}^{n}\omega_{i}\left[\lambda_{i}\left(0\right)^{\prime}f_{t}+\varepsilon_{it}\right]\\
 & =\left[\lambda_{1}\left(0\right)-\sum_{i=2}^{n}\omega_{i}\lambda_{i}\left(0\right)\right]^{\prime}f_{t}+\left[\varepsilon_{1t}-\sum_{i=2}^{n}\omega_{i}\varepsilon_{it}\right].
\end{align*}
In the case that the weights $\omega_{i}$ are chosen such that
$
\lambda_{1}\left(0\right)\approx\sum_{i=2}^{n}\omega_{i}\lambda_{i}\left(0\right),
$
we have
\[
\tau_{1t}^{synth}-\tau_{1t}^{factor}\approx\varepsilon_{1t}-\sum_{i=2}^{n}\omega_{i}\varepsilon_{it},\quad t>T_{0}.
\]

If we expect that the factors already capture most of the cross-sectional
dependence, then the residual correlation between $\varepsilon_{1t}$ and
$\sum_{i=2}^{n}\omega_i\varepsilon_{it}$ should be weak. In the special case
where $\{\varepsilon_{it}\}$ is i.i.d.\ across $i$ and the weighted control
residual is asymptotically negligible, i.e.,
$\sum_{i=2}^{n}\omega_i\varepsilon_{it}=o_p(1)$, we have
$\varepsilon_{1t}-\sum_{i=2}^{n}\omega_i\varepsilon_{it}
=\varepsilon_{1t}+o_p(1)=O_p(1)$. Therefore,
$\tau_{1t}^{\mathrm{synth}}-\tau_{1t}^{\mathrm{factor}}=O_p(1)$.
Hence, the estimator proposed in this paper and the synthetic-control estimator
are generally not asymptotically equivalent for a fixed treated unit and time
period. Their difference may become small under additional averaging over many
post-treatment periods or many treated units. For a given unit-time treatment
effect, however, the difference need not be negligible.

\citet{Hsiao_Ching_Wan_2012} adopt a similar method as synthetic control.
They start with a factor model for the potential outcome $Y_{it}\left(0\right)$
and propose using a linear function of outcomes for the untreated
units to estimate the counterfactual $Y_{1t}\left(0\right)$, $t>T_{0}$.
Accordingly, the estimator for the individual causal effect is
\[
Y_{1t}-\hat{Y}_{1t}\left(0\right),\;t>T_{0}.
\]
They use a simple regression method to estimate the optimal linear
function. Under their assumptions, \citet{Hsiao_Ching_Wan_2012} show that the
individual causal effect estimator is unbiased, and they also consider
the corresponding post-treatment average estimator.
In comparison, our theory provides large-sample inference for the
unit-time systematic causal effect.
Both our method and the approach of \citet{Hsiao_Ching_Wan_2012} do not require numerical optimization and are easy
to implement. \citet{Li_Bell_2017} extend \citet{Hsiao_Ching_Wan_2012}'s framework
and derive large sample theory for the average treatment effect under
weaker conditions. They further propose using LASSO selection when
the number of control units is large.

Recent research by \cite{Imbens_Viviano_2023} explicitly models $Y_{it}\left(0\right)$
as possessing a factor structure that includes both strong and weak
factors, with the factors serving to control for unobserved confounders.
They extend the Synthetic Difference-in-Differences framework developed
by \citet{Arkhangelsky_2021} and achieve identification under slightly
weaker assumptions, such as limited confoundedness over either units
or time. However, while their model assumes constant causal effect,
our focus is on heterogeneous and time-varying causal effects.

\citet{Li2020} and \citet{Chernozhukov_2025} have developed inferential
theory for estimating the time average causal effect for the treated
unit, $\bar{\tau}_{1}\equiv\frac{1}{T-T_{0}}\sum_{t=T_{0}+1}^{T}\tau_{1t}$,
under the assumption that both $T_{0}$ and $T-T_{0}$ are large.
While \citet{Li2020} assumes a fixed number of untreated units, \citet{Chernozhukov_2025} consider a large number of untreated units. In this paper,
we focus on the unit-specific systematic causal effect in each post-treatment
period ($\tau_{1t}^{*}$, $t>T_{0}$). Our framework assumes that $T_{0}$, $T-T_{0}$,
and the number of untreated units ($n-n_{0}$) are all large.

\section{Identification of the Causal Factor Model}\label{sec:Identification}

To illustrate the main idea, we first consider the case without covariates.
The factor model implies the systematic causal effect
\begin{align*}
\tau_{it}^{*} & =\left[\lambda_{i}\left(1\right)-\lambda_{i}\left(0\right)\right]^{\prime}f_{t},\quad i\leq n_{0},\;t>T_{0}.
\end{align*}
This systematic causal effect $\tau_{it}^{*}$ is identifiable under the
usual factor-model normalization. Although the factors and loadings are
identified only up to a common rotation, the causal contrast
$\left[\lambda_{i}\left(1\right)-\lambda_{i}\left(0\right)\right]^{\prime}f_{t}$
is invariant to that rotation. Thus, once the common factor space is
recovered from the untreated units and the pre- and post-treatment
loadings are expressed in the same normalized factor space,
\(\tau_{it}^{*}\) is identified under large $n-n_{0}$, $T_{0}$, and
$T-T_{0}$.

\begin{enumerate}[label=\alph*)]
\item The common factor space for $f_{t}$ ($1\leq t\leq T$) can be recovered,
up to the usual rotation, using existing methods such as the principal
component analysis of the untreated units
$\left\{ Y_{it}\right\}$, $i>n_{0},\;t=1,\ldots,T$.
\item $\lambda_{i}\left(0\right)$ is
identified in the same factor space by regressing $Y_{it}$ on $f_{t}$
for $t\leq T_{0}$, $i\leq n_{0}$.
\item  $\lambda_{i}\left(1\right)$
is identified in the same factor space by regressing $Y_{it}$ on
$f_{t}$ for $t>T_{0}$, $i\leq n_{0}$.
\item To test
$H_{0}:\tau_{it}^{*}=0$ for $t>T_{0},\;i\leq n_{0}$ it is sufficient to test
 $H_0: \lambda_i(1)=\lambda_i(0)$  for $i\leq n_0.$
Thus, a structural-break test in the treated-unit factor
regression $Y_{it}=\lambda_i'f_t+\varepsilon_{it},
\; i\leq n_0,\;1\leq t\leq T,$ provides a natural test of parameter stability.
\end{enumerate}

The identification strategy works for the cases when $n_{0}$
is either small or large. In particular, it works for the special
case where $n_{0}=1$, similar to the synthetic control setup.

A similar strategy can be applied when covariates are included. Consider
the data generating process
\[
Y_{it}\left(d\right)
=
\lambda_{i}\left(d\right)^{\prime}f_{t}
+
X_{it}^{\prime}\beta\left(d\right)
+
\varepsilon_{it}\left(d\right),
\quad d=0,1,
\]
where changes in \(\beta(d)\) capture systematic treatment-effect
heterogeneity associated with covariates. The systematic causal effect is
\begin{align}
\tau_{it}^{*}
&=
\left\{
\lambda_{i}\left(1\right)^{\prime}f_{t}
+
X_{it}^{\prime}\beta\left(1\right)
\right\}
-
\left\{
\lambda_{i}\left(0\right)^{\prime}f_{t}
+
X_{it}^{\prime}\beta\left(0\right)
\right\}
\nonumber \\
&=
\left[
\lambda_{i}\left(1\right)-\lambda_{i}\left(0\right)
\right]^{\prime}f_{t}
+
X_{it}^{\prime}
\left[
\beta\left(1\right)-\beta\left(0\right)
\right],
\quad i\leq n_{0},\;t>T_{0}.
\label{eq:tauit*}
\end{align}
Under the same factor normalization and appropriate rank conditions, this
object is identifiable. The factors and loadings are identified only up
to a common rotation, but the factor component
\[
\left[
\lambda_{i}\left(1\right)-\lambda_{i}\left(0\right)
\right]^{\prime}f_{t}
\]
is invariant to that rotation. The covariate component
\[
X_{it}^{\prime}
\left[
\beta\left(1\right)-\beta\left(0\right)
\right]
\]
is identified from the corresponding pre- and post-treatment regressions,
provided the usual rank conditions hold.

\begin{enumerate}[label=\alph*)]
\item The common factor space for \(f_{t}\), \(1\leq t\leq T\), can be
recovered, up to the usual rotation, using existing methods such as
panel regression with interactive fixed effects applied to the untreated
units \(\left\{Y_{it},X_{it}\right\}\), \(i>n_{0}\), \(t=1,\ldots,T\).

\item Given the normalized factor estimates, \(\lambda_i(0)\) and
\(\beta(0)\) are identified in the same representation by regressing
\(Y_{it}\) on \(f_t\) and \(X_{it}\) for \(i\leq n_0\), \(t\leq T_0\).

\item Given the same normalized factor estimates, \(\lambda_i(1)\) and
\(\beta(1)\) are identified in the same representation by regressing
\(Y_{it}\) on \(f_t\) and \(X_{it}\) for \(i\leq n_0\), \(t>T_0\).

\item  To test $H_{0}:\tau_{it}^{*}=0$ for $t>T_{0},\;i\leq n_{0}$ it is sufficient to test
 $H_0: \lambda_i(1)=\lambda_i(0)$  for $i\leq n_0$ and $\beta(1)=\beta(0)$.
Thus, a structural-break test in the treated-unit factor
regression $Y_{it}=\lambda_i'f_t+X_{it}\beta +\varepsilon_{it},
\; i\leq n_0,\;1\leq t\leq T$ provides a natural test of parameter stability.
\end{enumerate}

\subsection*{Discussion: Explicit Individual Fixed Effects}
So far, for notational compactness, the individual fixed effects have
not been written separately from the factor component. This is without
loss of generality. If one component of \(f_t\) is identically equal to
one, then the factor term \(\lambda_i'f_t\) implicitly includes an
individual-specific intercept. Equivalently, writing
\[
\widetilde f_t
=
\left(1,f_t'\right)'
\quad\text{and}\quad
\widetilde\lambda_i(d)
=
\left(a_i(d),\lambda_i(d)'\right)',
\]
the term $a_i(d)+\lambda_i(d)'f_t$
can be represented as $
\widetilde\lambda_i(d)'\widetilde f_t$.
Thus, an individual fixed effect can always be incorporated into the
factor structure by including a constant factor.

Although this representation is convenient, it is often more transparent
 to write the individual fixed effects explicitly. We
therefore consider the equivalent specification
\begin{equation} \label{eq:explicit-intercept}
Y_{it}(d)
=
a_i(d)+\lambda_i(d)'f_t
+
X_{it}'\beta(d)+\varepsilon_{it}(d),
\qquad d=0,1.
\end{equation}
Here \(a_i(d)\) is the individual fixed effect under treatment state
\(d\), while $f_t$ denotes the remaining common factors. The
corresponding systematic causal effect is
\begin{equation} \label{eq:tauit-intercept-model}
\tau_{it}^{*}=
a_i(1)-a_i(0)
+[\lambda_i(1)-\lambda_i(0)]'f_t +
X_{it}'\big[\beta(1)-\beta(0)\big].
\end{equation}

This explicit-intercept notation will be useful below, especially when
the treatment is allowed to shift the common factor process by a
constant. In that case, part of the factor shift can be absorbed into a
unit-specific intercept. Writing \(a_i(0)\) and \(a_i(1)\) separately
therefore clarifies the distinction between a treatment-induced change
in the individual fixed effect and a constant shift in the factor
component. Except in such cases, we continue to use the more compact
notation in which the individual fixed effects are not written
separately.

\section{Extension to the Case with Potential Factors}\label{sec:potential-factor}

In some settings the policy intervention may alter not only treated
units' exposure to common shocks, through factor loadings, but also the
common-shock process itself. This arises naturally when the intervention
is large enough to change equilibrium conditions faced by the treated
group, for example through market clearing prices, entry and exit,
institutional constraints, expectations, or other spillovers that are
shared across treated units. In such cases, modeling a common factor
process that is invariant to treatment may be too restrictive, and it is
appropriate to allow treated outcomes to be driven by a different
post-treatment factor process, denoted \(f_t(1)\).

Identification of a flexible post-treatment factor process requires
sufficient treated cross-sectional variation, which motivates our
large-\(n_0\) analysis. When the treated group is small, we consider
restricted perturbations of \(f_t(1)\) that restore tractable
identification while still permitting the intervention to affect both
the factor process and factor loadings.

\subsection{Potential Factors Framework with a Large Number of Treated Units}

Assume \(n_0/n\rightarrow c\in(0,1)\), with \(n_0\), \(n\), the number
of pre-treatment periods, and the number of post-treatment periods all
large. In this case, the post-treatment treated sample
contains enough cross-sectional and time-series variation to identify a
flexible factor process for the treated group.

Before the intervention, the same common factor process drives both
treated and untreated units. After the intervention, the untreated units
continue to be driven by \(f_t(0)\), while the treated units may be
driven by a different factor process \(f_t(1)\). Thus, for treated units
in the post-treatment period, \(f_t(0)\) can be interpreted as the
counterfactual factor process that would have prevailed without the
intervention, whereas \(f_t(1)\) is the realized factor process under
treatment.

The potential outcome model is
\[
Y_{it}(d)=\lambda_i(d)'f_t(d)+\varepsilon_{it}(d),\qquad d=0,1.
\]
Observed outcomes satisfy $Y_{it}=Y_{it}(0)$ for controls $(i>n_0)$ and,
for treated units $(i\le n_0)$, $Y_{it}=Y_{it}(0)$ before treatment and
$Y_{it}=Y_{it}(1)$ after treatment.
The systematic causal effect is
\begin{align}
\tau_{it}^{*} & =\lambda_{i}\left(1\right)^{\prime}f_{t}\left(1\right)-\lambda_{i}\left(0\right)^{\prime}f_{t}\left(0\right),\;\;t>T_{0},\;i\leq n_{0}.\label{eq:causaleffect-groupfactor}
\end{align}
This representation allows us to evaluate the source of the causal
effects due to structural breaks in both factor loadings and factors.

The systematic causal effect \(\tau_{it}^{*}\) is identifiable because the
post-treatment treated common component \(\lambda_i(1)'f_t(1)\) and the
counterfactual untreated common component \(\lambda_i(0)'f_t(0)\) are
identified.  A simple identification
strategy is given as follows.

\begin{enumerate}[label=\alph*)]
\item The factors $f_{t}\left(0\right)$ ($1\leq t\leq T$) can be identified
using principal component analysis of the untreated units $\left\{ Y_{it}\right\}$,
$i>n_{0},\;t=1,\ldots,T$.
\item $\lambda_{i}\left(0\right)$, $i\leq n_{0}$, is identified by regressing
$Y_{it}$ on $f_{t}\left(0\right)$ for $t\leq T_{0}$.
\item The counterfactual $\lambda_{i}\left(0\right)^{\prime}f_{t}\left(0\right)$,$\;i\leq n_{0},\;t>T_{0}$,
can then be constructed as the product of the above two.
\item The product $\lambda_{i}\left(1\right)^{\prime}f_{t}\left(1\right)$
is identified by the principal component analysis of $Y_{it}$, $i\leq n_{0},\;t>T_{0}$.
\end{enumerate}

The estimator of the systematic causal effect is given by
\begin{align}
\hat{\tau}_{it}^{*} & =\hat{\lambda}_{i}\left(1\right)^{\prime}\hat{f}_{t}\left(1\right)-\hat{\lambda}_{i}\left(0\right)^{\prime}\hat{f}_{t}\left(0\right),\;\;t>T_{0},\;i\leq n_{0}.\label{eq:estimator-groupfactor}
\end{align}

The above procedure can be extended to include regressors. With
regressors, the systematic causal effect becomes
\[
\tau_{it}^{*}
=\lambda_i(1)'f_t(1)-\lambda_i(0)'f_t(0)
+X_{it}'[\beta(1)-\beta(0)],
\qquad t>T_0,\ i\le n_0 .
\]
The corresponding estimator replaces the unknown quantities by their
sample analogs.
We show that $\hat{\tau}_{it}^{*}$ is a consistent estimator of the
systematic causal effect $\tau_{it}^{*}$ and derive its asymptotic
standard error in Proposition~\ref{prop:prop-2} below.
\subsection{Potential Factors Framework with a Small Number of Treated Units}
The identification strategy above relies on estimating $f_t(1)$ from the
post-treatment treated sample
$\{Y_{it}: i\le n_0,\ t>T_0\}$. When $n_0$ is small, and especially
when $n_0=1$, principal component analysis using treated post-treatment
outcomes is not feasible. A parsimonious alternative is to restrict the
effect of the intervention on the factor process.

Suppose that, for treated units in the post-treatment period,
\begin{equation}
\label{eq:ft_shift_constant_pf}
f_t(1)=f_t(0)+\Delta,\qquad t>T_0,
\end{equation}
where $\Delta$ is an $r\times 1$ constant vector. If the outcome model
contains a unit-specific intercept, this constant factor shift is
absorbed into the post-treatment intercept.\footnote{More generally, one may allow
\[
f_t(1)=A f_t(0)+\Delta,
\]
where $A$ is a constant $r\times r$ matrix. This generalization does not
affect the results below, because
\[
\lambda_i(1)'f_t(1)
=
\lambda_i(1)'A f_t(0)+\lambda_i(1)'\Delta
=
\tilde\lambda_i(1)'f_t(0)+\lambda_i(1)'\Delta,
\qquad
\tilde\lambda_i(1):=A'\lambda_i(1).
\]
The linear transformation $A$ is therefore absorbed into the
post-treatment factor loadings, while the constant shift
$\lambda_i(1)'\Delta$ is absorbed into the post-treatment intercept. There is no need to separately identify $A$ and $\Delta$.}
Indeed,
\[
a_i(1)+\lambda_i(1)'f_t(1)
=
a_i(1)+\lambda_i(1)'f_t(0)+\lambda_i(1)'\Delta
=
\tilde a_i(1)+\lambda_i(1)'f_t(0),
\]
where $a_i(1)$ is the post-treatment intercept, see (\ref{eq:explicit-intercept}), and
\[
\tilde a_i(1):=a_i(1)+\lambda_i(1)'\Delta.
\]
Thus, $\Delta$ itself need not be separately identified. Its contribution
to the systematic causal effect is summarized by the total intercept
shift
\[
\kappa_i
:=
\tilde a_i(1)-a_i(0)
=
a_i(1)-a_i(0)+\lambda_i(1)'\Delta .
\]
The corresponding systematic causal effect is
\begin{equation}
\label{eq:tau_star_constshift_pf}
\tau_{it}^{*,\Delta}
=
\kappa_i+
\big[\lambda_i(1)-\lambda_i(0)\big]'f_t(0)+ X_{it}'\big[\beta(1)-\beta(0)\big],
\quad t>T_0,\ i\le n_0.
\end{equation}
This has the same form as the baseline model once unit-specific
intercepts are included.  Hence,
the constant-shift specification is covered by the baseline estimation
and inference results. Proposition~\ref{prop:prop-1} below therefore applies after
augmenting the unit-level regression with a constant term.

To illustrate the estimation of $\tau_{it}^{*,\Delta}$, consider the
case with no additional covariates but with unit-specific intercepts.
Estimate $f_t(0)$ from the control units and denote the resulting
estimate by $\hat f_t(0)$. For each treated unit $i\le n_0$, estimate
$(a_i(0),\lambda_i(0))$ by regressing $Y_{it}$ on
$(1,\hat f_t(0))$ over the pre-treatment period $t\leq T_0$. Similarly,
estimate $(\tilde a_i(1),\lambda_i(1))$ by regressing $Y_{it}$ on
$(1,\hat f_t(0))$ over the post-treatment period $t>T_0$.

Let
\[
\hat\kappa_i:=\hat {\tilde a}_i(1)-\hat a_i(0),
\]
where $\hat \kappa_i$ captures the total intercept change, including both the
structural intercept change and the constant factor shift.\footnote{For notational simplicity, we may simply denote
$\tilde a_i(1)$ by $a_i(1)$ since $\Delta$ is not separately identifiable unless the number of treated units is large.}
 Define
\begin{equation}
\label{eq:tau_hat_constshift_pf}
\hat\tau_{it}^{*,\Delta}
=\hat\kappa_i +
\big[\hat\lambda_i(1)-\hat\lambda_i(0)\big]'\hat f_t(0),
\qquad t>T_0,\ i\le n_0 .
\end{equation}
This estimator is the intercept-augmented version of the baseline
estimator in Proposition \ref{prop:prop-1}. With additional covariates, the same
expression includes $X_{it}'[\hat\beta(1)-\hat\beta(0)]$, provided
that the common covariate coefficients are estimated together with
unit-specific intercepts.

\subsection{Relationship with Synthetic Interventions}

Extending the baseline synthetic control model, \citet{Agarwal_2024}
introduce a synthetic interventions framework designed to address
multiple treatments. Their model for potential outcomes under treatment
status \(d\in\{0,1,2,\ldots,D\}\), with \(D\geq 1\), employs a low-rank
tensor factor model represented as
\[
Y_{it}(d)
=
\sum_{l=1}^{r}u_{tl}v_{il}\lambda_{dl}
+
\varepsilon_{it}(d).
\]
While their focus is on estimating average treatment effects, they impose
a restriction by keeping the factor loadings \(v_{il}\) unaffected by
treatment status. Their model incorporates two types of factors:
time-varying factors \(u_{tl}\), which are not affected by treatment
status, and treatment-specific factors \(\lambda_{dl}\), which are
time-invariant.

In contrast, our potential factor model allows for greater flexibility
by permitting both the common factors and the factor loadings to vary
across treatment status. Specifically, if we consider the \(l\)-th
potential factor \(f_{lt}(d)\) in our setup, we can express the synthetic
interventions restriction as
$
f_{lt}(d)=u_{tl}\lambda_{dl}.
$
Under this restriction, their common-factor strategy can be interpreted
as a special case of our model when \(D=1\).

\section{Estimation and Inference}\label{sec:Estimation-and-Inference}

\subsection{The Intervention Does Not Affect the Factors}

Consider (\ref{eq:estimator-main}) as the estimator for the unit-specific
systematic causal effect in (\ref{eq:sys-causaleffect}):
\[
\hat{\tau}_{it}^{*}=\left[\hat{\lambda}_{i}\left(1\right)-\hat{\lambda}_{i}\left(0\right)\right]^{\prime}\hat{f}_{t}+X_{it}^{\prime}\left[\hat{\beta}\left(1\right)-\hat{\beta}\left(0\right)\right],\;t>T_{0},\;i\leq n_{0}.
\]
The factor estimate $\hat{f}_{t}$ is obtained using panel regression
with interactive fixed effects (see, for example, \citet{Bai_2009}) using
the control units. Then $\hat{\lambda}_{i}\left(0\right)$ and $\hat{\beta}\left(0\right)$
are obtained from a regression of $Y_{it}$ on $\hat{f}_{t}$ and
$X_{it}$ for $t\leq T_{0}$ and $i\leq n_{0}$. $\hat{\lambda}_{i}\left(1\right)$
and $\hat{\beta}\left(1\right)$ are obtained from another regression
of $Y_{it}$ on $\hat{f}_{t}$ and $X_{it}$ for $t>T_{0}$ and
$i\leq n_{0}$. Here we assume $\beta(0)$ and $\beta(1)$ are common across $i$, and only the factor loadings vary with $i$.
So $\hat \lambda_i(0)$ and $\hat \beta(0)$ are estimated by interacting individual dummies with $\hat f_t$ in a pooled regression to impose
common slope coefficients for the covariates.  $\hat \lambda_i(1)$ and $\hat \beta(1)$ are obtained similarly. If the slope coefficients are heterogeneous, simple time series regressions can be applied for  each
$i$ to obtain $\hat \lambda_i(0)$ and $\hat \beta_i(0)$, likewise for $\hat \lambda_i(1)$ and  $\hat \beta_i(1)$.
The large sample theory is summarized in Proposition \ref{prop:prop-1}.

\begin{proposition}
\label{prop:prop-1}
Fix a treated unit $i\le n_0$ and a post-treatment date $t\in\mathcal T_1$.
Suppose that Assumptions~\ref{ass:factor-moments}--\ref{ass:control-score-clt} in the appendix hold
 and that one of the following
three conditions is satisfied: (a) $n_0$ is fixed; (b) $n_0\to\infty$ and
$f_t\neq0$; or (c) $n_0\to\infty$, $f_t=0$, and
\eqref{eq:pooled-slope-rate} holds. Then, as $T_0,T_1$, and $n_1\to \infty$ with
$\sqrt{n_1}/T\to 0$ and $\sqrt{T}/n_1\to 0$, $\widehat{\tau}_{it}^{*}$ is a consistent
estimator of $\tau_{it}^{*}$ for $i\leq n_{0}$ and $t>T_{0}$
and
\[
\widehat V_{it}^{-1/2}\left(\widehat{\tau}_{it}^{*}-\tau_{it}^{*}\right)\Rightarrow N(0,1),
\]
where $\widehat V_{it}$ is an estimator for the variance of $\widehat{\tau}_{it}^{*}$.
\end{proposition}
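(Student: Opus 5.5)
Write $n_1=n-n_0$ for the number of control units, $T_1=T-T_0$, and $\mathcal T_1=\{T_0+1,\dots,T\}$. Let $H$ be the rotation matrix of \citet{Bai_2009}, so that $\hat f_s$ estimates $H'f_s$. The key observation is that $\hat\tau^*_{it}$ is rotation invariant: we may replace $\lambda_i(d)$ by $H^{-1}\lambda_i(d)$ and $f_t$ by $H'f_t$ without changing $\tau^*_{it}$.

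I would start from the decomposition
\[
\hat\tau^*_{it}-\tau^*_{it}=\big[\hat\lambda_i(1)-H^{-1}\lambda_i(1)\big]'\hat f_t-\big[\hat\lambda_i(0)-H^{-1}\lambda_i(0)\big]'\hat f_t+\big[\lambda_i(1)-\lambda_i(0)\big]'H^{-1\prime}(\hat f_t-H'f_t)+X_{it}'\big[(\hat\beta(1)-\beta(1))-(\hat\beta(0)-\beta(0))\big],
\]
and treat its three ingredients in turn: the loading errors, the factor error, and the slope error.

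\textbf{Loading estimates.} For each $d$, $\hat\lambda_i(d)$ comes from a time-series regression of $Y_{is}$ on $\hat f_s$ over $T_0$ (or $T_1$) periods. Its error has two parts. The first is an oracle part,
\[
\big(T_d^{-1}\textstyle\sum_s H'f_sf_s'H\big)^{-1}T_d^{-1/2}\textstyle\sum_s H'f_s\varepsilon_{is}(d),
\]
which is $O_p(T_d^{-1/2})$ and asymptotically normal by the factor-moment and score CLT assumptions. The second is a generated-regressor part involving $\hat f_s-H'f_s$. Using the Bai (2003)/Bai (2009) expansions, I would bound this part by $O_p(\delta_{n_1T}^{-2})$, where $\delta^2=\min(n_1,T)$. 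The rate conditions $\sqrt T/n_1\to0$ then make it negligible relative to $T^{-1/2}$.

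\textbf{Factor estimate at the fixed date $t$.} By \citet{Bai_2009}, $\hat f_t-H'f_t=O_p(n_1^{-1/2})$, with an asymptotically normal leading term $(\Lambda'\Lambda/n_1)^{-1}n_1^{-1}\sum_{j>n_0}\lambda_j\varepsilon_{jt}$ given by the control-score CLT. The condition $\sqrt{n_1}/T\to0$ makes the bias terms from estimating $\beta$ and $F$ negligible.

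\textbf{Slopes.} The pooled $\hat\beta(d)$ converges at rate $\sqrt{n_0T_d}$, or $\sqrt{T_d}$ when $n_0$ is fixed.

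\textbf{Combining the terms.} The components are asymptotically independent because they draw on different data: treated errors in the pre- and post-periods, and control errors at date $t$. In case (a), the loading and slope terms are of order $T^{-1/2}$ and the factor term is of order $n_1^{-1/2}$. I would therefore normalize by $V_{it}=A_{it}/T_0+B_{it}/T_1+C_{it}/n_1$, whichever term dominates. This convex combination of asymptotically normal pieces yields $V_{it}^{-1/2}(\hat\tau^*_{it}-\tau^*_{it})\Rightarrow N(0,1)$, provided $V_{it}$ is bounded below appropriately; self-normalization handles unknown relative rates.

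In case (b), $f_t\ne0$ keeps the loading-variance contribution nondegenerate even as $n_0$ grows, so the same argument applies. In case (c), $f_t=0$ kills the leading loading term, since $f_t'(\cdot)=0$, and the slope term could dominate at a faster rate. Condition \eqref{eq:pooled-slope-rate} is presumably exactly what guarantees that cross terms like $(\hat\lambda_i-\lambda_i)'(\hat f_t-H'f_t)$ remain $o_p(V_{it}^{1/2})$; I would verify this explicitly.

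\textbf{Variance estimator.} Consistency of $\hat V_{it}$, in the sense $\hat V_{it}/V_{it}\to_p1$, would follow from consistency of sample moment and HAC/heteroskedasticity-robust estimators built from residuals. The factor part $C_{it}$ uses the cross-sectional variance of $\hat\lambda_j\hat\varepsilon_{jt}$. Consistency of $\hat\tau^*_{it}$ itself is immediate from the rates.

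\textbf{Main obstacle.} I expect the hardest step to be controlling the generated-regressor term $T_d^{-1}\sum_s(\hat f_s-H'f_s)\varepsilon_{is}(d)$. Its summands are correlated with the treated unit's errors only weakly, but the bound must be uniform enough to give $o_p(T^{-1/2})$ under only $\sqrt T/n_1\to0$. I would first try the term-by-term expansion of $\hat f_s-H'f_s$ from Bai (2003, Lemma B.1), exploiting that treated errors are independent of, or weakly dependent on, the control errors.
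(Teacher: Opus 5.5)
Your outline matches the paper's architecture: blockwise treated regressions expanded as an oracle term plus generated-regressor remainders, a factor-error term driven by the control score, a cross term, asymptotic independence across blocks and between treated and control errors, and self-normalization. With the corrections noted at the end, it handles cases (a)--(b). The gap is case (c).

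You assert that the pooled $\hat\beta(d)$ is $\sqrt{n_0T_d}$-consistent. Write $\Delta_d$ for the rotated factor error on $\cT_d$. With estimated factors,
\[
\hat\beta(d)-\beta(d)=\hat S_{xx,d}^{-1}\frac{1}{n_0T_d}\sum_{j=1}^{n_0}X_{j,d}'M_{\hat F_d}\varepsilon_{j,d}-\hat S_{xx,d}^{-1}\frac{1}{n_0T_d}\sum_{j=1}^{n_0}X_{j,d}'M_{\hat F_d}\Delta_d\lambda_j(d).
\]
The second term, plus the error from replacing $M_{\hat F_d}$ by $M_{F_d}$ in the first, is of order $b_{nT,d}=T^{-1}+(n_1T_d)^{-1/2}+n_1^{-1}$. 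Because $\Delta_d$ is common to all treated units, pooling shrinks the score by $\sqrt{n_0}$ but does not shrink this remainder. In (a)--(b) this is harmless, since $b_{nT,d}=o(T_d^{-1/2})$.

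In (c), however, $f_t=0$ removes the loading score. When in addition $\Delta\lambda_i=0$, the entire variance is of order $(n_0T_0)^{-1}+(n_0T_1)^{-1}$, so you need $\sqrt{n_0T_d}\,b_{nT,d}\to0$. That is exactly \eqref{eq:pooled-slope-rate}. Checking only cross terms such as $\{\hat\lambda_i(d)-H^{-1}\lambda_i(d)\}'\hat f_t=O_p\{T_d^{-1/2}(n_1^{-1/2}+T^{-1})\}$, as you propose, yields only $n_0/n_1\to0$ and $n_0/T^2\to0$. It misses $\sqrt{n_0T_d}/T\to0$ and $\sqrt{n_0T_d}/n_1\to0$. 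Without these, the slope remainder is not negligible at the $(n_0T_d)^{-1/2}$ scale, and the self-normalized statistic need not be standard normal.

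The same degeneracy affects your variance step. Generic consistency of sample moments does not give $\hat V_{it}/V_{it}\overset{p}{\to}1$ when population components vanish. With $f_t=0$, the feasible loading piece uses $\hat f_t\neq0$ and is $O_p\{T_d^{-1}(n_1^{-1}+T^{-2})\}$. With $\Delta\lambda_i=0$, the feasible factor piece is $O_p\{n_1^{-1}(T_0^{-1}+T_1^{-1})\}$ even though $V^f_{it}=0$. You must show both are $o_p(V_{it})$, which in case (c) again requires $n_0/n_1\to0$ and $n_0/T^2\to0$.

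Three smaller corrections follow.

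First, your oracle term for $\hat\lambda_i(d)$ omits $-(F_d'F_d)^{-1}F_d'X_{i,d}\{\hat\beta(d)-\beta(d)\}$. This term is $O_p(T_d^{-1/2})$ when $n_0$ is fixed. It combines with $X_{it}'\{\hat\beta(d)-\beta(d)\}$ into $h_{it,d}'\{\hat\beta(d)-\beta(d)\}$, where $h_{it,d}=X_{it}-X_{i,d}'F_d(F_d'F_d)^{-1}f_t$. It must enter $V_{it}$, and within a block it is correlated with the loading score.

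Second, over a sub-block the generated-regressor remainder is $O_p(b_{nT,d})$, not $O_p\{\min(n_1,T)^{-1}\}$, because of the $(n_1T_d)^{-1/2}$ term. It is negligible relative to $T_d^{-1/2}$, which is the right comparison since $T_d/T$ need not converge to a positive constant. It is not necessarily negligible relative to $T^{-1/2}$.

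Third, the step you single out as the main obstacle is routine. Conditional on the control-unit data, independence and weak serial correlation of $\varepsilon_{i,d}$ give
\[
T_d^{-1}\sum_{s\in\cT_d}(\hat f_s-H'f_s)\varepsilon_{is}(d)=O_p(T_d^{-1}\|\Delta_d\|_F)=O_p\{(n_1T_d)^{-1/2}+(T\sqrt{T_d})^{-1}\},
\]
which is the paper's bound \eqref{eq:block-zero-mean}.
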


Thus asymptotic normality holds whether the number of treated units $n_0$ is
fixed or diverges. When $n_0\to\infty$ and $f_t\neq0$, the result imposes no
additional relative growth restriction between $n_0$ and $n_1$ beyond the
maintained assumptions. When $n_0\to\infty$, the knife-edge case $f_t=0$
requires the additional rate condition \eqref{eq:pooled-slope-rate}; see
Remark~\ref{rem:pooled-slope-remainder} below. The result does not require
any proportionality restriction between $T_d$ and $T$; in particular, it does
not require $T_d/T\to c_d\in(0,1)$.

The variance estimator $\widehat V_{it}$ is given in the Appendix. The proof
of Proposition~\ref{prop:prop-1} is provided in the Online Appendix.

As discussed previously, the same result continues to hold when the outcome model includes
unit-specific intercepts. In this case, the common factors can be estimated
from the control units after removing unit-specific time averages from the
outcomes and covariates, as in Section~8 of \citet{Bai_2009}. The normalization
that the factors have zero time mean separates the unit-specific intercepts
from the common factors. After estimating the factors from the demeaned
control-unit panel, we recover the treated-unit coefficients by separately
regressing the treated-unit outcome on a constant, the estimated factors,
and the covariates over the pre-intervention and post-intervention periods.
The feasible treated effect is then defined as the sample analogue of
\eqref{eq:tauit-intercept-model}, with the unknown factors and coefficients
replaced by their estimates. With this modification, Proposition~\ref{prop:prop-1}
remains valid.

The intercept-augmented formulation also covers the constant-shift
specification in \eqref{eq:ft_shift_constant_pf}. When unit-specific
intercepts are included, a time-invariant shift in the factor process is
absorbed into the post-treatment intercept, and Proposition~\ref{prop:prop-1}
continues to apply.

\subsection{The Intervention Affects the Factors for the Treated Group, Large $n_0$}

Consider (\ref{eq:estimator-groupfactor}) as the estimator for the
unit-specific systematic causal effect (\ref{eq:causaleffect-groupfactor}):
\[
\hat{\tau}_{it}^{*}=\hat{\lambda}_{i}\left(1\right)^{\prime}\hat{f}_{t}\left(1\right)-\hat{\lambda}_{i}\left(0\right)^{\prime}\hat{f}_{t}\left(0\right), \qquad \;t>T_{0},\;i\leq n_{0}.
\]
The factor estimate $\hat{f}_{t}\left(0\right)$ is obtained using
the first $r$ leading principal components based on the control units.
Then $\hat{\lambda}_{i}\left(0\right)$ is obtained from a regression
of $Y_{it}$ on $\hat{f}_{t}(0)$ for $t\leq T_{0}$ and $i\leq n_{0}$.
The product $\hat{\lambda}_{i}\left(1\right)^{\prime}\hat{f}_{t}\left(1\right)$
is obtained as the common component estimator from principal component
analysis of $Y_{it}$ for $t>T_{0}$ and $i\leq n_{0}$. We have
the following proposition.

\begin{proposition} \label{prop:prop-2} (large $n_0$)
Under Assumptions~\ref{ass:factor-moments}--\ref{ass:control-score-clt} and  \ref{ass:potential-factors}
 in the appendix, and  $n_{0}/n\rightarrow c\in\left(0,1\right)$,
$T_{0}/T\rightarrow b\in\left(0,1\right)$, $\sqrt{n}/T\rightarrow0$,
and $\sqrt{T}/n\rightarrow0$, then $\widehat{\tau}_{it}^{*}$ is a consistent
estimator of $\tau_{it}^{*}$ for $i\leq n_{0}$ and $t>T_{0}$
and
\[
\widehat V_{it}^{-1/2}\left(\widehat{\tau}_{it}^{*}-\tau_{it}^{*}\right)\Rightarrow N(0,1),
\]
where $\widehat V_{it}$ is an estimator for the variance of $\widehat{\tau}_{it}^{*}$.
\end{proposition}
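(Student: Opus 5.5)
The estimator splits into two pieces, and I will study each separately:
\[
\hat\tau_{it}^*-\tau_{it}^*=\big[\hat\lambda_i(1)'\hat f_t(1)-\lambda_i(1)'f_t(1)\big]-\big[\hat\lambda_i(0)'\hat f_t(0)-\lambda_i(0)'f_t(0)\big]\equiv A_{it}-B_{it}.
\]
The term \(A_{it}\) is the estimation error of the common component from principal components on the post-treatment treated block \(\{Y_{is}: i\le n_0,\ s>T_0\}\). This block has \(n_0\asymp n\) units and \(T_1=T-T_0\asymp T\) periods, and under Assumption~\ref{ass:potential-factors} it is a standard approximate factor model driven by \(f_s(1)\). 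So I will invoke the common-component limit theory of \citet{Bai_2003}. With \(\sqrt{n_0}/T_1\to0\) and \(\sqrt{T_1}/n_0\to0\) (implied by \(\sqrt n/T\to0\), \(\sqrt T/n\to0\) and the proportionality conditions), it gives
\[
A_{it}=\lambda_i(1)'H_1^{-1\prime}\frac{1}{n_0}\sum_{j\le n_0}H_1'\lambda_j(1)\varepsilon_{jt}(1)\,\text{(normalized)}+f_t(1)'\frac{1}{T_1}\sum_{s>T_0}f_s(1)\varepsilon_{is}(1)\,\text{(normalized)}+o_p\big(\delta_{n_0T_1}^{-1}\big).
\]
The first term is \(O_p(n_0^{-1/2})\) and the second is \(O_p(T_1^{-1/2})\). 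Both are asymptotically normal, with rotation-invariant variances \(\lambda_i(1)'\Sigma_{\Lambda_1}^{-1}\Gamma_{t}^{(1)}\Sigma_{\Lambda_1}^{-1}\lambda_i(1)/n_0\) and \(f_t(1)'\Sigma_{F_1}^{-1}\Phi_i^{(1)}\Sigma_{F_1}^{-1}f_t(1)/T_1\).

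For \(B_{it}\), the factors \(\hat f_s(0)\) come from principal components on the controls, \(i>n_0\), over all \(T\) periods. The loading \(\hat\lambda_i(0)\) comes from a time-series regression of the treated unit's pre-period outcomes on \(\hat f_s(0)\), \(s\le T_0\). I will reuse the expansion already established in the proof of Proposition~\ref{prop:prop-1} (case (a), single-unit regression, no covariates). That expansion gives
\[
B_{it}=\lambda_i(0)'H_0^{-1\prime}(\hat f_t(0)-H_0'f_t(0))+f_t(0)'\Big(\tfrac{1}{T_0}\textstyle\sum_{s\le T_0}f_s(0)f_s(0)'\Big)^{-1}\tfrac{1}{T_0}\textstyle\sum_{s\le T_0}f_s(0)\varepsilon_{is}(0)+o_p\big(\delta^{-1}\big).
\]
The factor error in the first term is \(O_p((n-n_0)^{-1/2})\) and is driven by the control errors \(\varepsilon_{jt}(0)\), \(j>n_0\). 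The generated-regressor bias in \(\hat\lambda_i(0)\) is \(O_p(\delta_{n-n_0,T}^{-2})\), which is negligible because of the rate conditions. Here \(\lambda_i(0)\) of a treated unit plays the role of an extra loading, so the required moment conditions are those of Assumptions~\ref{ass:factor-moments}--\ref{ass:control-score-clt}.

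Next I combine the pieces. The four leading terms are built from disjoint blocks of errors: the treated post-period cross-section at \(t\); unit \(i\)'s post-period series; the control cross-section at \(t\); and unit \(i\)'s pre-period series. Under the independence/weak-dependence conditions between \(\varepsilon(0)\) and \(\varepsilon(1)\) and across the groups, assumed in Assumption~\ref{ass:potential-factors}, the terms are asymptotically independent. Their sum, scaled by \(V_{it}^{-1/2}\), where \(V_{it}\) is the sum of the four variances, is therefore \(N(0,1)\). For this I will use a joint CLT that stacks the four scores. Consistency follows because every term is \(o_p(1)\). \(\hat V_{it}\) is formed by plugging in the estimated loadings, factors and residuals, with HAC or cross-sectional-average estimators as in \citet{Bai_2003}. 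Its consistency, \(\hat V_{it}/V_{it}\to_p1\), gives the studentized result by Slutsky's theorem. Since the rates differ (\(n^{-1/2}\) versus \(T^{-1/2}\)), I will normalize by \(V_{it}\) itself rather than by a single rate; this handles every relative rate of \(n/T\).

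The main obstacle is the dependence structure in \(A_{it}\) and \(B_{it}\), because the treated unit \(i\) appears in both. First, \(\varepsilon_{it}(1)\) enters the PCA that produces \(\hat f_t(1)\). This is the standard own-unit contribution, of order \(n_0^{-1}\), and is negligible. Second, the rotation \(H_1\) must cancel in the product \(\hat\lambda_i(1)'\hat f_t(1)\); this holds, which is why only the common component is targeted. Third, I must verify that no cross-covariance term survives between \(\varepsilon_{is}(0)\) for \(s\le T_0\) and \(\varepsilon_{is}(1)\) for \(s>T_0\) when there is serial dependence within unit \(i\). I will handle the third point through a mixing condition across the break date: the covariance of the two time averages is \(O(T^{-1})\) times a summable boundary sum, hence \(o(T^{-1})\) relative to the variance.
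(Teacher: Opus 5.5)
Your proposal is correct and follows essentially the same route as the paper. The paper also writes $\widehat\tau_{it}^{*}-\tau_{it}^{*}=A+B-C-D+O_p(\delta_{nT}^{-2})$, with $A,B$ from Bai's common-component expansion on the post-treatment treated block and $C,D$ from the pre-period loading regression and the control-unit factor expansion, then applies the joint CLT of Assumption~\ref{ass:potential-factors} with mutually uncorrelated limits, normalizes by the summed variance $V_{it}$, and uses a plug-in variance estimator; the only differences are minor (you import the counterfactual-side expansion from the Proposition~\ref{prop:prop-1} lemmas, and you derive the cross-break-date uncorrelatedness from a mixing condition, whereas the paper takes it from its maintained cross-block independence and weak-dependence conditions).
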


The variance estimator $\hat V_{it}$ and the proof are given in the online appendix.

\subsection{The Intervention Affects the Factors for the Treated Group, Small $n_0$}

When the number of treated units, $n_0$, is small, a fully flexible
post-treatment factor process $f_t(1)$ cannot be reliably estimated from
the treated post-treatment sample. Under the constant-shift specification
\eqref{eq:ft_shift_constant_pf}, however, the term
$\lambda_i(1)'\Delta$ is absorbed into the post-treatment unit-specific
intercept. Thus the systematic causal effect can be written as
\begin{equation}
\label{eq:tau_overall_infer}
\tau_{it}^{*,\Delta}
=\kappa_i+
\big[\lambda_i(1)-\lambda_i(0)\big]'f_t(0)
+X_{it}'\big[\beta(1)-\beta(0)\big],
\qquad t>T_0,\ i\le n_0 .
\end{equation}

Estimate $\hat f_t(0)$ using the control units as in Proposition~\ref{prop:prop-1}. For each
treated unit $i\le n_0$, estimate the pre- and post-treatment regressions
of $Y_{it}$ on $(1,\hat f_t(0),X_{it})$, allowing the intercept and factor
loadings to vary by unit. Let
\[
\hat\kappa_i=\hat a_i(1)-\hat a_i(0),\qquad
\hat\alpha_i=\hat\lambda_i(1)-\hat\lambda_i(0).
\]
The estimator is
\begin{equation}
\label{eq:tau_hat_overall_infer}
\hat\tau_{it}^{*,\Delta}
=
\hat\kappa_i+\hat\alpha_i'\hat f_t(0)
+X_{it}'\big[\hat\beta(1)-\hat\beta(0)\big].
\end{equation}
This is the intercept-augmented version of the estimator in
Proposition~\ref{prop:prop-1}. Therefore no separate inferential theory is needed for
the constant-shift case. Proposition \ref{prop:prop-1} is applicable to this case.

\section{Monte Carlo Simulations}\label{sec:mcmc}

In this section, we conduct Monte Carlo simulations to evaluate the
finite sample performance of Propositions 1 and 2. We begin with Proposition~\ref{prop:prop-1},
 focusing on the case with a single treated unit ($n_{0}=1$), specifically
the first unit in our sample. Let $n_1=n-n_0$ denote the number of untreated units. For given sample sizes of $n_{1}$,
$T_{0}$, and $T$, we simulate the potential outcomes according to
a factor model with two factors. To simplify the illustration, we
do not include covariates in the simulation. We assume that the idiosyncratic
errors are not indexed by treatment status, so the systematic causal
effect is equivalent to the individual causal effect ($\tau_{it}^{*}=\tau_{it}$).
We then estimate the causal effect for the treated unit using the
estimator in equation (\ref{eq:estimator-main}). We construct the 95\% confidence intervals according to
$\left[\hat{\tau}_{1t}\pm1.96\cdot SE\left(\hat{\tau}_{1t}\right)\right]$,
where the formula for $SE\left(\hat{\tau}_{1t}\right)=\sqrt{\widehat{var}\left(\hat{\tau}_{1t}\right)}$
is provided in the Appendix. Finally, we report the empirical coverage
rates for the true causal effect $\tau_{1t}$ for $t=T_{0}+1,T_{0}+1+m,T$,
under various combinations of $\left(n,T_{0},T\right)$. We define
$m=\left[\frac{T-T_{0}}{2}\right]$ as the nearest integer to $\frac{T-T_{0}}{2}$
so that $T_{0}+1+m$ is positioned in the middle of the treated periods.
The data are simulated according to the following factor model,
\[
Y_{it}\left(d\right)=\lambda_{i}^{\prime}\left(d\right)f_{t}+\varepsilon_{it}, \; d=0,1.
\]
We consider three data generating processes (DGPs):
\begin{itemize}[label=]
\item DGP1. $\left\{ \lambda_{ij}\left(0\right),\lambda_{ij}\left(1\right),\;j=1,2,f_{1t},\varepsilon_{it}\right\} $
are i.i.d.\ $N\left(0,1\right)$.
\item DGP2. $\left\{ \lambda_{ij}\left(0\right),\lambda_{ij}\left(1\right),\;j=1,2,f_{1t}\right\} $
are i.i.d.\ $N\left(1,1\right)$, and $\varepsilon_{it}$ is i.i.d.\ $N\left(0,4\right)$.
\item DGP3. $\left\{ \lambda_{ij}\left(0\right),\lambda_{ij}\left(1\right),\;j=1,2,f_{1t}\right\} $
are i.i.d.\ $N\left(1,1\right)$, and $\varepsilon_{it}$ is i.i.d.\ $\text{Uniform}\left(-2,2\right)$.
\end{itemize}
For DGPs 1-3, the second factor is simulated as $f_{2t}=0.8\cdot f_{2,t-1}+e_{t}$,
where $e_{t}$ is i.i.d.\ $N\left(0,1\right)$.

We set the number of Monte Carlo repetitions to 5000. In Table \ref{tab:cover-prop1},
we report the coverage rates of the 95\% confidence interval for the
three DGPs. The numbers in parentheses represent coverage rates for
$\tau_{1,T_{0}+1},\tau_{1,T_{0}+1+m},\tau_{1,T}$ respectively. We set
the treatment date $T_{0}+1=30$ when $T=50$, and $T_{0}+1=50$ when
$T=100$. For all DGPs and sample sizes, the coverage rates are reasonably close to the
nominal rate, with some undercoverage in the smaller-sample designs.

\begin{table}[!tp]
\begin{centering}
\begin{tabular}{llccc}
\hline
&  & $n_{1}=20$ & $n_{1}=40$ & $n_{1}=100$\tabularnewline
\hline
DGP1 & T=50 & $\left(0.931,0.917,0.919\right)$ & $\left(0.934,0.936,0.939\right)$ & $\left(0.940,0.935,0.941\right)$\tabularnewline
& T=100 & $\left(0.912,0.915,0.916\right)$ & $\left(0.938,0.935,0.932\right)$ & $\left(0.940,0.942,0.948\right)$\tabularnewline
\hline
DGP2 & T=50 & $\left(0.925,0.926,0.924\right)$ & $\left(0.940,0.939,0.937\right)$ & $\left(0.938,0.940,0.942\right)$\tabularnewline
& T=100 & $\left(0.922,0.911,0.911\right)$ & $\left(0.939,0.942,0.937\right)$ & $\left(0.949,0.941,0.946\right)$\tabularnewline
\hline
DGP3 & T=50 & $\left(0.927,0.921,0.926\right)$ & $\left(0.944,0.943,0.940\right)$ & $\left(0.934,0.935,0.935\right)$\tabularnewline
& T=100 & $\left(0.915,0.917,0.918\right)$ & $\left(0.936,0.942,0.933\right)$ & $\left(0.941,0.949,0.943\right)$\tabularnewline
\hline
\end{tabular}
\end{centering}
\caption{Coverage rates of the 95\% confidence interval under Proposition~\ref{prop:prop-1}  with $n_0=1$}\label{tab:cover-prop1}

\end{table}

Next, we apply the same procedure to examine Proposition \ref{prop:prop-2}, varying
the group sizes for treated units ($n_{0}$) and untreated units ($n_{1}$).
For given sample sizes of $n_{0}$, $n_{1}$, $T_{0}$, $T$, we simulate
potential outcomes according to the factor model with two factors.
Again, we assume that the idiosyncratic errors are not indexed by
treatment status, so the systematic causal effect is the same as the
individual causal effect ($\tau_{it}^{*}=\tau_{it}$). We report estimates
only for the first treated unit ($i=1$). The 95\% confidence intervals
are constructed as $\left[\hat{\tau}_{1t}\pm1.96\cdot SE\left(\hat{\tau}_{1t}\right)\right]$,
where the formula for $SE\left(\hat{\tau}_{1t}\right)=\sqrt{\widehat{var}\left(\hat{\tau}_{1t}\right)}$
is provided in the Appendix. We then report the empirical coverage rates
for the true causal effect $\tau_{1t}$ for $t=T_{0}+1,T_{0}+1+m,T$,
under different combinations of $\left(n_{0},n_{1},T_{0},T\right)$.
We specify $m=\left[\frac{T-T_{0}}{2}\right]$ as the nearest integer
to $\frac{T-T_{0}}{2}$. The data are simulated according to the following
factor model with potential factors,
\[
Y_{it}\left(d\right)=\lambda_{i}^{\prime}\left(d\right)f_{t}\left(d\right)+\varepsilon_{it},\;d=0,1.
\]
We consider three additional DGPs:
\begin{itemize}[label=]
\item DGP4. $\left\{ \lambda_{ij}\left(0\right),\lambda_{ij}\left(1\right),\;j=1,2,f_{1t}\left(0\right),f_{1t}\left(1\right),\varepsilon_{it}\right\} $
are i.i.d.\ $N\left(0,1\right)$.
\item DGP5. $\left\{ \lambda_{ij}\left(0\right),\lambda_{ij}\left(1\right),\;j=1,2,f_{1t}\left(0\right),f_{1t}\left(1\right)\right\} $
are i.i.d.\ $N\left(1,1\right)$, and $\varepsilon_{it}$ is i.i.d.
$N\left(0,4\right)$.
\item DGP6. $\left\{ \lambda_{ij}\left(0\right),\lambda_{ij}\left(1\right),\;j=1,2,f_{1t}\left(0\right),f_{1t}\left(1\right)\right\} $
are i.i.d.\ $N\left(1,1\right)$, and $\varepsilon_{it}$ is i.i.d.\
$\text{Uniform}\left(-2,2\right)$.
\end{itemize}
For DGPs 4-6, the second potential factors are simulated as $f_{2t}\left(0\right)=0.8\cdot f_{2,t-1}\left(0\right)+e_{1t}$
with $e_{1t}$ being i.i.d.\ $N\left(0,1\right)$, and $f_{2t}\left(1\right)=0.9\cdot f_{2,t-1}\left(1\right)+e_{2t}$
with $e_{2t}$ being i.i.d.\ $N\left(0,1\right)$.

The number of Monte Carlo repetitions is again set to 5000. In Table
\ref{tab:coverProp2}, we report the coverage rates of the 95\% confidence
interval for three choices of the number of treated units ($n_{0}$):
20, 40, 100. The numbers in parentheses in the tables represent coverage
rates for $\tau_{1,T_{0}+1},\tau_{1,T_{0}+1+m},\tau_{1,T}$ respectively.
We set the treatment date $T_{0}+1=30$ when $T=50$, and $T_{0}+1=50$
when $T=100$. For all DGPs and sample sizes, the coverage rates of
the confidence intervals remain close to the nominal rate of 95\%.

\begin{table}[!tp]
\begin{centering}
\begin{tabular}{llccc}
\hline
\multicolumn{5}{c}{ $n_0=20$}\tabularnewline
\hline
&  & $n_{1}=20$ & $n_{1}=40$ & $n_{1}=100$\tabularnewline
\hline
DGP4 & T=50 & $\left(0.929,0.931,0.928\right)$ & $\left(0.932,0.928,0.930\right)$ & $\left(0.925,0.929,0.920\right)$\tabularnewline
& T=100 & $\left(0.936,0.935,0.934\right)$ & $\left(0.934,0.935,0.939\right)$ & $\left(0.932,0.929,0.935\right)$\tabularnewline
\hline
DGP5 & T=50 & $\left(0.925,0.924,0.928\right)$ & $\left(0.928,0.928,0.927\right)$ & $\left(0.930,0.928,0.926\right)$\tabularnewline
& T=100 & $\left(0.931,0.938,0.941\right)$ & $\left(0.930,0.934,0.936\right)$ & $\left(0.937,0.933,0.935\right)$\tabularnewline
\hline
DGP6 & T=50 & $\left(0.925,0.936,0.925\right)$ & $\left(0.931,0.933,0.926\right)$ & $\left(0.922,0.926,0.926\right)$\tabularnewline
& T=100 & $\left(0.935,0.934,0.927\right)$ & $\left(0.936,0.935,0.937\right)$ & $\left(0.933,0.941,0.930\right)$\tabularnewline
\hline
\multicolumn{5}{c}{ $n_0=40$}\tabularnewline
\hline
&  & $n_{1}=20$ & $n_{1}=40$ & $n_{1}=100$\tabularnewline
\hline
DGP4 & T=50 & $\left(0.935,0.934,0.930\right)$ & $\left(0.932,0.939,0.942\right)$ & $\left(0.932,0.933,0.932\right)$\tabularnewline
& T=100 & $\left(0.936,0.935,0.934\right)$ & $\left(0.939,0.943,0.936\right)$ & $\left(0.941,0.944,0.942\right)$\tabularnewline
\hline
DGP5 & T=50 & $\left(0.927,0.926,0.927\right)$ & $\left(0.934,0.936,0.930\right)$ & $\left(0.932,0.937,0.936\right)$\tabularnewline
& T=100 & $\left(0.933,0.934,0.934\right)$ & $\left(0.938,0.943,0.941\right)$ & $\left(0.939,0.944,0.943\right)$\tabularnewline
\hline
DGP6 & T=50 & $\left(0.934,0.932,0.926\right)$ & $\left(0.933,0.938,0.935\right)$ & $\left(0.933,0.933,0.939\right)$\tabularnewline
& T=100 & $\left(0.936,0.934,0.932\right)$ & $\left(0.940,0.947,0.938\right)$ & $\left(0.942,0.941,0.938\right)$\tabularnewline
\hline
\multicolumn{5}{c}{ $n_0=100$}\tabularnewline
\hline
&  & $n_{1}=20$ & $n_{1}=40$ & $n_{1}=100$\tabularnewline
\hline
DGP4 & T=50 & $\left(0.928,0.930,0.926\right)$ & $\left(0.937,0.933,0.933\right)$ & $\left(0.932,0.936,0.936\right)$\tabularnewline
& T=100 & $\left(0.937,0.929,0.934\right)$ & $\left(0.942,0.944,0.942\right)$ & $\left(0.942,0.945,0.948\right)$\tabularnewline
\hline
DGP5 & T=50 & $\left(0.927,0.930,0.932\right)$ & $\left(0.935,0.937,0.939\right)$ & $\left(0.935,0.943,0.938\right)$\tabularnewline
& T=100 & $\left(0.936,0.935,0.936\right)$ & $\left(0.941,0.940,0.947\right)$ & $\left(0.944,0.940,0.940\right)$\tabularnewline
\hline
DGP6 & T=50 & $\left(0.929,0.928,0.932\right)$ & $\left(0.940,0.934,0.940\right)$ & $\left(0.932,0.936,0.930\right)$\tabularnewline
& T=100 & $\left(0.932,0.937,0.939\right)$ & $\left(0.946,0.942,0.940\right)$ & $\left(0.943,0.947,0.941\right)$\tabularnewline
\hline
\end{tabular}
\end{centering}
\caption{Coverage rates of the 95\% confidence interval under Proposition~\ref{prop:prop-2}}\label{tab:coverProp2}
\end{table}

We next examine the constant-shift factor specification for the case of a small number of treated units.
We focus on the case with a single treated unit ($n_{0}=1$), which
is specified as the first unit in the sample. The DGPs 7-9 extend
DGPs 1-3 under Proposition~\ref{prop:prop-1} to the constant-shift potential-factor case. The
data are simulated according to the following factor model,
\[
Y_{it}\left(d\right)=\lambda_{i}^{\prime}\left(d\right)f_{t}\left(d\right)+\varepsilon_{it},\;d=0,1,
\]
where
\begin{equation}
f_{t}\left(1\right)=f_{t}\left(0\right)+\Delta,\label{eq:potential-factor-shift}
\end{equation}
with $\Delta$ being a constant $r\times1$ vector. We consider three
data generating processes (DGPs):
\begin{itemize}[label=]
\item DGP7. $\left\{ \lambda_{ij}\left(0\right),\lambda_{ij}\left(1\right),\;j=1,2,f_{1t}\left(0\right),\varepsilon_{it}\right\} $
are i.i.d.\ $N\left(0,1\right)$.
\item DGP8. $\left\{ \lambda_{ij}\left(0\right),\lambda_{ij}\left(1\right),\;j=1,2,f_{1t}\left(0\right)\right\} $
are i.i.d.\ $N\left(1,1\right)$, and $\varepsilon_{it}$ is i.i.d.\ $N\left(0,4\right)$.
\item DGP9. $\left\{ \lambda_{ij}\left(0\right),\lambda_{ij}\left(1\right),\;j=1,2,f_{1t}\left(0\right)\right\} $
are i.i.d.\ $N\left(1,1\right)$, and $\varepsilon_{it}$ is i.i.d.\ $\text{Uniform}\left(-2,2\right)$.
\end{itemize}
The second factor is simulated as $f_{2t}\left(0\right)=0.8\cdot f_{2,t-1}\left(0\right)+e_{t}$,
where $e_{t}$ is i.i.d.\ $N\left(0,1\right)$. The potential factor
$f_{t}\left(1\right)$ is given by (\ref{eq:potential-factor-shift}). Across Monte Carlo simulations, elements of
$\Delta$ are random draws from $N\left(0,1\right)$.

We set the number of Monte Carlo repetitions to 5000. In Table \ref{tab:cover-constshift},
we report the coverage rates of the 95\% confidence interval for the
three DGPs 7-9. The numbers in parentheses represent coverage rates
for $\tau_{1,T_{0}+1},\tau_{1,T_{0}+1+m},\tau_{1,T}$ respectively, where
$m=\left[\frac{T-T_{0}}{2}\right]$. We set the treatment date $T_{0}+1=30$
when $T=50$, and $T_{0}+1=50$ when $T=100$. For all DGPs and sample
sizes, the coverage rates of the confidence intervals are all close
to the nominal rate of 95\%.

\begin{table}[!tp]
\begin{centering}
\begin{tabular}{llccc}
\hline
&  & $n_{1}=20$ & $n_{1}=40$ & $n_{1}=100$\tabularnewline
\hline
DGP7 & T=50 & $\left(0.929,0.931,0.929\right)$ & $\left(0.942,0.936,0.943\right)$ & $\left(0.939,0.941,0.937\right)$\tabularnewline
& T=100 & $\left(0.921,0.924,0.924\right)$ & $\left(0.942,0.937,0.943\right)$ & $\left(0.946,0.946,0.946\right)$\tabularnewline
\hline
DGP8 & T=50 & $\left(0.932,0.928,0.929\right)$ & $\left(0.941,0.944,0.942\right)$ & $\left(0.941,0.938,0.935\right)$\tabularnewline
& T=100 & $\left(0.920,0.921,0.931\right)$ & $\left(0.946,0.941,0.944\right)$ & $\left(0.943,0.947,0.942\right)$\tabularnewline
\hline
DGP9 & T=50 & $\left(0.932,0.930,0.934\right)$ & $\left(0.933,0.935,0.935\right)$ & $\left(0.941,0.935,0.939\right)$\tabularnewline
& T=100 & $\left(0.927,0.925,0.927\right)$ & $\left(0.946,0.937,0.932\right)$ & $\left(0.944,0.942,0.942\right)$\tabularnewline
\hline
\end{tabular}
\end{centering}
\caption{Coverage rates of the 95\% confidence interval under the constant-shift factor specification ($n_0=1$).}\label{tab:cover-constshift}
\end{table}

\section{Two Empirical Applications}\label{sec:applications}

\subsection{An Application to California's Tobacco-Control Program}

Using data from \citet{Abadie_Diamond_Hainmeuller_2010} on per capita cigarette sales
across 39 U.S.\ states, we construct a counterfactual California using
factor models and compare it with the synthetic California. We find
that the two approaches yield close results. Let $Y_C$ be the $n\times T$
data matrix for the control states ($n=38$). Let $S_{C}$ denote
the $n\times n$ sample covariance matrix for $Y_C$. Let $\mu_{j}$
be the $j$-th largest eigenvalue of $S_{C}$. Let $m=\min\left\{ n,T\right\} $.
Figure \ref{fig:CA-screeplot} provides the scree plot for the control
states, which plots $\frac{\mu_{j}}{\sum_{i=1}^{m}\mu_{i}}$ against
$j=1,2,\ldots,10$ (i.e., the ratio of the first ten largest eigenvalues
of the covariance matrix and the sum of all eigenvalues). Figure \ref{fig:CA-er-gr}
plots \citet{Ahn_Horenstein_2013}'s Eigenvalue Ratio (ER) and Growth
Ratio (GR) criterion functions, where
\[
ER\left(j\right)=\frac{\mu_{j}}{\mu_{j+1}},\;GR\left(j\right)=\frac{\log\left(\sum_{i=j}^{m}\mu_{i}/\sum_{s=j+1}^{m}\mu_{s}\right)}{\log\left(\sum_{s=j+1}^{m}\mu_{s}/\sum_{k=j+2}^{m}\mu_{k}\right)}.
\]
The number of factors can be consistently estimated by
\[
\hat{r}^{ER}=\underset{\left\{ 1\leq j\leq r_{max}\right\} }{\argmax}ER\left(j\right),\;\hat{r}^{GR}=\underset{\left\{ 1\leq j\leq r_{max}\right\} }{\argmax}GR\left(j\right).
\]

Both the $ER$ and $GR$ criteria select a single factor. The number of
factors can also be chosen using the information criteria of \citet{Bai_Ng_2002}, which suggest more than one factor in this application. As a
robustness check, we therefore estimate the model using both one factor
and two factors.

\begin{figure}[!tp]
\begin{centering}
\includegraphics[width=0.7\columnwidth]{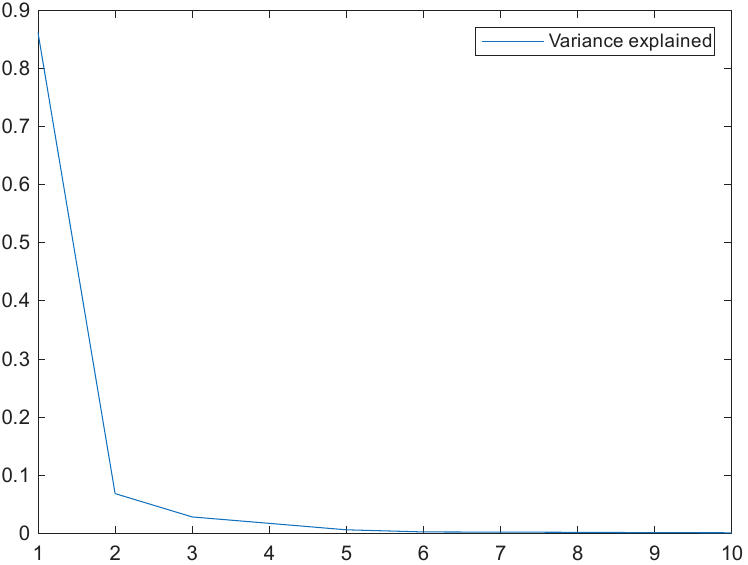}
\par\end{centering}
\caption{Total variance explained by the first 10 principal components.}\label{fig:CA-screeplot}
\end{figure}

\begin{figure}[!tp]
\begin{centering}
\includegraphics[width=0.7\columnwidth]{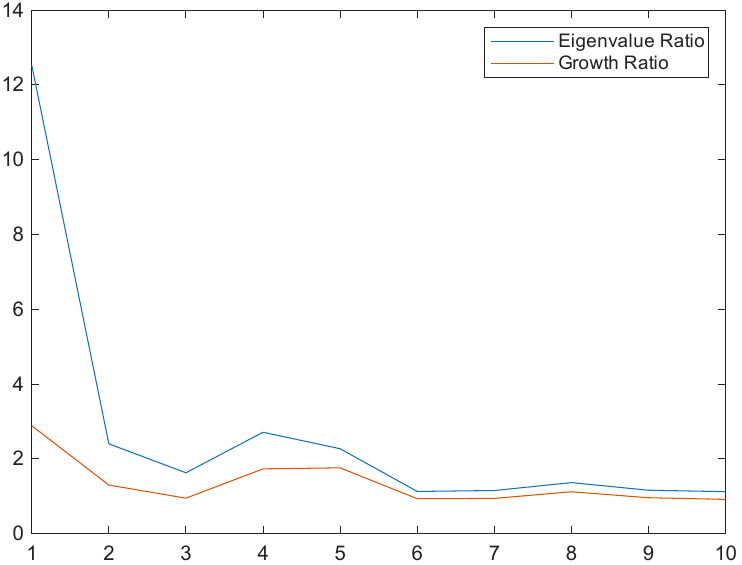}
\par\end{centering}
\caption{The eigenvalue ratio and growth ratio of \citet{Ahn_Horenstein_2013}.}\label{fig:CA-er-gr}
\end{figure}

Our analysis proceeds with the following steps.
\begin{itemize}[label=]
\item Step 1: use principal component analysis for the covariance of 38
control states to obtain the factor estimates $\hat{f}_{t}$,
$1\leq t\leq T$.
\item Step 2: regress $Y_{it}$ on $\left(1,\hat{f}_{t}\right)$ for $t\leq T_{0}$
to obtain the intercept $\hat{a}_{i}\left(0\right)$ and the factor
loading $\hat{\lambda}_{i}\left(0\right)$, $i=CA$. Regress $Y_{it}$
on $\left(1,\hat{f}_{t}\right)$ for $t>T_{0}$ to obtain $\hat{a}_{i}\left(1\right)$
and $\hat{\lambda}_{i}\left(1\right)$, $i=CA$.
\item Step 3: the estimator for the (systematic) causal effect is
\[
\hat{\tau}_{CA,t}=\left[\hat{\lambda}_{CA}\left(1\right)-\hat{\lambda}_{CA}\left(0\right)\right]'\hat{f}_{t}+\left[\hat{a}_{CA}\left(1\right)-\hat{a}_{CA}\left(0\right)\right],\;t>T_{0}.
\]
\end{itemize}
For a robustness check, we estimate the model using either one or two
factors. The t-statistic for testing the null hypothesis
$\kappa_{CA}=a_{CA}\left(1\right)-a_{CA}\left(0\right)=0$ is $t=1.38$
for the model with a single factor and $t=0.10$ for the model with two
factors. This provides little evidence of a post-treatment intercept
shift, including any constant shift in the factor process. Our estimator
$\hat{\tau}_{CA,t}$ provides an estimator for the causal effect that is
robust to such post-treatment intercept shifts.

In Figure~\ref{fig:CA-1or2factors}, we provide an illustrative comparison
of the observed California series with estimated untreated paths obtained
from factor models with one or two factors and from the synthetic-control
method. The vertical line marks 1988, the year in which Proposition 99
was passed in California. The treatment period runs from $T_0+1=1989$ to
$T=2000$. For the factor-model estimates, the predicted systematic
untreated component is
$\widehat m_{CA,t}(0)=\hat a_{CA}(0)+\hat\lambda_{CA}(0)'\hat f_t$.
For demonstration, the path plotted in the figure adds the
post-treatment treated residual, so that
$\widehat Y_{CA,t}(0)=\widehat m_{CA,t}(0)+\hat\varepsilon_{CA,t}$,
where $\hat\varepsilon_{CA,t}$ is obtained in Step~2. Under the
assumption that the idiosyncratic residual is not affected by treatment,
this object estimates the untreated potential outcome $Y_{CA,t}(0)$.
Unlike the synthetic-control comparison, however, our Step~3
causal-effect estimator is constructed directly and does not require
first constructing this plotted counterfactual path. In the figure legend,
``1factorCA'' denotes the one-factor estimate, whereas ``2factorCA''
denotes the two-factor estimate.

Figure \ref{fig:CA-causal-1or2factors} presents a comparison between the synthetic control approach and the one- or two-factor causal estimates computed in Step 3. Overall, the two methods produce comparable results.

\begin{figure}[!tp]
\begin{centering}
\includegraphics[width=0.7\columnwidth]{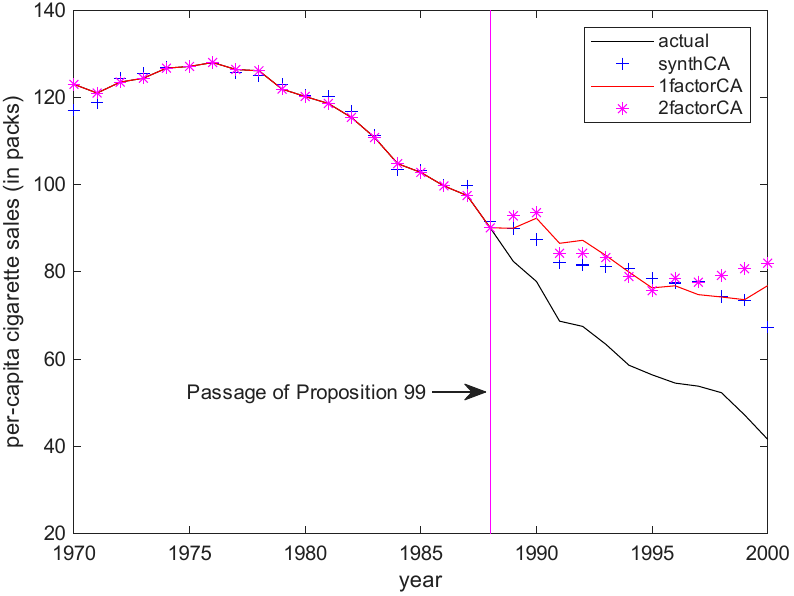}
\par\end{centering}
\caption{Counterfactual California using 1 or 2 factors vs.\ synthetic control}\label{fig:CA-1or2factors}
\end{figure}

\begin{figure}[!tp]
\begin{centering}
\includegraphics[width=0.7\columnwidth]{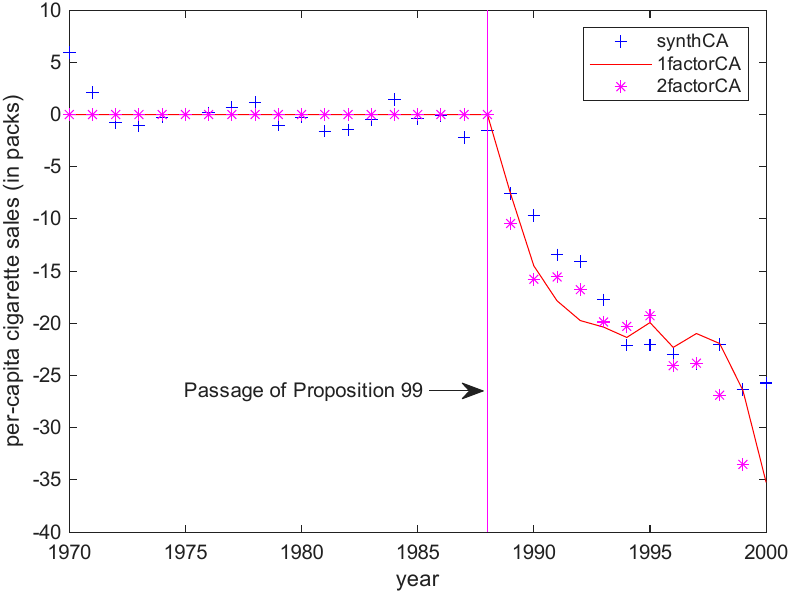}
\par\end{centering}
\caption{Causal estimates using 1 or 2 factors vs.\ synthetic control}\label{fig:CA-causal-1or2factors}
\end{figure}

We further investigate whether the policy intervention indeed induced
a structural break by regressing California's observed outcome on
the single factor using the whole sample. The Quandt likelihood ratio test for a structural break at an unknown date, using 15 percent trimming, rejects parameter stability with a p-value below 0.001. The maximum F-statistic is attained in 1984.
 The Chow test for a structural break at the 1989
intervention date also rejects stability,  yielding an F-statistic of 16.84 with a p-value below 0.001.
These tests provide evidence of instability in the treated-unit factor regression, including instability at the intervention date.
We interpret these tests as diagnostics supporting the relevance of allowing
treated factor loadings to change over time.

Applying the results from Section \ref{sec:Estimation-and-Inference},
we also construct the 95\% confidence interval of our causal estimates
based on the factor models. Using a single factor, Figure \ref{fig:Factor-causal-model-CA1-causal-ci}
reproduces Figure \ref{fig:CA-causal-1or2factors} with the shaded
region being the 95\% confidence interval around the causal estimates.
The confidence intervals generally cover the synthetic-control estimates and
indicate that the factor-model causal estimates are statistically different
from zero for the post-treatment periods shown.
Figure \ref{fig:Factor-causal-model-CA1-causal-ci-2} reports similar
results using 2 factors.

\begin{figure}[!tp]
\begin{centering}
\includegraphics[width=0.7\columnwidth]{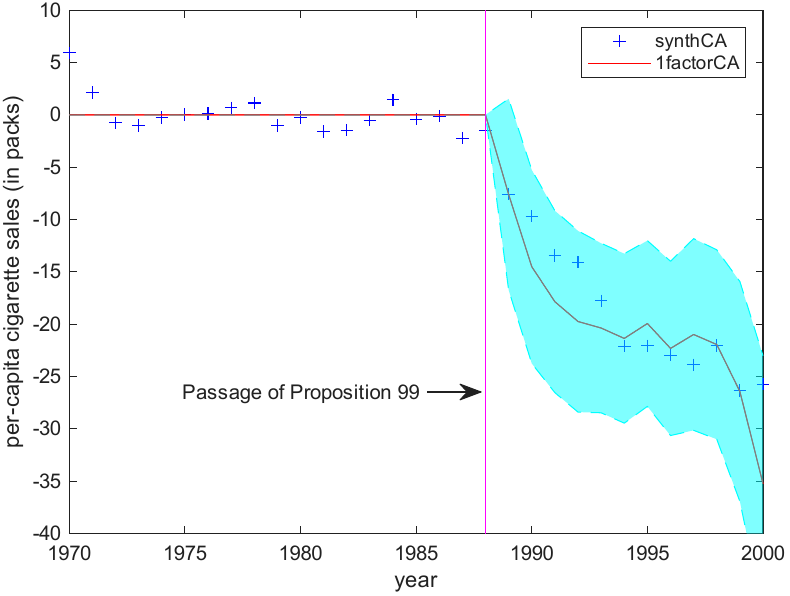}
\par\end{centering}
\caption{The 95\% confidence intervals for factor causal estimates}\label{fig:Factor-causal-model-CA1-causal-ci}
\end{figure}

\begin{figure}[!tp]
\begin{centering}
\includegraphics[width=0.7\columnwidth]{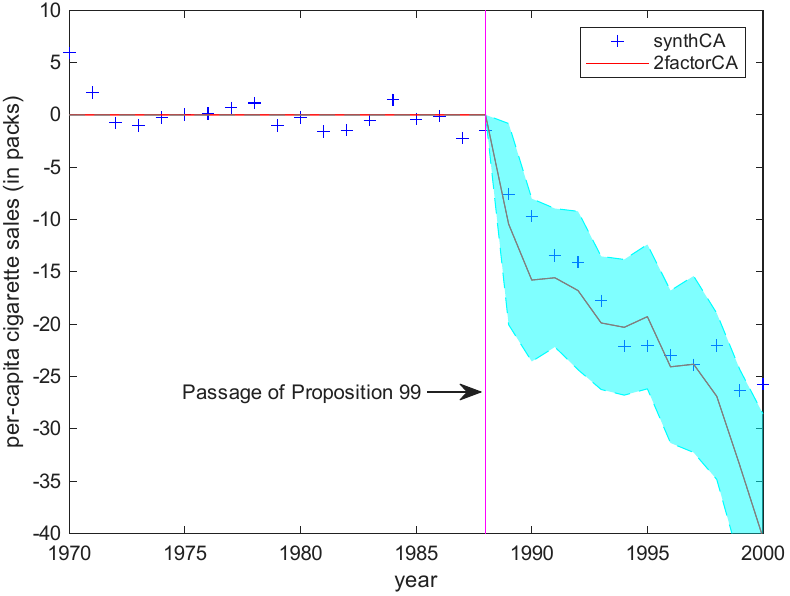}
\par\end{centering}
\caption{The 95\% confidence intervals for factor causal estimates}\label{fig:Factor-causal-model-CA1-causal-ci-2}
\end{figure}

\subsection{An Application to German Reunification}

In this section, we compare the causal estimates from the synthetic
control method and causal factor models using \citet{Abadie_Diamond_Hainmeuller_2015}'s
data on per capita GDP for 17 countries. The objective is to evaluate
the causal impact of German reunification on Germany's per capita
GDP. The synthetic control method uses 16 countries to construct the
synthetic West Germany. The causal factor model uses the same 16 countries
to construct the counterfactual West Germany. Based on the covariance
matrix of all countries excluding West Germany, \citet{Ahn_Horenstein_2013}'s Eigenvalue Ratio and Growth Ratio criteria point to a single
factor.

In Figure \ref{fig:gdp-all}, time series of per capita GDP for all
17 countries (1969--2003) demonstrate strong comovement but not necessarily
parallel trends. The vertical line represents the year 1990 and the treatment
periods are from  $T_{0}+1=1991$ to $T=2003$.

\begin{figure}[!tp]
\begin{centering}
\includegraphics[width=0.7\columnwidth]{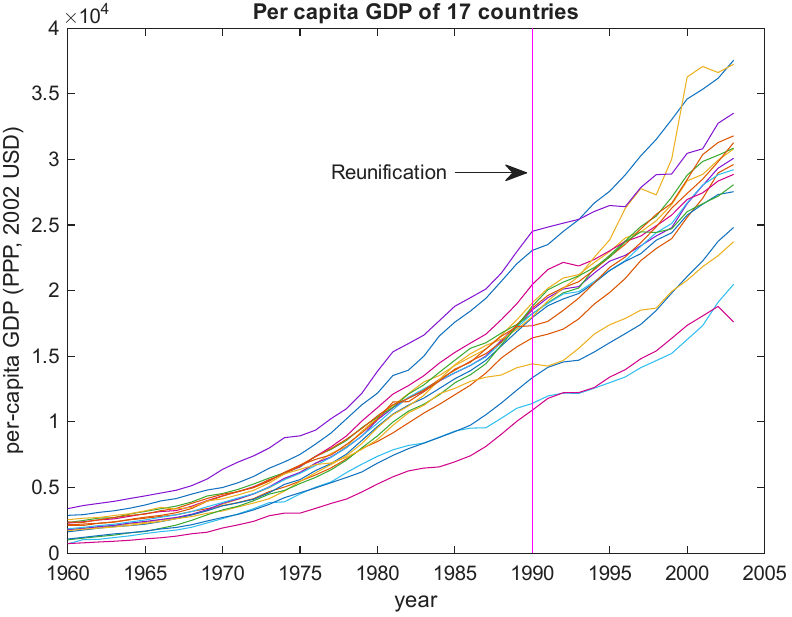}
\par\end{centering}
\caption{Time series plot of per capita GDP}\label{fig:gdp-all}
\end{figure}

For a robustness check, we estimate the causal factor model with one
or two factors. In Figure \ref{fig:WG-1or2factors}, we compare the actual data with counterfactual paths for West Germany
constructed using factor models with one or two factors and using synthetic
control.  The factor causal
effects remain similar to the synthetic control estimates as
shown by Figures \ref{fig:WG-1or2factors} and \ref{fig:WG-causal-1or2factors}.

\begin{figure}[!tp]
\begin{centering}
\includegraphics[width=0.7\columnwidth]{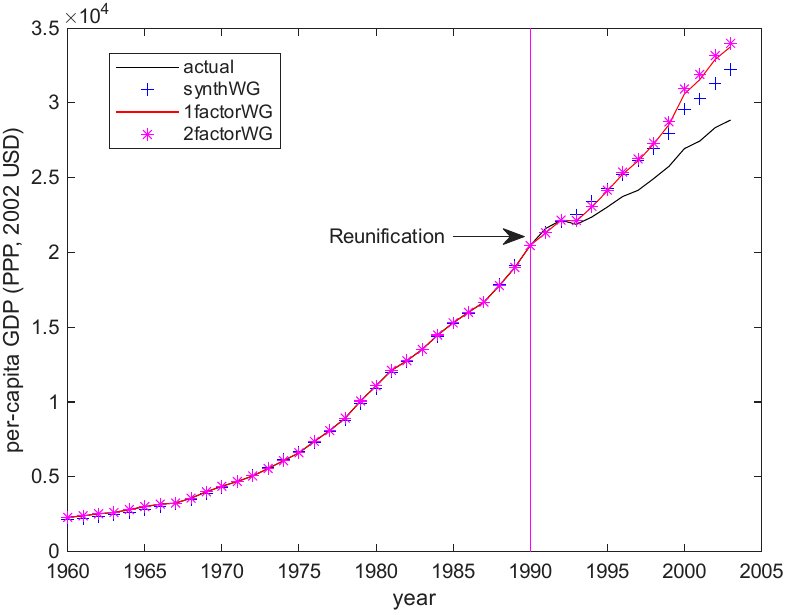}
\par\end{centering}
\caption{Causal factor model versus synthetic control: the counterfactual West
Germany}\label{fig:WG-1or2factors}
\end{figure}

\begin{figure}[!tp]
\begin{centering}
\includegraphics[width=0.7\columnwidth]{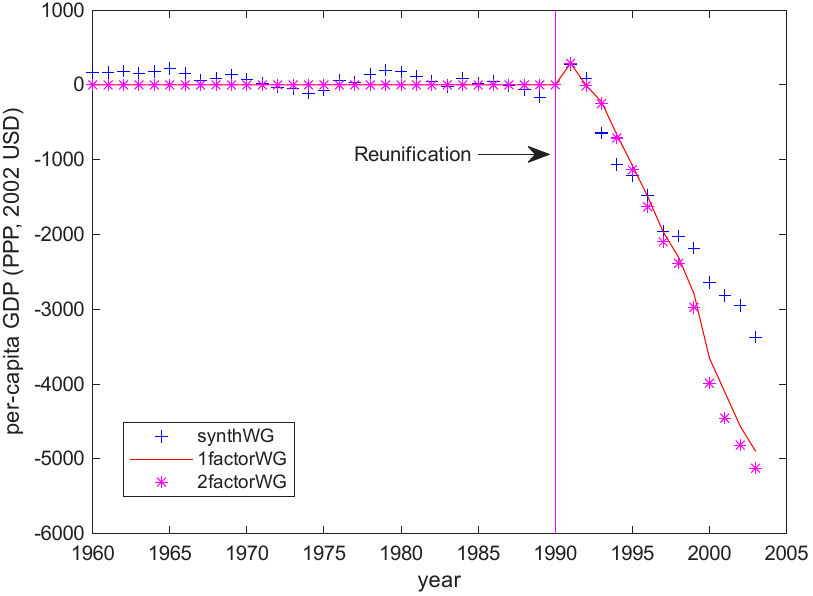}
\par\end{centering}
\caption{Causal factor model versus synthetic control: the causal effects}\label{fig:WG-causal-1or2factors}
\end{figure}

We then regress West Germany's observed outcome on the single factor
using the full sample. The Quandt likelihood ratio test for a structural break at an unknown date,
using 15 percent trimming, rejects parameter stability with a p-value below
0.001. The maximum F-statistic is attained in 1993. The Chow test for a
structural break at the 1991 treatment date also rejects stability, producing
an F-statistic of 634.5 with a p-value below 0.001.

Figures \ref{fig:germany-causal-ci-1factor} and \ref{fig:germany-causal-ci-2factor}
present the 95\% confidence intervals of the causal effects using
one or two factors.
These intervals indicate that our causal estimates
are statistically different from zero for most of the post-treatment periods shown.
The path of our estimates
aligns closely with the synthetic control estimates prior to 2000,
beyond which slight deviations emerge. In the 2-factor specification,
the 95\% confidence intervals generally cover the synthetic control
estimates.

Finally, we investigate the possibility of a post-treatment intercept
shift for West Germany. The t-statistic for testing the null hypothesis
$\kappa_{WG}=a_{WG}\left(1\right)-a_{WG}\left(0\right)=0$ is $t=11.81$
for the specification with a single factor and $t=6.23$ for the
specification with two factors. This suggests a statistically significant post-treatment intercept shift,
which may reflect either a direct change in the unit-specific intercept or a
constant shift in the factor process.

\begin{figure}[!tp]
\begin{centering}
\includegraphics[width=0.7\columnwidth]{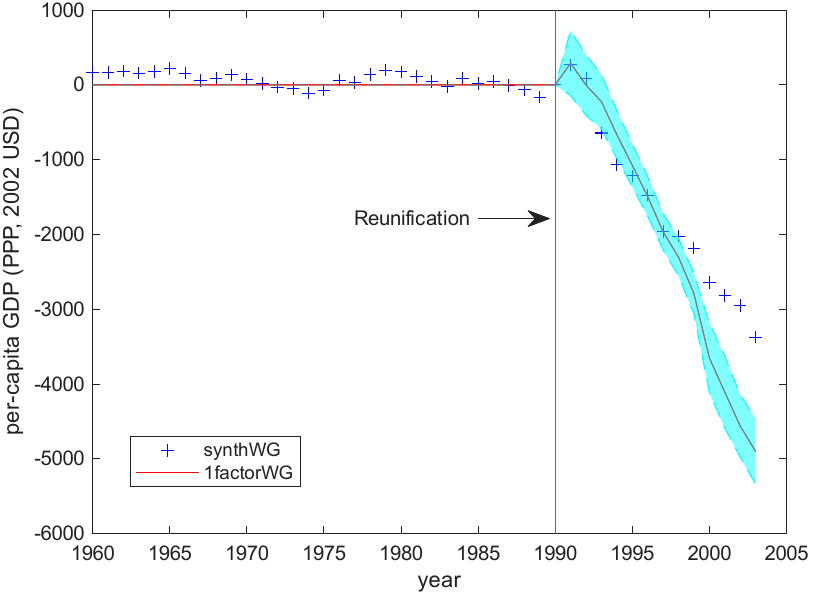}
\par\end{centering}
\caption{The 95\% confidence intervals for causal estimates with one factor}\label{fig:germany-causal-ci-1factor}
\end{figure}

\begin{figure}[!tp]
\begin{centering}
\includegraphics[width=0.7\columnwidth]{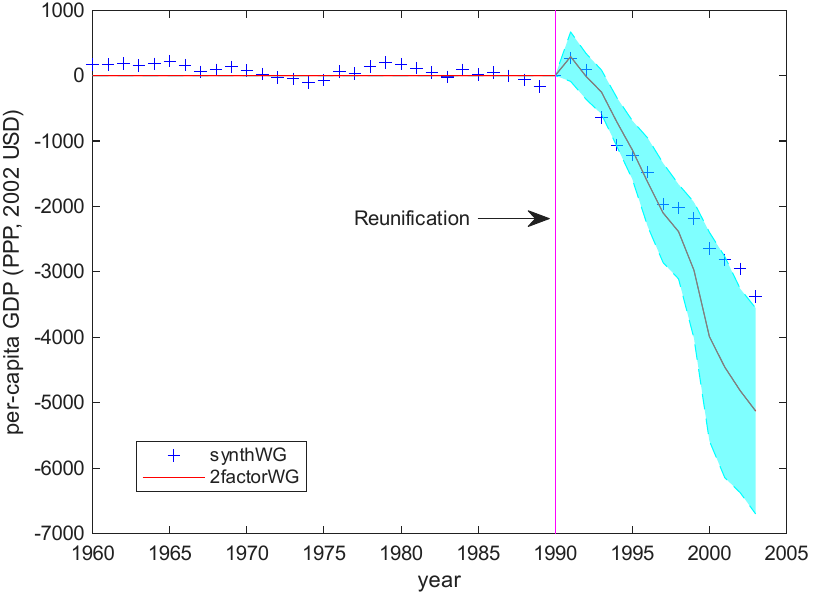}
\par\end{centering}
\caption{The 95\% confidence intervals for causal estimates with two factors}\label{fig:germany-causal-ci-2factor}
\end{figure}

\section{Conclusion}\label{sec:conclusion}

This paper develops a factor-model framework for causal inference in
panel data that links systematic treatment effects to changes in the
factor representation of potential outcomes. By modeling both potential
outcomes within a factor structure, the approach separates systematic
effects from unit-time idiosyncratic noise, provides interpretable
decompositions, and delivers feasible inference without imposing the
standard parallel-trends restriction.

We consider two broad environments. In the first, the policy intervention
does not alter the common-shock process, so the same factors drive both
treated and control outcomes. Treatment effects arise through structural
changes in treated units' factor loadings and, possibly, slope
coefficients. In this benchmark case, the common factors can be estimated
from the control group and treated parameters are identified from
time-series variation. The method can therefore be applied even when the
treated group is small, including the single-treated-unit setting.

In the second environment, the policy intervention may also affect the
common-shock process. Allowing both potential factors and potential
factor loadings provides a flexible representation of post-treatment
dynamics. When the treated cross section is large, the framework can
accommodate an unrestricted post-treatment factor process and estimate
treated-specific post-treatment factors directly. When the treated group
is small, a fully flexible post-treatment factor process is generally not
separately identifiable; we therefore propose a restricted
post-treatment perturbation that preserves the potential-factor
interpretation while restoring tractable estimation and inference.

Monte Carlo experiments indicate that the proposed confidence intervals
attain coverage close to nominal levels across a range of designs. In
applications to California's Proposition 99 and German reunification, the
causal factor model produces counterfactual paths and treatment-effect
estimates broadly consistent with synthetic control, while offering
additional structure for formal inference and interpretable diagnostics,
including evidence on structural breaks in treated loadings. Overall, the
framework complements synthetic control and difference-in-differences
methods by providing a unified factor-based representation of potential
outcomes and inference for systematic effects in panels with pervasive
comovement.


\begin{appendix}
	
	\section{Proof of Proposition 1}
	
	\subsection{Notation used in the proof}
	
	The factors are estimated using only the control units.  Thus the cross-sectional
	sample size in the factor-estimation problem is
	\begin{equation}
		n_1=n-n_0,
		\label{eq:control-sample-size}
	\end{equation}
	whereas the time dimension is the full sample length $T$.  The sub-block length
	in the improved factor-error note should be read here as the length of the
	corresponding pre- or post-treatment block.  Write
	\begin{equation}
		\cT_0=\{1,\ldots,T_0\},
		\qquad
		\cT_1=\{T_0+1,\ldots,T\},
		\label{eq:time-blocks}
	\end{equation}
	and let
	\begin{equation}
		T_d=|\cT_d|,
		\qquad d\in\{0,1\}.
		\label{eq:block-lengths}
	\end{equation}
	Thus $T_0$ is the pre-treatment block length and $T_1=T-T_0$ is the
	post-treatment block length.
	
	For each block $d\in\{0,1\}$, define the true and estimated factor matrices
	\begin{equation}
		F_d=(f_t')_{t\in\cT_d},
		\qquad
		\widehat F_d=(\widehat f_t')_{t\in\cT_d}.
		\label{eq:Fd-hatFd-def}
	\end{equation}
	All factor errors below are understood after applying the population rotation, or
	equivalently after using the local nonsingular normalization from the improved
	factor-rotation argument.  Thus
	\begin{equation}
		\Delta=\widehat F-F,
		\qquad
		\Delta_d=\widehat F_d-F_d,
		\label{eq:Delta-def}
	\end{equation}
	means the rotated factor-estimation error.
	
	Let the treated units be indexed by $i=1,\ldots,n_0$ and the control units by
	$i=n_0+1,\ldots,n$, and write $n_1=n-n_0$.  For treated units, suppose that in
	block $d\in\{0,1\}$,
	\begin{equation}
		Y_{it}(d)=\lambda_i(d)'f_t+X_{it}'\beta(d)+\varepsilon_{it}(d),
		\qquad t\in\cT_d,
		\label{eq:treated-potential-model}
	\end{equation}
	where $\lambda_i(d)\in\R^r$ is unit specific and $\beta(d)\in\R^p$ is common
	across treated units.
	
	For treated unit $i\le n_0$ and block $d \in \{0,1\}$, define
	\begin{equation}
		Y_{i,d}=(Y_{it})_{t\in\cT_d},
		\qquad
		X_{i,d}=(X_{it}')_{t\in\cT_d},
		\qquad
		\varepsilon_{i,d}=(\varepsilon_{it}(d))_{t\in\cT_d}.
		\label{eq:block-data-def}
	\end{equation}
	The true block model is
	\begin{equation}
		Y_{i,d}=F_d\lambda_i(d)+X_{i,d}\beta(d)+\varepsilon_{i,d}.
		\label{eq:block-model}
	\end{equation}
	Stack the treated observations as
	\[
	Y_d
	=
	\begin{pmatrix}
		Y_{1,d}\\
		Y_{2,d}\\
		\vdots\\
		Y_{n_0,d}
	\end{pmatrix},
	\qquad
	\varepsilon_d
	=
	\begin{pmatrix}
		\varepsilon_{1,d}\\
		\varepsilon_{2,d}\\
		\vdots\\
		\varepsilon_{n_0,d}
	\end{pmatrix},
	\qquad
	\theta_d
	=
	\bigl(
	\lambda_1(d)',\ldots,\lambda_{n_0}(d)',\beta(d)'
	\bigr)' .
	\]
	Then the infeasible stacked treated regression is
	\begin{equation}
		Y_d
		=
		Z_d\theta_d+\varepsilon_d,
		\label{eq:stacked-regression-true-factor}
	\end{equation}
	where
	\[
	Z_d
	=
	\begin{pmatrix}
		F_d & 0 & \cdots & 0 & X_{1,d} \\
		0 & F_d & \cdots & 0 & X_{2,d} \\
		\vdots & \vdots & \ddots & \vdots & \vdots \\
		0 & 0 & \cdots & F_d & X_{n_0,d}
	\end{pmatrix}.
	\]
	By the Schur-complement criterion applied to the block form of $Z_d'Z_d$,
	\[
	Z_d'Z_d>0
	\quad\Longleftrightarrow\quad
	F_d'F_d>0
	\ \text{and}\
	\sum_{j=1}^{n_0}
	X_{j,d}'M_{F_d}X_{j,d}>0,
	\]
	where  $M_{F_d}
	:=
	I_{T_d}
	-
	F_d(F_d'F_d)^{-1}F_d'$.
	The feasible version replaces the unknown factor block $F_d$ by the
	control-unit estimator $\widehat F_d$. Thus,
	\[
	Y_{i,d}
	=
	\widehat F_d\lambda_i(d)
	+
	X_{i,d}\beta(d)
	+
	\widetilde\varepsilon_{i,d},
	\qquad
	\widetilde\varepsilon_{i,d}
	:=
	\varepsilon_{i,d}
	-
	(\widehat F_d-F_d)\lambda_i(d).
	\]
	The feasible stacked version of \eqref{eq:stacked-regression-true-factor} is
	\[  Y_d = \widehat Z_d \theta_d  + \widetilde \varepsilon_d. \]
	Thus the feasible least-squares estimator for $\theta_d$ is
	\[
	\widehat\theta_d
	=
	(\widehat Z_d'\widehat Z_d)^{-1}
	\widehat Z_d'Y_d,
	\qquad
	\widehat\theta_d
	=
	\bigl(
	\widehat\lambda_1(d)',
	\ldots,
	\widehat\lambda_{n_0}(d)',
	\widehat\beta(d)'
	\bigr)' .
	\]
	Equivalently, defining
	\[
	M_{\widehat F_d}
	:=
	I_{T_d}
	-
	\widehat F_d
	(\widehat F_d'\widehat F_d)^{-1}
	\widehat F_d' ,
	\]
	the slope estimator is
	\begin{equation}
		\widehat\beta(d)
		=
		\left(\sum_{j=1}^{n_0}X_{j,d}'M_{\widehat F_d}X_{j,d}\right)^{-1}
		\left(\sum_{j=1}^{n_0}X_{j,d}'M_{\widehat F_d}Y_{j,d}\right),
		\label{eq:beta-hat}
	\end{equation}
	and, for $i=1,\ldots,n_0$,
	\begin{equation}
		\widehat\lambda_i(d)
		=
		(\widehat F_d'\widehat F_d)^{-1}
		\widehat F_d'\{Y_{i,d}-X_{i,d}\widehat\beta(d)\}.
		\label{eq:lambda-hat}
	\end{equation}
	These expressions
	do not require inversion of the full high-dimensional matrix.
	
	For  a fixed treated unit $i$ and a fixed post-treatment date $t\in\cT_1$,
	the systematic treatment effect is defined by
	\begin{equation}
		\tau_{it}^{*}
		=
		f_t'\{\lambda_i(1)-\lambda_i(0)\}
		+
		X_{it}'\{\beta(1)-\beta(0)\},
		\label{eq:tau-star-pop}
	\end{equation}
	and the estimated systematic effect is
	\begin{equation}
		\widehat\tau_{it}^{*}
		=
		\widehat f_t'
		\{\widehat\lambda_i(1)-\widehat\lambda_i(0)\}
		+
		X_{it}'\{\widehat\beta(1)-\widehat\beta(0)\}.
		\label{eq:tau-hat-star}
	\end{equation}
	The next section provides an estimator for the variance of $\widehat\tau_{it}^{*}$.

	We use the following notation throughout. For two positive deterministic
	sequences $a_{nT}$ and $b_{nT}$, $a_{nT}=O(b_{nT})$ means that
	$|a_{nT}|/b_{nT}$ is bounded, while $a_{nT}=o(b_{nT})$ means that
	$|a_{nT}|/b_{nT}\to 0$. We write $a_{nT}\asymp b_{nT}$ when both
	$a_{nT}=O(b_{nT})$ and $b_{nT}=O(a_{nT})$ hold. The stochastic analogues
	$O_p(\cdot)$ and $o_p(\cdot)$ are used in their usual sense. We use $A := B$ to denote that $A$ is defined to be $B$.
	We use $\|\cdot\|$ and $\|\cdot\|_F$ to denote the spectral norm  and Frobenius norm, respectively. The two norms are equivalent for any matrix of finite rank. In the special case of vectors, the two norms are identical.
	The arrow $\Rightarrow$  denotes convergence in distribution, $\overset{p}{\to}$
	denotes convergence in probability, and $\plim$ denotes probability limit.

	\subsection{Standard errors of the estimated treatment effects}
	
	The sampling uncertainty of $\widehat\tau_{it}^{*}$ has two components. The first comes from
	estimating the treated-unit regression parameters
	$\lambda_i(d)$ and $\beta(d)$ in the two blocks $d=0,1$; this is denoted
	by $V_{it}^{\mathrm{reg}}$. The second comes from estimating the common
	factor $f_t$ from the control units; this is denoted by $V_{it}^f$.
	Under the maintained cross-block and treated-control independence
	conditions, these components are asymptotically uncorrelated, so the
	total variance is estimated by
	\[
	\widehat V_{it}
	=
	\widehat V_{it}^{\mathrm{reg}}
	+
	\widehat V_{it}^{f}.
	\]

	\subsubsection{Treated-regression variance}
	
	To estimate the regression component of uncertainty,
	$V_{it}^{\mathrm{reg}}$, it is convenient to use the compact stacked
	representation. This yields a succinct sandwich-form variance estimator
	based on the stacked treated regression.
	
	The regression variance for block $d$ is estimated by the sandwich formula
	\begin{equation}
		\widehat{\Var}(\widehat\theta_d)=
		(\widehat Z_d'\widehat Z_d)^{-1}
		\widehat Z_d'\widehat\Omega_d\widehat Z_d
		(\widehat Z_d'\widehat Z_d)^{-1},
		\label{eq:feasible-sandwich-cov-d}
	\end{equation}
	where
	\begin{equation}
		\widehat\Omega_d  =
		\operatorname{diag}\!\left(\widehat \varepsilon_d\widehat \varepsilon_d'\right),
		\qquad
		\widehat \varepsilon_d
		=
		Y_d-\widehat Z_d\widehat\theta_d .
		\label{eq:omega-diagonal-residuals}
	\end{equation}
	Thus, $\widehat\Omega_d$ is the diagonal matrix whose diagonal entries
	are the squared residuals from the feasible treated regression.
	More generally, the matrix
	$\widehat Z_d'\widehat\Omega_d\widehat Z_d$
	can be replaced by a clustered or HAC estimator of the same object,
	depending on the dependence structure allowed for the treated-regression
	errors.
	
	For treated unit $i$ and date $t$, define
	\begin{equation}
		c_{i,t,d}
		:=
		\bigl(0',\ldots,0',f_t',0',\ldots,0',X_{it}'\bigr)',
		\label{eq:contrast-vector-true-factor}
	\end{equation}
	where $f_t$ appears in the block corresponding to $\lambda_i(d)$ and
	$X_{it}$ appears in the common $\beta(d)$ block. Its feasible analogue is
	\[
	\widehat c_{i,t,d}
	:=
	\bigl(
	0',\ldots,0',\widehat f_t',0',\ldots,0',X_{it}'
	\bigr)' .
	\]
	Then
	\begin{equation}
		\widehat c_{i,t,d}'\widehat\theta_d
		=
		\widehat f_t'\widehat\lambda_i(d)
		+
		X_{it}'\widehat\beta(d).
		\label{eq:contrast-equals-block-fit}
	\end{equation}
	The corresponding block-specific regression variance is
	\begin{equation}
		\widehat V_{it,d}^{\mathrm{reg}}
		=
		\widehat c_{i,t,d}'
		\widehat{\Var}(\widehat\theta_d)
		\widehat c_{i,t,d}.
		\label{eq:block-reg-var}
	\end{equation}
	This variance includes the uncertainty from estimating
	$\lambda_i(d)$, the uncertainty from estimating $\beta(d)$, and their
	covariance within block $d$. But it does not include the uncertainty from the estimated factors, which is considered in
	the next subsection.

	Because the pre- and post-treatment treated-regression score vectors are
	asymptotically independent under the maintained cross-block condition, the
	blockwise regression variances add.
	Hence the
	regression-estimation component of the variance of
	$\widehat\tau_{it}^{*}$ is estimated by
	\begin{equation}
		\widehat V_{it}^{\mathrm{reg}}
		=
		\widehat V_{it,1}^{\mathrm{reg}}
		+
		\widehat V_{it,0}^{\mathrm{reg}}.
		\label{eq:reg-var-sum}
	\end{equation}
	This additive structure is made explicit by the expansion of
	$\widehat\tau_{it}^{*}-\tau_{it}^{*}$ in \eqref{eq:tau-expansion} below.

	\subsubsection{Variance of estimated factors}

	The estimated factors are obtained from the control units using the
	interactive-effects estimator. For each fixed $t$, the factor estimator
	has the first-order representation
	\begin{equation}
		\widehat f_t-f_t
		=
		\left(
		\frac{1}{n_1}
		\sum_{k=n_0+1}^{n}
		\lambda_k(0)\lambda_k(0)'
		\right)^{-1}
		\frac{1}{n_1}
		\sum_{k=n_0+1}^{n}
		\lambda_k(0)\varepsilon_{kt}(0)
		+O_p\left(\frac 1 {n_1} +\frac 1 T\right)
		\label{eq:factor-representation}
	\end{equation}
	Thus, under the assumptions,
	$\widehat f_t$ is asymptotically normal. The estimation error in the
	common slope coefficient does not affect this leading
	representation, because both $n_1$ and $T$ diverge and the slope
	estimator converges at the faster rate $\sqrt{n_1T}$.
	
	Under the maintained cross-sectional independence condition for control-unit
	errors, define
	\[
	Q_\lambda
	=
	\frac{1}{n_1}
	\sum_{k=n_0+1}^{n}
	\lambda_k(0)\lambda_k(0)',
	\qquad
	S_t
	=E \Big[
	\frac{1}{n_1}
	\sum_{k=n_0+1}^{n}
	\lambda_k(0)\lambda_k(0)'
	\varepsilon_{kt}(0)^2 \Big] .
	\]
	The factor leading expansion has variance
	\[ \operatorname{Var}(\widehat f_t)= \frac 1 {n_1}
	Q_\lambda^{-1}S_t Q_\lambda^{-1}.
	\]
	The feasible estimator is
	\[
	\widehat{\operatorname{Var}}(\widehat f_t)
	=
	\frac{1}{n_1}
	\widehat Q_\lambda^{-1}
	\widehat S_t
	\widehat Q_\lambda^{-1},
	\]
	where
	\[
	\widehat Q_\lambda
	=
	\frac{1}{n_1}
	\sum_{k=n_0+1}^{n}
	\widehat\lambda_k(0)\widehat\lambda_k(0)',
	\qquad
	\widehat S_t
	=
	\frac{1}{n_1}
	\sum_{k=n_0+1}^{n}
	\widehat\lambda_k(0)\widehat\lambda_k(0)'
	\widehat\varepsilon_{kt}(0)^2 .
	\]
	Here,
	\[
	\widehat\varepsilon_{kt}(0)
	=
	Y_{kt}
	-
	\widehat\lambda_k(0)'\widehat f_t
	-
	X_{kt}'\widehat\beta_c,
	\qquad
	k=n_0+1,\ldots,n.
	\]
	and $\widehat \beta_c$ is the estimated control-unit slope coefficient.
	
	By \eqref{eq:tau-expansion} below, the contribution of factor estimation error to the treatment-effect estimator is
	$[\lambda_i(1)-\lambda_i(0)]'(\widehat f_t-f_t)$.
	Let \[
	\Delta\lambda_i
	:=
	\lambda_i(1)-\lambda_i(0),
	\qquad
	\widehat{\Delta\lambda}_i
	:=
	\widehat\lambda_i(1)-\widehat\lambda_i(0).
	\]
	The population and feasible factor-variance components are
	\begin{equation}
		V_{it}^{f}
		=
		\Delta\lambda_i'
		\operatorname{Var}(\widehat f_t-f_t)
		\Delta\lambda_i,
		\qquad
		\widehat V_{it}^{f}
		=
		\widehat{\Delta\lambda}_i'
		\widehat{\operatorname{Var}}(\widehat f_t)
		\widehat{\Delta\lambda}_i.
		\label{eq:factor-var}
	\end{equation}
	
	Combining the regression-estimation and factor-estimation components, the
	estimated variance of the systematic treatment effect is
	\[
	\widehat V_{it}
	=
	\widehat V_{it}^{\mathrm{reg}}
	+
	\widehat V_{it}^{f}.
	\]
	The corresponding standard error is
	\[
	\widehat{\operatorname{se}}\!\left(\widehat\tau_{it}^{*}\right)
	=
	\widehat V_{it}^{1/2}.
	\]

	\subsection{Assumptions}

	In terms of inferential theory, we assume $T_d\to\infty$ for $d=0,1$ and
	$n_1\to\infty$, with
	\begin{equation}
		\frac{\sqrt{n_1}}{T}\to0,
		\qquad
		\frac{\sqrt{T}}{n_1}\to0.
		\label{eq:main-rate-controls}
	\end{equation}
	The number of treated units $n_0$ may be either fixed or diverging. When
	$n_0\to\infty$, the following additional rate condition will be used in the
	case where the post-treatment factor at date $t$ is zero $(f_t=0)$:
	\begin{equation}
		\label{eq:pooled-slope-rate}
		\frac{\sqrt{n_0T_d}}{T}\to0,
		\qquad
		\frac{n_0}{n_1}\to0,
		\qquad
		\frac{\sqrt{n_0T_d}}{n_1}\to0,
		\qquad d=0,1 .
	\end{equation}
	
	This restriction is invoked only in the knife-edge case $f_t=0$ with
	$n_0\to\infty$. In that case, the cross-sectionally pooled slope determines the
	leading regression-side variance, while \eqref{eq:pooled-slope-rate} makes the
	higher-order terms negligible; this point will be made more precise in the
	proof.
	
	Let $0<M<\infty$ denote a generic
	constant, not depending on $n$ or $T$.

	\begin{assumption}[Factor moment nonsingularity]
		\label{ass:factor-moments}
		The factors satisfy $E\|f_t\|^4\leq M<\infty$ and
		\[
		\frac1T\sum_{t=1}^T f_tf_t'
		\overset{p}{\to}
		\Sigma_f,
		\qquad
		\Sigma_f>0 .
		\]
		Moreover, for each $d\in\{0,1\}$,
		\[
		Q_d
		:=
		\frac{F_d'F_d}{T_d}
		\overset{p}{\to}
		Q_{f,d},
		\qquad
		Q_{f,d}>0 .
		\]
	\end{assumption}
	
	\begin{assumption}[Treated projected-design moment]
		\label{ass:treated-design}
		For each $d\in\{0,1\}$, let
		\[
		M_{F_d}
		:=
		I_{T_d}
		-
		F_d(F_d'F_d)^{-1}F_d'
		\]
		and
		\[
		S_{xx,F,d}
		:=
		\frac1{n_0T_d}
		\sum_{j=1}^{n_0}
		X_{j,d}'M_{F_d}X_{j,d}.
		\]
		Then $S_{xx,F,d}\overset{p}{\to}S_{xx,F,d}^{0}$, where $S_{xx,F,d}^{0}>0$.
	\end{assumption}

	\begin{assumption}[Weak serial correlation]
		\label{ass:weak-correlation}
		The error term $\varepsilon_{it}(d)$ is independent over $i$. For each block $d\in\{0,1\}$,
		\[
		E[\varepsilon_{i,d}\mid X_{i,d},F_d,\lambda_i(d)]=0 .
		\]
		In addition, the following conditions hold.
		\begin{enumerate}
			\item $E|\varepsilon_{it}(d)|^8\leq M$, uniformly over $(i,t,d)$.
			
			\item Let $\sigma_{i,ts}(d):=E[\varepsilon_{it}(d)\varepsilon_{is}(d)]$.
			There exist constants $c_{ts}$ such that
			$|\sigma_{i,ts}(d)|\leq c_{ts}$ for all $(i,t,s,d)$, and
			$T^{-1}\sum_{t,s=1}^{T}c_{ts}\leq M$.

			\item For the control-unit errors $\varepsilon_{it}(0)$, for all $(t,s)$,
			\[
			E\left|
			\frac1{\sqrt{n_1}}
			\sum_{i=n_0+1}^{n}
			\left[
			\varepsilon_{it}(0)\varepsilon_{is}(0)
			-
			E\{\varepsilon_{it}(0)\varepsilon_{is}(0)\}
			\right]
			\right|^4
			\leq M .
			\]
		\end{enumerate}
		The  fourth-moment bound over $i=n_0+1,\ldots,n$ is used
		for the control-unit factor-estimation expansion.
	\end{assumption}
	
	\begin{assumption}[Treated-unit score CLT]
		\label{ass:treated-score-clt}
		For each fixed treated unit $i$ and block $d\in\{0,1\}$, define
		\[
		\mathcal S_{i,d}^{\lambda}
		:=
		\frac1{\sqrt{T_d}}F_d'\varepsilon_{i,d},
		\qquad
		\mathcal S_d^{\beta}
		:=
		\frac1{\sqrt{n_0T_d}}
		\sum_{j=1}^{n_0}
		X_{j,d}'M_{F_d}\varepsilon_{j,d}.
		\]
		Then $\mathcal S_{i,d}^{\lambda}\Rightarrow N(0,\Phi_{i,d})$ and
		$\mathcal S_d^{\beta}\Rightarrow N(0,\Psi_d)$. Moreover,
		\[
		\begin{pmatrix}
			\mathcal S_{i,d}^{\lambda}\\
			\mathcal S_d^{\beta}
		\end{pmatrix}
		\Rightarrow
		N(0,\Xi_{i,d}),
		\qquad
		\Xi_{i,d}>0.
		\]
		For $d=0,1$, the block score vectors are asymptotically independent
		across $d$; equivalently, the joint limit of the $d=0$ and $d=1$
		score vectors is Gaussian with block-diagonal covariance.
	\end{assumption}
	
	\begin{remark} We note that the block structure of $\Xi_{i,d}$ depends on the growth of
		$n_0$. When $n_0$ is fixed, the two score components are asymptotically
		correlated. For example, when $n_0=1$, both
		$\mathcal S_{i,d}^{\lambda}$ and $\mathcal S_d^{\beta}$ are functions of
		the same error vector $\varepsilon_{i,d}$ (with $i=1$). When $n_0\to\infty$, the
		contribution of any fixed treated unit to $\mathcal S_d^{\beta}$ is
		negligible, so the asymptotic covariance between
		$\mathcal S_{i,d}^{\lambda}$ and $\mathcal S_d^{\beta}$ vanishes. In that
		case, $\Xi_{i,d}$ is block diagonal. Also, for treated units, the pre- and post-treatment periods are
		non-overlapping blocks, and their score vectors are assumed to be
		asymptotically independent.
	\end{remark}

	\begin{assumption}[Control-unit moment nonsingularity and score CLT]
		\label{ass:control-score-clt}
		Define
		\[
		Q_{\lambda}
		:=
		\frac1{n_1}
		\sum_{k=n_0+1}^{n}
		\lambda_k(0)\lambda_k(0)' .
		\]
		Then $Q_{\lambda}\to Q_\lambda^0$, where $Q_\lambda^0>0$.
		For each fixed $t$, define
		\[
		\mathcal S_t^f
		:=
		\frac1{\sqrt{n_1}}
		\sum_{k=n_0+1}^{n}
		\lambda_k(0)\varepsilon_{kt}(0).
		\]
		Then $\mathcal S_t^f\Rightarrow N(0,\Gamma_t)$, where $\Gamma_t>0$.
	\end{assumption}

	\subsection{Proof of Proposition~\ref{prop:prop-1}}
	To prove Proposition 1, we need a few lemmas.
	
	\begin{lemma}[Blockwise factor-error bounds used in the proof]
		\label{lem:blockwise-factor-error-bounds}
		Suppose the improved factor-rotation conditions hold for the control-unit factor
		estimator, with $n_1$ units and $T$ time periods.  Then, uniformly for $d\in\{0,1\}$,
		\begin{align}
			\frac{\|\Delta_d\|_F^2}{T_d}
			&=
			\Op\left(\frac1{n_1}+\frac1{T^2}\right),
			\label{eq:block-sq-norm}\\
			\frac{F_d'\Delta_d}{T_d}
			&=
			\Op\left(
			\frac1T+
			\frac1{\sqrt{n_1T_d}}+
			\frac1{n_1}
			\right),
			\label{eq:block-FDelta}\\
			\frac{\widehat F_d'\Delta_d}{T_d}
			&=
			\Op\left(
			\frac1T+
			\frac1{\sqrt{n_1T_d}}+
			\frac1{n_1}
			\right).
			\label{eq:block-hatFDelta}
		\end{align}
		More generally, if $A_d=(a_t')_{t\in\cT_d}$ is a block of bounded deterministic
		weights, or a block of regressors/residualized regressors satisfying $E\|a_t\|^2 \le M$ for all $t$, then
		\begin{equation}
			\frac{A_d'\Delta_d}{T_d}
			=
			\Op\left(
			\frac1T+
			\frac1{\sqrt{n_1T_d}}+
			\frac1{n_1}
			\right).
			\label{eq:block-ADelta}
		\end{equation}
		Let $\eta_d=(\eta_{d,t}')_{t\in\cT_d}$ be a block of zero mean random
		vectors, weakly correlated with bounded absolute sum of autocovariances
		and independent of the control-unit variables.
		Then a slightly sharper bound is
		\begin{equation}
			\frac{\eta_d'\Delta_d}{T_d}
			=
			\Op\left(
			\frac1{\sqrt{n_1T_d}}+
			\frac1{T\sqrt{T_d}}
			\right).
			\label{eq:block-zero-mean}
		\end{equation}
	\end{lemma}
	
	\begin{remark}
		The rate in \eqref{eq:block-sq-norm} is governed by the dimensions used to
		estimate the factors, namely the number of control units $n_1$ and the full
		time dimension $T$, rather than by the length $T_d$ of the sub-block. This is the blockwise analogue of the pointwise result
		in \citet{Bai_2003}: for each fixed $t$,
		$
		\widehat f_t-f_t = O_p(n_1^{-1/2}+T^{-1}).
		$
		Thus the average squared factor-estimation error over a block $\mathcal T_d$
		has the corresponding order $O_p(n_1^{-1}+T^{-2})$.
		Bai's corresponding bound is often stated with $1/T$ rather than $1/T^2$.
		The weaker $1/T$ rate is also sufficient for the present paper.
		A direct proof of \eqref{eq:block-sq-norm} is straightforward.
		
		The cross-product bounds \eqref{eq:block-FDelta} and
		\eqref{eq:block-hatFDelta} follow from \citet{Bai_2003}, while
		\eqref{eq:block-ADelta} follows from \citet{Bai_Ng_2006}. Although
		\eqref{eq:block-FDelta} and \eqref{eq:block-hatFDelta} are special cases of
		\eqref{eq:block-ADelta}, we state them separately because they are used
		repeatedly throughout the proof.

		The cross-product bounds in \eqref{eq:block-FDelta}-\eqref{eq:block-zero-mean} are sharper
		than what would follow from a direct Cauchy--Schwarz argument.
		They are weighted averages of $\hat f_t-f_t$. The bound improves with block size by  averaging out errors, so
		it depends on $T_d$  via $1/\sqrt{n_1 T_d}$.
		Since the required bounds follow directly from the
		existing factor-estimation literature, we omit the proof.
		In our application, $A_d$ will be the treated-unit covariate block
		$X_{i,d}$, while $\eta_d$ will be the treated-unit regression error block
		$\varepsilon_{i,d}$.
	\end{remark}
	
	Define
	\begin{equation}
		b_{nT,d}
		=
		\frac1T+
		\frac1{\sqrt{n_1T_d}}+
		\frac1{n_1},
		\label{eq:bntd}
	\end{equation}
	This is the rate that we use most of the time. Also define
	\begin{equation}
		\widetilde b_{nT,d}
		=
		\frac1{\sqrt{n_1T_d}}+
		\frac1{T\sqrt{T_d}}.
		\label{eq:bntd-zero-mean}
	\end{equation}

	\begin{lemma}[Pooled slope expansion]
		\label{lem:pooled-slope-expansion}
		Fix $d\in\{0,1\}$. Under the conditions of
		Proposition~\ref{prop:prop-1}, we have
		\begin{equation}
			\widehat\beta(d)-\beta(d)
			=
			\left\{
			\sum_{j=1}^{n_0}X_{j,d}'M_{F_d}X_{j,d}
			\right\}^{-1}
			\sum_{j=1}^{n_0}X_{j,d}'M_{F_d}\varepsilon_{j,d}
			+
			\Op(b_{nT,d}),
			\label{eq:beta-representation-3}
		\end{equation}
		where $b_{nT,d}$ is defined in \eqref{eq:bntd}.
	\end{lemma}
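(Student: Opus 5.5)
We start from the closed-form expression \eqref{eq:beta-hat} and substitute the feasible block model $Y_{j,d}=\widehat F_d\lambda_j(d)+X_{j,d}\beta(d)+\widetilde\varepsilon_{j,d}$. Since $M_{\widehat F_d}\widehat F_d=0$, this gives
\[
\widehat\beta(d)-\beta(d)
=
\Bigl(\sum_{j=1}^{n_0}X_{j,d}'M_{\widehat F_d}X_{j,d}\Bigr)^{-1}
\sum_{j=1}^{n_0}X_{j,d}'M_{\widehat F_d}\bigl\{\varepsilon_{j,d}-\Delta_d\lambda_j(d)\bigr\}.
\]
The task is then to replace $M_{\widehat F_d}$ by $M_{F_d}$ in both the denominator and the numerator, and to show that the term involving $\Delta_d\lambda_j(d)$ is negligible. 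All pieces are normalized by $n_0T_d$. By Assumption~\ref{ass:treated-design}, the denominator $S_{xx,F,d}$ has a positive definite limit, so it suffices to bound the normalized differences.

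\emph{Step 1: the projection difference.} We write
\[
M_{\widehat F_d}-M_{F_d}=P_{F_d}-P_{\widehat F_d}
\]
and expand $P_{\widehat F_d}$ around $P_{F_d}$ using $\widehat F_d=F_d+\Delta_d$, with $(\widehat F_d'\widehat F_d/T_d)^{-1}=Q_d^{-1}+O_p(b_{nT,d})$. This inverse expansion follows from \eqref{eq:block-FDelta}, \eqref{eq:block-sq-norm} and Assumption~\ref{ass:factor-moments}. Then, for each $j$,
\[
\frac{1}{T_d}X_{j,d}'(M_{\widehat F_d}-M_{F_d})X_{j,d}
\]
is a sum of terms, each containing at least one factor of the form $X_{j,d}'\Delta_d/T_d$, $F_d'\Delta_d/T_d$, or $\Delta_d'\Delta_d/T_d$. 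Each such factor is multiplied by $O_p(1)$ quantities such as $X_{j,d}'F_d/T_d$. By \eqref{eq:block-ADelta} and \eqref{eq:block-sq-norm} every term is $O_p(b_{nT,d})$, using $1/n_1+1/T^2\le b_{nT,d}$. Averaging over $j$ keeps this order, provided we assume moment bounds on $X_{jt}$ that hold uniformly in $j$. The denominators therefore differ by $O_p(b_{nT,d})$.

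\emph{Step 2: the numerator.} We split the normalized numerator into three pieces:
\begin{enumerate}[label=(\roman*)]
\item the main term $(n_0T_d)^{-1}\sum_j X_{j,d}'M_{F_d}\varepsilon_{j,d}$;
\item the projection-difference term $(n_0T_d)^{-1}\sum_j X_{j,d}'(M_{\widehat F_d}-M_{F_d})\varepsilon_{j,d}$;
\item the factor-error term $-(n_0T_d)^{-1}\sum_j X_{j,d}'M_{\widehat F_d}\Delta_d\lambda_j(d)$.
\end{enumerate}

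For (i), the score CLT in Assumption~\ref{ass:treated-score-clt} gives order $O_p((n_0T_d)^{-1/2})$. Since this is $O_p(1)$, it can be combined with the $O_p(b_{nT,d})$ denominator error from Step 1 into an $O_p(b_{nT,d})$ remainder.

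For (ii), the expansion from Step 1 produces terms such as $\varepsilon_{j,d}'\Delta_d/T_d$ and $\varepsilon_{j,d}'F_d/T_d$. Treated errors are independent of the control units, so we use the sharper bound \eqref{eq:block-zero-mean} for the former. We use $F_d'\varepsilon_{j,d}/T_d=O_p(T_d^{-1/2})$ from Assumption~\ref{ass:treated-score-clt} for the latter. Together these give a contribution of order at most $b_{nT,d}$.

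For (iii), we write $M_{\widehat F_d}\Delta_d=M_{\widehat F_d}(\widehat F_d-F_d)=-M_{\widehat F_d}F_d$. This equals
\[
-(M_{\widehat F_d}-M_{F_d})F_d=(P_{\widehat F_d}-P_{F_d})F_d ,
\]
which is of order $\Delta_d$ after projection. By \eqref{eq:block-ADelta} applied to $A_d=X_{j,d}$, and to its projection on $F_d$, the normalized term is $O_p(b_{nT,d})$, uniformly in $j$ given bounded $\lambda_j(d)$.

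\emph{Step 3: assembling.} Multiplying the inverse denominator by the numerator gives the representation \eqref{eq:beta-representation-3} with remainder $O_p(b_{nT,d})$.

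The main obstacle is the uniformity over $j$ when $n_0\to\infty$. The bounds in Lemma~\ref{lem:blockwise-factor-error-bounds} are stated for a single block $A_d$, whereas we need averages over $j$ of $X_{j,d}'\Delta_d\lambda_j(d)/T_d$. We will handle this by applying \eqref{eq:block-ADelta} once to the averaged weights. For term (iii), for instance, we take $A_d=n_0^{-1}\sum_j X_{j,d}\lambda_j(d)'$ (suitably vectorized), which has bounded second moments. This avoids any union bound over $j$.
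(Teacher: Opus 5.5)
Your proposal is correct and follows essentially the same route as the paper. Both arguments:
\begin{itemize}
\item reduce to $\widehat S_{xx,d}^{-1}(n_0T_d)^{-1}\sum_j X_{j,d}'M_{\widehat F_d}\{\varepsilon_{j,d}-\Delta_d\lambda_j(d)\}$;
\item bound the factor-error term by applying \eqref{eq:block-ADelta} to the averaged array $n_0^{-1}\sum_j X_{j,d}\lambda_j(d)'$;
\item replace $M_{\widehat F_d}$ by $M_{F_d}$ in the score through a first-order projector expansion, using the sharper bound \eqref{eq:block-zero-mean} for $\Delta_d'\varepsilon_{j,d}/T_d$;
\item replace the Gram matrix and its inverse at an $\Op(b_{nT,d})$ cost.
\end{itemize}
Your rewriting $M_{\widehat F_d}\Delta_d=(P_{\widehat F_d}-P_{F_d})F_d$ is only an identity and adds nothing beyond the paper's direct bound, but it is harmless.
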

	
	This lemma states that, up to an $\Op(b_{nT,d})$ remainder,
	$\widehat \beta(d)-\beta(d)$ has the same expansion as in the case with
	known $F_d$.

	\begin{remark}
		\label{rem:pooled-slope-remainder}
		We make the following observations about
		Lemma~\ref{lem:pooled-slope-expansion}. The leading stochastic term in the
		pooled-slope expansion is of order $1/\sqrt{n_0T_d}$, while the remainder is
		of order $O_p(b_{nT,d})$. Therefore, if one wanted the leading term in the
		pooled-slope expansion itself to dominate the remainder, one would require
		\[
		\sqrt{n_0T_d}\,b_{nT,d}\to0 .
		\]
		By the definition of $b_{nT,d}$, this is implied by
		\eqref{eq:pooled-slope-rate}.
		
		Condition \eqref{eq:pooled-slope-rate} is used only for the knife-edge case
		$n_0\to\infty$ and $f_t=0$. In that case, the loading-estimation component of
		the treated-regression variance vanishes, and the relevant variance scale is
		generally $(n_0T_d)^{-1}$, provided the covariate component is nondegenerate.
		Even when $n_0\to\infty$, if $f_t\neq0$, condition
		\eqref{eq:pooled-slope-rate} is not needed for pointwise inference on
		$\tau_{it}^*$. The details are given in the proof.
	\end{remark}
	
	\begin{remark}
		\label{rem:pooled-slope-not-essential}
		For pointwise inference on a fixed treated unit, it is not essential to
		estimate the treated-unit slope $\beta(d)$ by pooling across treated
		units. One could instead estimate the slope separately for that treated
		unit, in which case the argument reduces to the fixed-$n_0$, single-unit
		version.
	\end{remark}

	\noindent
	\begin{proof}[Proof of Lemma \ref{lem:pooled-slope-expansion}]
		Let \begin{equation}
			\widehat S_{xx,d}
			=
			\frac1{n_0T_d}
			\sum_{j=1}^{n_0}X_{j,d}'M_{\widehat F_d}X_{j,d},
			\qquad
			S_{xx,F,d}
			=
			\frac1{n_0T_d}
			\sum_{j=1}^{n_0}X_{j,d}'M_{F_d}X_{j,d}.
			\label{eq:Sxx-def}
		\end{equation}
		Substituting
		$Y_{j,d}=X_{j,d}\beta(d)+F_d\lambda_j(d)+\varepsilon_{j,d}$ into the
		estimator $\widehat \beta(d)$ gives
		\[
		\widehat\beta(d)-\beta(d)
		=
		\widehat S_{xx,d}^{-1} \,
		\frac{1}{n_0T_d}
		\sum_{j=1}^{n_0}
		X_{j,d}'M_{\widehat F_d}
		\{F_d\lambda_j(d)+\varepsilon_{j,d}\}.
		\]
		Since $F_d=\widehat F_d-\Delta_d$ and
		$M_{\widehat F_d}\widehat F_d=0$, we have
		$M_{\widehat F_d}F_d=-M_{\widehat F_d}\Delta_d$. Hence
		\[
		\widehat\beta(d)-\beta(d)
		=
		\widehat S_{xx,d}^{-1}
		\frac{1}{n_0T_d}
		\sum_{j=1}^{n_0}
		X_{j,d}'M_{\widehat F_d}\varepsilon_{j,d}
		-
		\widehat S_{xx,d}^{-1}
		\frac{1}{n_0T_d}
		\sum_{j=1}^{n_0}
		X_{j,d}'M_{\widehat F_d}\Delta_d\lambda_j(d).
		\]
		
		Applying Lemma~\ref{lem:blockwise-factor-error-bounds} with the averaged
		array
		$
		\bar A_d=\frac1{n_0}\sum_{j=1}^{n_0}X_{j,d}\lambda_j(d)'$ componentwise
		gives
		\[
		\frac1{n_0T_d}
		\sum_{j=1}^{n_0}X_{j,d}'M_{\widehat F_d}\Delta_d\lambda_j(d)
		=
		O_p(b_{nT,d}).
		\]
		Together with $\widehat S_{xx,d}^{-1}= O_p(1)$, we have
		\[
		\widehat\beta(d)-\beta(d)
		=
		\widehat S_{xx,d}^{-1}
		\frac{1}{n_0T_d}
		\sum_{j=1}^{n_0}
		X_{j,d}'M_{\widehat F_d}\varepsilon_{j,d} +O_p(b_{nT,d}).
		\]
		Next, we show $M_{\widehat F_d}$ can be replaced by $M_{F_d}$.
		Let
		\[
		R_d:=
		\frac1{n_0T_d}
		\sum_{j=1}^{n_0}
		X_{j,d}'(M_{\widehat F_d}-M_{F_d})\varepsilon_{j,d}.
		\]
		Since $M_{\widehat F_d}-M_{F_d}=P_{F_d}-P_{\widehat F_d}$, $\widehat F_d=F_d+\Delta_d$, and $Q_d=F_d'F_d/T_d$, the standard projector expansion gives
		\[
		P_{F_d}-P_{\widehat F_d}
		=
		-
		M_{F_d}\Delta_dQ_d^{-1}\frac{F_d'}{T_d}
		-
		F_dQ_d^{-1}\frac{\Delta_d'M_{F_d}}{T_d}
		+
		\Upsilon_d,\]
		where $
		\|\Upsilon_d\|
		= O_p\!\left(\frac{\|\Delta_d\|^2}{T_d}\right),$
		here $\|\cdot\|$ denotes the spectral norm (the largest singular value). Therefore
		\[
		R_d=R_{1d}+R_{2d}+R_{3d},
		\]
		where
		\[
		R_{1d}
		=
		-\frac1{n_0}
		\sum_{j=1}^{n_0}
		\left(
		\frac{X_{j,d}'M_{F_d}\Delta_d}{T_d}
		\right)
		Q_d^{-1}
		\left(
		\frac{F_d'\varepsilon_{j,d}}{T_d}
		\right),
		\]
		\[
		R_{2d}
		=
		-\frac1{n_0}
		\sum_{j=1}^{n_0}
		\left(
		\frac{X_{j,d}'F_d}{T_d}
		\right)
		Q_d^{-1}
		\left(
		\frac{\Delta_d'M_{F_d}\varepsilon_{j,d}}{T_d}
		\right),
		\]
		and
		\[
		R_{3d}
		=
		\frac1{n_0T_d}
		\sum_{j=1}^{n_0}
		X_{j,d}'\Upsilon_d\varepsilon_{j,d}.
		\]
		Since $Q_d^{-1}=O_p(1)$,   Lemma~\ref{lem:blockwise-factor-error-bounds}
		gives
		$X_{j,d}'M_{F_d}\Delta_d/T_d=O_p(b_{nT,d})$, while
		$F_d'\varepsilon_{j,d}/T_d=O_p(T_d^{-1/2})$. Hence
		\[
		R_{1d}
		=
		O_p\!\left(\frac{b_{nT,d}}{\sqrt{T_d}}\right)
		=
		O_p(b_{nT,d}).
		\]
		
		For $R_{2d}$, use $M_{F_d}=I_{T_d}-P_{F_d}$ to write
		\[
		\frac{\Delta_d'M_{F_d}\varepsilon_{j,d}}{T_d}
		=
		\frac{\Delta_d'\varepsilon_{j,d}}{T_d}
		-
		\frac{\Delta_d'F_d}{T_d}
		Q_d^{-1}
		\frac{F_d'\varepsilon_{j,d}}{T_d}.
		\]
		By the last part of Lemma~\ref{lem:blockwise-factor-error-bounds},
		$\Delta_d'\varepsilon_{j,d}/T_d
		=
		O_p\!\bigl((n_1T_d)^{-1/2}+(T\sqrt{T_d})^{-1}\bigr)$.
		Note $F_d'\varepsilon_{j,d}/T_d=O_p(T_d^{-1/2})$, and  by Lemma~\ref{lem:blockwise-factor-error-bounds},
		$\Delta_d'F_d/T_d=O_p(b_{nT,d})$.
		Therefore
		\[
		\frac{\Delta_d'M_{F_d}\varepsilon_{j,d}}{T_d}
		=
		O_p\!\left(
		\frac1{\sqrt{n_1T_d}}
		+
		\frac1{T\sqrt{T_d}}
		+
		\frac{b_{nT,d}}{\sqrt{T_d}}
		\right)
		=
		O_p(b_{nT,d}).
		\]
		Since $X_{j,d}'F_d/T_d=O_p(1)$, it follows that
		$R_{2d}=O_p(b_{nT,d})$.
		
		Finally,  Lemma~\ref{lem:blockwise-factor-error-bounds} gives
		$\|\Delta_d\|_F^2/T_d
		=
		O_p(n_1^{-1}+T^{-2})$. Since $\|\Delta_d\|\le \|\Delta_d\|_F$,
		\[
		\|\Upsilon_d\|
		=
		O_p\!\left(\frac{\|\Delta_d\|^2}{T_d}\right)
		\le
		O_p\!\left(\frac{\|\Delta_d\|_F^2}{T_d}\right)
		=
		O_p\!\left(\frac1{n_1}+\frac1{T^2}\right)
		=
		O_p(b_{nT,d}).
		\]
		Under the average moment bound
		$(n_0T_d)^{-1}\sum_{j=1}^{n_0}\|X_{j,d}\|\,\|\varepsilon_{j,d}\|=O_p(1)$,
		we obtain $R_{3d}=O_p(b_{nT,d})$. Combining the three bounds yields
		$R_d=O_p(b_{nT,d})$, and hence
		\[
		\frac1{n_0T_d}
		\sum_{j=1}^{n_0}
		X_{j,d}'M_{\widehat F_d}\varepsilon_{j,d}
		=
		\frac1{n_0T_d}
		\sum_{j=1}^{n_0}
		X_{j,d}'M_{F_d}\varepsilon_{j,d}
		+
		O_p(b_{nT,d}).
		\]
		Next we show $\widehat S_{xx,d}^{-1}$ can be replaced by $S_{xx,F,d}^{-1}$. Notice
		\[
		\widehat S_{xx,d}-S_{xx,F,d}
		=
		\frac1{n_0T_d}
		\sum_{j=1}^{n_0}
		X_{j,d}'(M_{\widehat F_d}-M_{F_d})X_{j,d}.
		\]
		Using the projector-difference expansion derived above, together with Lemma~\ref{lem:blockwise-factor-error-bounds},
		the same argument as for the score replacement gives
		\[
		\widehat S_{xx,d}-S_{xx,F,d}
		=
		O_p(b_{nT,d}).
		\]
		Indeed, the two first-order projector terms are controlled by the
		Lemma~\ref{lem:blockwise-factor-error-bounds} bounds for $X_{j,d}'M_{F_d}\Delta_d/T_d$ and
		$\Delta_d'M_{F_d}X_{j,d}/T_d$, while the projector remainder is controlled by
		$\|\Upsilon_d\|=O_p(\|\Delta_d\|^2/T_d)$ and
		$\|\Delta_d\|_F^2/T_d=O_p(n_1^{-1}+T^{-2})$.
		
		Assume that $S_{xx,F,d}$ is nonsingular with eigenvalues bounded away from
		zero and that $b_{nT,d}=o(1)$.  Since
		\[
		\widehat S_{xx,d}^{-1}-S_{xx,F,d}^{-1}
		=
		\widehat S_{xx,d}^{-1}
		\left(S_{xx,F,d}-\widehat S_{xx,d}\right)
		S_{xx,F,d}^{-1},
		\]
		we have
		\[
		\left\|
		\widehat S_{xx,d}^{-1}-S_{xx,F,d}^{-1}
		\right\|
		\le
		\left\|\widehat S_{xx,d}^{-1}\right\|
		\left\|
		\widehat S_{xx,d}-S_{xx,F,d}
		\right\|
		\left\|S_{xx,F,d}^{-1}\right\|.
		\]
		Because $\|S_{xx,F,d}^{-1}\|=O_p(1)$ and
		$\|\widehat S_{xx,d}-S_{xx,F,d}\|=O_p(b_{nT,d})=o_p(1)$, Weyl's inequality
		implies that $\widehat S_{xx,d}$ is nonsingular with probability approaching
		one and $\|\widehat S_{xx,d}^{-1}\|=O_p(1)$.  Therefore
		\[
		\widehat S_{xx,d}^{-1}-S_{xx,F,d}^{-1}
		=
		O_p(b_{nT,d}).
		\]
		This implies that
		\[ \widehat S_{xx,d}^{-1}
		\frac1{n_0T_d}
		\sum_{j=1}^{n_0}X_{j,d}'M_{F_d}\varepsilon_{j,d}=
		S_{xx,F,d}^{-1}
		\frac1{n_0T_d}
		\sum_{j=1}^{n_0}X_{j,d}'M_{F_d}\varepsilon_{j,d}+ \Op(b_{nT,d})\cdot O_p((n_0 T_d)^{-1/2}) \]
		Thus replacing $\widehat S_{xx,d}^{-1}$ by $S_{xx,F,d}^{-1}$ adds another smaller order term. Combining results we have proved the lemma.
	\end{proof}

	\begin{lemma}[Unit-specific loading expansion]
		\label{lem:loading-expansion}
		Fix $d\in\{0,1\}$ and treated unit $i$.  Suppose the conditions of
		Proposition~\ref{prop:prop-1} hold. Then
		\begin{equation}
			\widehat\lambda_i(d)-\lambda_i(d)
			=
			(F_d' F_d)^{-1}F_d'\varepsilon_{i,d}
			-
			(F_d' F_d)^{-1} F_d' X_{i,d}\{\widehat\beta(d)-\beta(d)\}
			+ O_p(b_{nT,d}).
			\label{eq:lambda-representation}
		\end{equation}
	\end{lemma}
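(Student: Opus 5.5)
The plan is to start from the closed form \eqref{eq:lambda-hat} and substitute the true block model \eqref{eq:block-model}, written in feasible form via $F_d=\widehat F_d-\Delta_d$. This gives
\[
Y_{i,d}-X_{i,d}\widehat\beta(d)
=\widehat F_d\lambda_i(d)-\Delta_d\lambda_i(d)-X_{i,d}\{\widehat\beta(d)-\beta(d)\}+\varepsilon_{i,d},
\]
and therefore
\[
\widehat\lambda_i(d)-\lambda_i(d)
=(\widehat F_d'\widehat F_d)^{-1}\widehat F_d'\varepsilon_{i,d}
-(\widehat F_d'\widehat F_d)^{-1}\widehat F_d'X_{i,d}\{\widehat\beta(d)-\beta(d)\}
-(\widehat F_d'\widehat F_d)^{-1}\widehat F_d'\Delta_d\lambda_i(d).
\]
I will then handle the three terms separately. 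Throughout, I write each as a $T_d^{-1}$-normalized moment times $(\widehat F_d'\widehat F_d/T_d)^{-1}$.

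First, the denominator. Expand
\[
\widehat F_d'\widehat F_d/T_d=Q_d+F_d'\Delta_d/T_d+\Delta_d'F_d/T_d+\Delta_d'\Delta_d/T_d .
\]
By \eqref{eq:block-FDelta} and \eqref{eq:block-sq-norm}, this equals $Q_d+O_p(b_{nT,d})$. Since $Q_d\to Q_{f,d}>0$ by Assumption~\ref{ass:factor-moments}, the same Weyl/inverse-difference argument used in the proof of Lemma~\ref{lem:pooled-slope-expansion} gives $(\widehat F_d'\widehat F_d/T_d)^{-1}=Q_d^{-1}+O_p(b_{nT,d})$.

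The third (factor-error) term is immediate. By \eqref{eq:block-hatFDelta}, $\widehat F_d'\Delta_d/T_d=O_p(b_{nT,d})$, and $\lambda_i(d)$ is fixed, so this term is $O_p(b_{nT,d})$.

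For the first term, write $\widehat F_d'\varepsilon_{i,d}/T_d=F_d'\varepsilon_{i,d}/T_d+\Delta_d'\varepsilon_{i,d}/T_d$.
\begin{itemize}
\item The correction $\Delta_d'\varepsilon_{i,d}/T_d$ is $O_p(\widetilde b_{nT,d})=O_p(b_{nT,d})$ by \eqref{eq:block-zero-mean}. This bound uses that treated errors are independent of the control-unit variables, as is used in the slope lemma.
\item The leading part $F_d'\varepsilon_{i,d}/T_d$ is $O_p(T_d^{-1/2})$ by Assumption~\ref{ass:treated-score-clt}.
\end{itemize}
Multiplying by $Q_d^{-1}+O_p(b_{nT,d})$ therefore leaves $(F_d'F_d)^{-1}F_d'\varepsilon_{i,d}+O_p(b_{nT,d}T_d^{-1/2})+O_p(b_{nT,d})$.

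For the second term, I use \eqref{eq:block-ADelta} with $A_d=X_{i,d}$ to get $\widehat F_d'X_{i,d}/T_d=F_d'X_{i,d}/T_d+O_p(b_{nT,d})$. The leading part $F_d'X_{i,d}/T_d$ is $O_p(1)$ by the moment bounds. Lemma~\ref{lem:pooled-slope-expansion} gives $\widehat\beta(d)-\beta(d)=O_p((n_0T_d)^{-1/2}+b_{nT,d})=o_p(1)$. Hence replacing $(\widehat F_d'\widehat F_d)^{-1}\widehat F_d'X_{i,d}$ by $(F_d'F_d)^{-1}F_d'X_{i,d}$ costs $O_p(b_{nT,d})\cdot o_p(1)$. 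Collecting the three terms yields \eqref{eq:lambda-representation}.

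The main obstacle is the rotation bookkeeping. The bounds in Lemma~\ref{lem:blockwise-factor-error-bounds} hold for the rotated error $\Delta_d$, so $\widehat\lambda_i(d)$ must be compared with $\lambda_i(d)$ expressed in the same rotated coordinates, with $H^{-1}\lambda_i(d)$ relabelled as $\lambda_i(d)$. One must also check that the rotation matrix estimated from the full sample, rather than blockwise, does not introduce an additional error. I would handle this by fixing the population rotation once, as the notation section prescribes, so that all block quantities share it.

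A secondary point is the sharper bound on the cross term $\Delta_d'\varepsilon_{i,d}/T_d$. A plain Cauchy--Schwarz argument would give only $O_p(n_1^{-1/2}+T^{-1})$, which exceeds $b_{nT,d}$. Independence of $\varepsilon_{i,d}$ from the control panel is therefore essential, and I would verify that it is implied by the maintained treated--control independence.
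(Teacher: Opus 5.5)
Your proposal is correct and follows essentially the same route as the paper: you use the same three-term decomposition obtained by substituting $F_d=\widehat F_d-\Delta_d$ into \eqref{eq:lambda-hat}, bound the factor-error term with \eqref{eq:block-hatFDelta}, bound the inverse difference for $(\widehat F_d'\widehat F_d/T_d)^{-1}$, control $\Delta_d'\varepsilon_{i,d}/T_d$ with \eqref{eq:block-zero-mean}, and handle the slope term with \eqref{eq:block-ADelta} together with Lemma~\ref{lem:pooled-slope-expansion}. Your remarks on fixing a common population rotation and on needing treated--control independence for the sharper cross-term bound match the paper's treatment, and are slightly more explicit than it.
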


	Similar to the previous lemma, this lemma states that, up to an $\Op(b_{nT,d})$ remainder,
	$\widehat \lambda_i(d)-\lambda_i(d)$ has the same expansion as in the case with
	known $F_d$.
	
	\begin{proof}[Proof of Lemma \ref{lem:loading-expansion}]
		Substituting \eqref{eq:block-model} into \eqref{eq:lambda-hat} and using
		$F_d=\widehat F_d-\Delta_d$ gives
		\begin{align}
			\widehat\lambda_i(d)-\lambda_i(d)
			={}&
			(\widehat F_d'\widehat F_d)^{-1}\widehat F_d'\varepsilon_{i,d}
			-
			(\widehat F_d'\widehat F_d)^{-1}\widehat F_d'\Delta_d\lambda_i(d)
			\nonumber\\
			&-
			(\widehat F_d'\widehat F_d)^{-1}\widehat F_d'X_{i,d}
			\{\widehat\beta(d)-\beta(d)\}.
			\label{eq:lambda-expansion-raw}
		\end{align}
		By \eqref{eq:block-hatFDelta},       $\widehat F_d'\Delta_d/T_d = O_p(b_{nT,d})$, and $(\widehat F_d'\widehat F_d/T_d)^{-1}=O_p(1)$,  thus the second term  is $O_p(b_{nT,d})$.
		
		Next
		\begin{equation}
			\Big(\frac{\widehat F_d'\widehat F_d}{T_d}\Big)^{-1}
			-
			\Big(\frac{F_d'F_d}{T_d}\Big)^{-1} = \Big(\frac{\widehat F_d'\widehat F_d}{T_d}\Big)^{-1}
			\Big[\frac{F_d'F_d}{T_d}-\frac{\widehat F_d'\widehat F_d}{T_d} \Big]\Big(\frac{F_d'F_d}{T_d}\Big)^{-1} =O_p(b_{nT,d}).
			\label{eq:hatFhatF-FF}
		\end{equation}
		For the first term in \eqref{eq:lambda-expansion-raw},
		\begin{equation}
			\frac{\widehat F_d'\varepsilon_{i,d}}{T_d}
			=
			\frac{F_d'\varepsilon_{i,d}}{T_d}
			+
			\frac{\Delta_d'\varepsilon_{i,d}}{T_d}.
			\label{eq:hatF-eps-decomp}
		\end{equation}
		The second term in \eqref{eq:hatF-eps-decomp} is $O_p(\widetilde b_{nT,d})$ by \eqref{eq:block-zero-mean}. Thus we can rewrite
		the first term in \eqref{eq:lambda-expansion-raw} as $(\widehat F_d'\widehat F_d/T_d)^{-1} F_d' \varepsilon_{i,d}/T_d + O_p(\widetilde b_{nT,d})$.
		In view of \eqref{eq:hatFhatF-FF}, the first term in \eqref{eq:lambda-expansion-raw} can be written as
		\[ (\widehat F_d'\widehat F_d/T_d)^{-1} \widehat F_d' \varepsilon_{i,d}/T_d =
		(F_d'F_d/T_d)^{-1} F_d' \varepsilon_{i,d}/T_d +  O_p(b_{nT,d})O_p(T_d^{-1/2})+O_p(\widetilde b_{nT,d}) \]
		The sum of the two $O_p$ terms is dominated by $O_p(b_{nT,d})$.

		For the third term in \eqref{eq:lambda-expansion-raw},
		\[ \frac{ \widehat F_d'X_{i,d}}{T_d} = \frac{  F_d'X_{i,d}}{T_d}+  \frac{ \Delta_d'X_{i,d}}{T_d} \]
		the second term above is $O_p(b_{nT,d})$ by Lemma \ref{lem:blockwise-factor-error-bounds} with $A_d=X_{i,d}$.
		Thus
		\[
		\left[
		\left(\frac{\widehat F_d'\widehat F_d}{T_d}\right)^{-1}
		\frac{\widehat F_d'X_{i,d}}{T_d}
		-
		\left(\frac{F_d'F_d}{T_d}\right)^{-1}
		\frac{F_d'X_{i,d}}{T_d}
		\right]
		=
		O_p(b_{nT,d}).
		\]
		Moreover, by Lemma~\ref{lem:pooled-slope-expansion},
		\[
		\widehat\beta(d)-\beta(d)
		=
		O_p\left((n_0T_d)^{-1/2}+b_{nT,d}\right).
		\]
		Therefore,
		\begin{align*}
			&
			(\widehat F_d'\widehat F_d)^{-1}\widehat F_d'X_{i,d}
			\{\widehat\beta(d)-\beta(d)\}
			\\
			&\qquad =
			(F_d'F_d)^{-1}F_d'X_{i,d}
			\{\widehat\beta(d)-\beta(d)\}
			+
			O_p(b_{nT,d})
			O_p\left((n_0T_d)^{-1/2}+b_{nT,d}\right).
		\end{align*}
		The product of the two $O_p$ terms is dominated by $O_p(b_{nT,d})$. Hence
		\[
		(\widehat F_d'\widehat F_d)^{-1}\widehat F_d'X_{i,d}
		\{\widehat\beta(d)-\beta(d)\}
		=
		(F_d'F_d)^{-1}F_d'X_{i,d}
		\{\widehat\beta(d)-\beta(d)\}
		+
		O_p(b_{nT,d}).
		\]
		Combining results, we proved Lemma \ref{lem:loading-expansion}.
	\end{proof}

	Let $b_{nT}=\sum_{d=0}^1 b_{nT,d}$ then
	\begin{equation}
		b_{nT}
		=
		\frac{2}{T}
		+
		\frac{1}{\sqrt{n_1T_0}}
		+
		\frac{1}{\sqrt{n_1T_1}}
		+
		\frac{2}{n_1}.
		\label{eq:bnt-total}
	\end{equation}
	Assume $T_0\to\infty$, $T_1\to\infty$, and $T/n_1^2\to 0$. Then
	$T^{-1}=o\{(T_0^{-1}+T_1^{-1})^{1/2}\}$. Also, for each
	$d\in\{0,1\}$, $(n_1T_d)^{-1/2}=o(T_d^{-1/2})$, and therefore
	$(n_1T_d)^{-1/2}=o(T_0^{-1}+T_1^{-1})^{1/2}$. Finally, since
	$T_d\le T$, the condition $T/n_1^2\to0$ implies $T_d/n_1^2\to0$, or
	equivalently $n_1^{-1}=o(T_d^{-1/2})$. Hence
	$n_1^{-1}=o\{(T_0^{-1}+T_1^{-1})^{1/2}\}$. Combining these bounds,
	\begin{equation}
		b_{nT}
		=
		o\{(T_0^{-1}+T_1^{-1})^{1/2}\}.
		\label{eq:bnt-small-relative-time}
	\end{equation}
	This bound will be used below when $n_0$ is fixed or when $f_t\neq0$; in
	these cases the treated-regression variance has the usual $T_d^{-1}$ scale.
	When $n_0\to\infty$ and $f_t=0$, the leading regression variance is instead
	of order $(n_0T_d)^{-1}$, and the stronger condition in
	\eqref{eq:pooled-slope-rate} is required.
	Under \eqref{eq:pooled-slope-rate}, we have, with
	$b_{nT}=b_{nT,0}+b_{nT,1}$,
	\begin{equation}
		b_{nT}
		=
		o\left(
		\left\{(n_0T_0)^{-1}+(n_0T_1)^{-1}\right\}^{1/2}
		\right).
		\label{eq:bnt-pooled-slope-rate}
	\end{equation}

	\begin{proof}[\bf Proof of Proposition~\ref{prop:prop-1}]
		For $t\in\cT_1$, by adding and subtracting, we have the identity
		\begin{align}
			\widehat\tau_{it}^{*}-\tau_{it}^{*}
			={}&
			\Bigl[
			f_t'\{\widehat\lambda_i(1)-\lambda_i(1)\}
			+X_{it}'\{\widehat\beta(1)-\beta(1)\}
			\Bigr]
			\nonumber\\
			-{}&
			\Bigl[
			f_t'\{\widehat\lambda_i(0)-\lambda_i(0)\}
			+X_{it}'\{\widehat\beta(0)-\beta(0)\}
			\Bigr]
			\nonumber\\
			&+(\widehat f_t-f_t)'\{\lambda_i(1)-\lambda_i(0)\}+R_{it},
			\label{eq:tau-expansion}
		\end{align}
		where
		\begin{equation}
			R_{it}
			=
			(\widehat f_t-f_t)'
			\Bigl[
			\{\widehat\lambda_i(1)-\lambda_i(1)\}
			-\{\widehat\lambda_i(0)-\lambda_i(0)\}
			\Bigr].
			\label{eq:R-it-def}
		\end{equation}
		
		For each $d\in\{0,1\}$, Lemmas~\ref{lem:pooled-slope-expansion}
		and~\ref{lem:loading-expansion} imply
		\begin{equation}
			f_t'\{\widehat\lambda_i(d)-\lambda_i(d)\}
			+X_{it}'\{\widehat\beta(d)-\beta(d)\}
			= A_{it,d}+O_p(b_{nT,d}),
			\label{eq:treated-regression-Aitd}
		\end{equation}
		where
		\begin{align*}
			A_{it,d}
			={}&
			f_t'
			\left(\frac{F_d'F_d}{T_d}\right)^{-1}
			\frac{F_d'\varepsilon_{i,d}}{T_d}
			+h_{it,d}'S_{xx,F,d}^{-1}
			\frac{1}{n_0T_d}
			\sum_{j=1}^{n_0}X_{j,d}'M_{F_d}\varepsilon_{j,d},\\
			h_{it,d}'
			={}&
			X_{it}'
			-
			f_t'
			\left(\frac{F_d'F_d}{T_d}\right)^{-1}
			\frac{F_d'X_{i,d}}{T_d},
			\qquad
			S_{xx,F,d}
			=
			\frac{1}{n_0T_d}
			\sum_{j=1}^{n_0}X_{j,d}'M_{F_d}X_{j,d}.
		\end{align*}
		The term $A_{it,d}$ contains no estimated quantities.
		
		Substituting \eqref{eq:treated-regression-Aitd} into
		\eqref{eq:tau-expansion} gives
		\begin{equation}
			\widehat\tau_{it}^{*}-\tau_{it}^{*}
			=
			A_{it,1}-A_{it,0}
			+(\widehat f_t-f_t)'\{\lambda_i(1)-\lambda_i(0)\}
			+R_{it}+O_p(b_{nT}),
			\label{eq:hattau-tau-preR}
		\end{equation}
		where $b_{nT}=b_{nT,1}+b_{nT,0}$. By the pointwise factor expansion,
		\begin{equation}
			\widehat f_t-f_t=O_p(n_1^{-1/2})+O_p(T^{-1}),
			\label{eq:pointwise-factor-rate}
		\end{equation}
		and by Lemma~\ref{lem:loading-expansion},
		\begin{equation}
			\widehat\lambda_i(d)-\lambda_i(d)
			=O_p(T_d^{-1/2})+O_p\{(n_0T_d)^{-1/2}\}+O_p(b_{nT,d}).
			\label{eq:lambda-rate-for-remainder}
		\end{equation}
		The product in $R_{it}$ is therefore of smaller order than $O_p(b_{nT,d})$,
		and is thus absorbed into the same remainder bound. Hence
		\begin{equation}
			\widehat\tau_{it}^{*}-\tau_{it}^{*}
			=
			A_{it,1}-A_{it,0}
			+(\widehat f_t-f_t)'\{\lambda_i(1)-\lambda_i(0)\}
			+O_p(b_{nT}).
			\label{eq:hattau-tau}
		\end{equation}
		
		We next justify the asymptotic normality of $A_{it,d}$. In compact notation,
		\begin{equation*}
			A_{it,d}=c_{it,d}'(Z_d'Z_d)^{-1}Z_d'\varepsilon_d,
			\qquad
			V_{it,d}^{\rm reg}=\Var(A_{it,d}).
		\end{equation*}
		If $n_0$ is fixed, $A_{it,d}$ is a scalar contrast from a fixed-dimensional
		least-squares regression. By the finite-dimensional score CLT in the
		maintained assumptions,
		\begin{equation}
			\frac{A_{it,d}}{\{V_{it,d}^{\rm reg}\}^{1/2}}
			\Rightarrow N(0,1).
			\label{eq:Aitd-clt-fixed}
		\end{equation}
		By the cross-block asymptotic independence condition in
		Assumption~\ref{ass:treated-score-clt}, $A_{it,0}$ and $A_{it,1}$ are
		asymptotically independent. Therefore,
		\[
		\frac{A_{it,1}-A_{it,0}}
		{
			\{V_{it,1}^{\mathrm{reg}}+V_{it,0}^{\mathrm{reg}}\}^{1/2}
		}
		\Rightarrow N(0,1).
		\]
		Thus, in the fixed-$n_0$ case,
		\[
		V_{it}^{\mathrm{reg}}
		=
		V_{it,1}^{\mathrm{reg}}
		+
		V_{it,0}^{\mathrm{reg}} .
		\]

		When $n_0\to\infty$, write
		\begin{equation}
			A_{it,d}=A_{it,d}^{\lambda}+A_{it,d}^{\beta},
			\label{eq:Aitd-lambda-beta}
		\end{equation}
		where
		\begin{align*}
			A_{it,d}^{\lambda}
			&=
			f_t'(F_d'F_d)^{-1}F_d'\varepsilon_{i,d},\\
			A_{it,d}^{\beta}
			&=
			h_{it,d}'S_{xx,F,d}^{-1}
			\frac{1}{n_0T_d}
			\sum_{j=1}^{n_0}X_{j,d}'M_{F_d}\varepsilon_{j,d}.
		\end{align*}
		If $f_t\neq0$, then by $Q_d\overset{p}{\to}Q_{f,d}>0$ and $\Phi_{i,d}>0$,
		\begin{equation*}
			\Var(A_{it,d}^{\lambda})\asymp T_d^{-1},
		\end{equation*}
		and the time-series CLT for $A_{it,d}^{\lambda}$ applies. Since
		$A_{it,d}^{\beta}=O_p\{(n_0T_d)^{-1/2}\}=o_p(T_d^{-1/2})$, Slutsky's theorem
		gives
		\begin{equation}
			\frac{A_{it,d}}{\{V_{it,d}^{\rm reg}\}^{1/2}}
			\Rightarrow N(0,1).
			\label{eq:Aitd-clt-ft-nonzero}
		\end{equation}
		If $f_t=0$, then $A_{it,d}^{\lambda}=0$ and $A_{it,d}=A_{it,d}^{\beta}$.
		Under the covariate nondegeneracy condition,
		\begin{equation*}
			V_{it,d}^{\rm reg}=\Var(A_{it,d}^{\beta})\asymp (n_0T_d)^{-1},
		\end{equation*}
		and the pooled cross-sectional/time-series CLT gives
		\begin{equation}
			\frac{A_{it,d}}{\{V_{it,d}^{\rm reg}\}^{1/2}}
			\Rightarrow N(0,1).
			\label{eq:Aitd-clt-ft-zero}
		\end{equation}
		As in the fixed-$n_0$ case, the cross-block asymptotic independence
		condition in Assumption~\ref{ass:treated-score-clt} implies that
		$A_{it,0}$ and $A_{it,1}$ are asymptotically independent. Hence the
		regression variance remains
		\[
		V_{it}^{\mathrm{reg}}
		=
		V_{it,1}^{\mathrm{reg}}
		+
		V_{it,0}^{\mathrm{reg}} .
		\]
		Consequently, in the diverging-$n_0$ case as well,
		\[
		\frac{A_{it,1}-A_{it,0}}
		{\{V_{it}^{\mathrm{reg}}\}^{1/2}}
		\Rightarrow N(0,1).
		\]
		In addition, if $f_t=0$, we have $V_{it}^{\mathrm{reg}}\asymp (n_0T_1)^{-1} +(n_0T_0)^{-1}$.
		
		We now compare the remainders with that scale. If $f_t\neq0$, the preceding
		argument yields
		\begin{equation*}
			V_{it}^{\rm reg}\ge c(T_0^{-1}+T_1^{-1})
		\end{equation*}
		for some $c>0$. Therefore \eqref{eq:bnt-small-relative-time} implies
		\begin{equation}
			b_{nT}=o\{(V_{it}^{\rm reg})^{1/2}\}=o(V_{it}^{1/2}).
			\label{eq:bnt-negligible-ft-nonzero}
		\end{equation}
		This is the case in which no relative growth restriction between $n_0$ and
		$n_1$ is needed.

		If $f_t=0$, then the comparison depends on whether $n_0$ is fixed or
		diverging. When $n_0$ is fixed,
		\[
		V_{it}^{\rm reg}\ge c(T_0^{-1}+T_1^{-1})
		\]
		under the covariate nondegeneracy condition, because
		$(n_0T_d)^{-1}\asymp T_d^{-1}$. Hence \eqref{eq:bnt-small-relative-time}
		implies
		\begin{equation}
			b_{nT}=o\{(V_{it}^{\rm reg})^{1/2}\}=o(V_{it}^{1/2}).
			\label{eq:bnt-negligible-ft-zero-fixed}
		\end{equation}

		When $n_0\to\infty$ and $f_t=0$,
		\[
		V_{it}^{\rm reg}\ge c\{(n_0T_0)^{-1}+(n_0T_1)^{-1}\}
		\]
		under the covariate nondegeneracy condition. Combining this lower bound with
		\eqref{eq:bnt-pooled-slope-rate} gives
		\begin{equation}
			b_{nT}=o\{(V_{it}^{\rm reg})^{1/2}\}=o(V_{it}^{1/2}).
			\label{eq:bnt-negligible-ft-zero-diverging}
		\end{equation}
		Thus the same feasible expansion is valid in the $f_t=0$ case: for fixed
		$n_0$ under the main rate conditions \eqref{eq:main-rate-controls}, and for $n_0\to\infty$ under the
		additional restriction \eqref{eq:pooled-slope-rate}.

		Finally, by the factor representation in \eqref{eq:factor-representation},
		\begin{equation}
			\{\lambda_i(1)-\lambda_i(0)\}'(\widehat f_t-f_t)
			=
			\{\lambda_i(1)-\lambda_i(0)\}'Q_\lambda^{-1}
			\frac{1}{n_1}
			\sum_{k=n_0+1}^{n}
			\lambda_k(0)\varepsilon_{kt}(0)
			+O_p\left(\frac1{n_1}+\frac1T\right).
			\label{eq:factor-leading-in-proof}
		\end{equation}
		The leading factor-estimation term is asymptotically normal with variance
		\begin{equation*}
			V_{it}^f
			=
			\{\lambda_i(1)-\lambda_i(0)\}'
			\Var(\widehat f_t-f_t)
			\{\lambda_i(1)-\lambda_i(0)\}.
		\end{equation*}
		Its remainder in \eqref{eq:factor-leading-in-proof} is bounded by $O_p(b_{nT})$. Since the
		factor-estimation component uses the control-unit errors and the
		regression-estimation component uses the treated-unit errors, the maintained
		cross-sectional independence condition gives zero asymptotic covariance.
		Therefore the leading term in \eqref{eq:hattau-tau} is asymptotically normal
		with variance
		\begin{equation*}
			V_{it}=V_{it}^{\rm reg}+V_{it}^f.
		\end{equation*}
		Together with \eqref{eq:bnt-negligible-ft-nonzero} when $f_t\neq0$,
		\eqref{eq:bnt-negligible-ft-zero-fixed} when $f_t=0$ and $n_0$ is fixed,
		and \eqref{eq:bnt-negligible-ft-zero-diverging} when $f_t=0$ and
		$n_0\to\infty$, this yields
		\begin{equation*}
			\frac{\widehat\tau_{it}^{*}-\tau_{it}^{*}}{V_{it}^{1/2}}
			\Rightarrow N(0,1).
		\end{equation*}
		
		It remains only to replace $V_{it}$ by its feasible estimator. By
		Lemma~\ref{lem:variance-consistency} below,
		\begin{equation*}
			\frac{\widehat V_{it}}{V_{it}}=1+o_p(1).
		\end{equation*}
		Slutsky's theorem gives
		\begin{equation*}
			\frac{\widehat\tau_{it}^{*}-\tau_{it}^{*}}{\widehat V_{it}^{1/2}}
			\Rightarrow N(0,1).
		\end{equation*}
		This proves Proposition~\ref{prop:prop-1}.
	\end{proof}

	\begin{lemma}[Consistency of the feasible variance estimator]
		\label{lem:variance-consistency}
		Suppose the assumptions of Proposition~1 hold. Then, for each fixed
		treated unit $i$ and fixed time $t$,
		\[
		\widehat V_{it}
		=
		V_{it}
		+
		o_p(V_{it}),
		\]
		or equivalently,
		\[
		\frac{\widehat V_{it}}{V_{it}}
		=
		1+o_p(1).
		\]
	\end{lemma}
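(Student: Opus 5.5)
We prove the two pieces separately: $\widehat V_{it}^{\mathrm{reg}}/V_{it}^{\mathrm{reg}}=1+o_p(1)$ and $\widehat V_{it}^{f}-V_{it}^{f}=o_p(V_{it})$. Since $V_{it}=V_{it}^{\mathrm{reg}}+V_{it}^f$ with both pieces nonnegative, these two statements together give $\widehat V_{it}/V_{it}=1+o_p(1)$. The relevant scale throughout is the lower bound on $V_{it}^{\mathrm{reg}}$ from the proof of Proposition~\ref{prop:prop-1}. It is of order $T_0^{-1}+T_1^{-1}$ when $n_0$ is fixed or $f_t\neq0$. It is of order $(n_0T_0)^{-1}+(n_0T_1)^{-1}$ when $n_0\to\infty$ and $f_t=0$. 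Every error must be shown small relative to this scale.

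\textbf{Regression component.} Fix $d$ and write
\[
\widehat V_{it,d}^{\mathrm{reg}}=\widehat c_{i,t,d}'(\widehat Z_d'\widehat Z_d)^{-1}\widehat Z_d'\widehat\Omega_d\widehat Z_d(\widehat Z_d'\widehat Z_d)^{-1}\widehat c_{i,t,d}.
\]
We avoid inverting the high-dimensional matrix, exactly as in the proof of Lemma~\ref{lem:pooled-slope-expansion}. The vector $(\widehat Z_d'\widehat Z_d)^{-1}\widehat c_{i,t,d}$ is expressed through the partitioned inverse. It has a $\lambda_i$-block equal to $(\widehat F_d'\widehat F_d)^{-1}\widehat f_t$, a $\beta$-block equal to $\widehat S_{xx,d}^{-1}\widehat h_{it,d}/(n_0T_d)$, and cross terms $-(\widehat F_d'\widehat F_d)^{-1}\widehat F_d'X_{j,d}\widehat S_{xx,d}^{-1}\widehat h_{it,d}/(n_0T_d)$ for the other units. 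The sandwich then equals the sample analogue of $\Var(A_{it,d})$. It is a weighted sum over $(j,s)$ of squared residuals, with weights built from $\widehat f_t$, $\widehat F_d$, $X_{j,d}$ and $\widehat h_{it,d}$.

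We then replace estimated ingredients by population ones in three steps. First, $\widehat F_d\to F_d$ and $\widehat f_t\to f_t$, using Lemma~\ref{lem:blockwise-factor-error-bounds}, \eqref{eq:hatFhatF-FF} and \eqref{eq:pointwise-factor-rate}; each replacement costs a relative factor $o_p(1)$. Second, $\widehat S_{xx,d}\to S_{xx,F,d}$, as shown in Lemma~\ref{lem:pooled-slope-expansion}. Third, we write the residuals as
\[
\widehat\varepsilon_{js}=\varepsilon_{js}(d)-(\widehat f_s-f_s)'\lambda_j(d)-\widehat f_s'\{\widehat\lambda_j(d)-\lambda_j(d)\}-X_{js}'\{\widehat\beta(d)-\beta(d)\}.
\]
Expanding $\widehat\varepsilon_{js}^2$, the cross and quadratic terms are controlled by Cauchy--Schwarz, Lemmas~\ref{lem:loading-expansion}--\ref{lem:pooled-slope-expansion}, and the eighth-moment bound in Assumption~\ref{ass:weak-correlation}.

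This leaves the infeasible quantity $c_{i,t,d}'(Z_d'Z_d)^{-1}Z_d'\operatorname{diag}(\varepsilon_d\varepsilon_d')Z_d(Z_d'Z_d)^{-1}c_{i,t,d}$. We show it is consistent for $\Var(A_{it,d})$ by a law of large numbers for $T_d^{-1}\sum_s f_s f_s'\varepsilon_{is}^2$ and its pooled analogue. If serial correlation is present, the diagonal $\widehat\Omega_d$ is replaced by the HAC version mentioned after \eqref{eq:omega-diagonal-residuals}, and the standard kernel consistency argument applies. Summing over $d$ handles $\widehat V_{it}^{\mathrm{reg}}$.

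\textbf{Factor component.} $\widehat{\Delta\lambda}_i-\Delta\lambda_i=o_p(1)$ by Lemma~\ref{lem:loading-expansion}. Next, $\widehat Q_\lambda\to Q_\lambda^0$ and $\widehat S_t\to\Gamma_t$ hold up to the common rotation, which cancels in the quadratic form. These follow from the standard \citet{Bai_2003} arguments for control-unit loadings and residuals, using Assumption~\ref{ass:control-score-clt} and the fourth-moment condition in Assumption~\ref{ass:weak-correlation}. Hence $n_1(\widehat V_{it}^f-V_{it}^f)=o_p(1)$. This is $o_p(V_{it})$ whenever $\Delta\lambda_i\ne0$, since then $V_{it}^f\asymp n_1^{-1}$. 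When $\Delta\lambda_i=0$, $\widehat V_{it}^f=O_p(n_1^{-1}\|\widehat{\Delta\lambda}_i\|^2)$, and this is $o_p(V_{it}^{\mathrm{reg}})$ because $\|\widehat{\Delta\lambda}_i\|^2=O_p(T_0^{-1}+T_1^{-1})$.

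\textbf{Main obstacle.} The hard step is the knife-edge case $n_0\to\infty$, $f_t=0$. There the target scale is $(n_0T_d)^{-1}$, so the residual-replacement errors, which involve the factor errors $b_{nT,d}$, must be shown small relative to this smaller scale. We would handle this by bounding the $\beta$-block weights explicitly, so that $\widehat f_t$ enters only through $\widehat h_{it,d}=X_{it}+O_p(\|\widehat f_t\|)$. We would then invoke \eqref{eq:pooled-slope-rate} to absorb the remainders, exactly as in \eqref{eq:bnt-pooled-slope-rate}.
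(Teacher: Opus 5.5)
Your architecture matches the paper's. The partitioned-inverse form of the sandwich reproduces the split $\widehat V_{it,d}^{\mathrm{reg}}=\widehat V_{it,d}^{\lambda}+\widehat V_{it,d}^{\beta}+2\widehat V_{it,d}^{\lambda\beta}$, and your $\Delta\lambda_i\neq0$ versus $\Delta\lambda_i=0$ treatment of $\widehat V_{it}^f$ is the same as the paper's. One small correction: the cross term $-(\widehat F_d'\widehat F_d)^{-1}\widehat F_d'X_{j,d}\widehat S_{xx,d}^{-1}\widehat h_{it,d}/(n_0T_d)$ sits in every $\lambda_j$-block, including $j=i$. That term is what residualizes unit $i$'s regressors by $M_{\widehat F_d}$ in the $\beta$ part of the weight.

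The gap is in the knife-edge case $n_0\to\infty$, $f_t=0$, and you have placed the difficulty in the wrong spot. Residual replacement is harmless there. The $\beta$ piece is $(n_0T_d)^{-1}$ times a normalized average over $(j,s)$, so $\widehat\varepsilon_{js}^2-\varepsilon_{js}^2$ need only be $o_p(1)$ inside that average, with no rate condition.

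What your remedy (``$\widehat f_t$ enters only through $\widehat h_{it,d}$'') misses is the $\lambda_i$-block weight $(\widehat F_d'\widehat F_d)^{-1}\widehat f_t$. It produces $\widehat V_{it,d}^{\lambda}$ and $\widehat V_{it,d}^{\lambda\beta}$, whose population counterparts are identically zero when $f_t=0$. Your step ``replacing $\widehat f_t$ by $f_t$ costs a relative $o_p(1)$'' therefore has nothing to be relative to. These pieces must be bounded in absolute terms: $\widehat V_{it,d}^{\lambda}=O_p(T_d^{-1}\|\widehat f_t\|^2)$ and $\widehat V_{it,d}^{\lambda\beta}=O_p((n_0T_d)^{-1}\|\widehat f_t\|)$. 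Since $\widehat f_t=O_p(n_1^{-1/2}+T^{-1})$, the first is $o_p((n_0T_d)^{-1})$ only if $n_0/n_1+n_0/T^2\to0$. This is where \eqref{eq:pooled-slope-rate} is actually needed.

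The same condition is missing from your factor-component argument when $\Delta\lambda_i=0$. The bound $\widehat V_{it}^f=O_p\{n_1^{-1}(T_0^{-1}+T_1^{-1})\}$ is $o_p(V_{it}^{\mathrm{reg}})$ from $n_1\to\infty$ alone only when $V_{it}^{\mathrm{reg}}\asymp T_0^{-1}+T_1^{-1}$. In the knife-edge case, $V_{it}^{\mathrm{reg}}\asymp(n_0T_0)^{-1}+(n_0T_1)^{-1}$, so the ratio is $O_p(n_0/n_1)$ and you need $n_0/n_1\to0$. Your justification via $\|\widehat{\Delta\lambda}_i\|^2=O_p(T_0^{-1}+T_1^{-1})$ does not supply this. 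With these two absolute bounds added, your argument closes along the same lines as the paper's.
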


	\begin{proof}[Proof of Lemma \ref{lem:variance-consistency}]
		Fix $d\in\{0,1\}$. Recall the decomposition
		\[
		A_{it,d}=A_{it,d}^{\lambda}+A_{it,d}^{\beta},
		\]
		where
		\[
		A_{it,d}^{\lambda}
		=
		f_t'(F_d'F_d)^{-1}F_d'\varepsilon_{i,d},
		\]
		and
		\[
		A_{it,d}^{\beta}
		=
		h_{it,d}'S_{xx,F,d}^{-1}
		\frac{1}{n_0T_d}
		\sum_{j=1}^{n_0}X_{j,d}'M_{F_d}\varepsilon_{j,d}.
		\]
		Here
		\[
		h_{it,d}
		:=
		X_{it}
		-
		X_{i,d}'F_d(F_d'F_d)^{-1}f_t
		=
		X_{it}
		-
		\frac{ X_{i,d}'F_d} {T_d}
		\left(\frac{F_d'F_d}{T_d}\right)^{-1}
		f_t,
		\]
		and
		\[
		S_{xx,F,d}
		:=
		\frac{1}{n_0T_d}
		\sum_{j=1}^{n_0}X_{j,d}'M_{F_d}X_{j,d}.
		\]
		Let
		\[
		\Omega_{j,d}
		:=
		E(\varepsilon_{j,d}\varepsilon_{j,d}')
		\]
		denote the block covariance matrix of $\varepsilon_{j,d}$. In the
		diagonal heteroskedastic case,
		\[
		\Omega_{j,d}
		=
		\operatorname{diag}
		\left(
		E\varepsilon_{js}(d)^2:s\in\mathcal T_d
		\right).
		\]
		
		The theoretical treated-regression variance for block $d$ is
		\[
		V_{it,d}^{\mathrm{reg}}
		=
		V_{it,d}^{\lambda}
		+
		V_{it,d}^{\beta}
		+
		2V_{it,d}^{\lambda\beta},
		\]
		where
		\[
		V_{it,d}^{\lambda}
		=
		\operatorname{Var}(A_{it,d}^{\lambda}),
		\qquad
		V_{it,d}^{\beta}
		=
		\operatorname{Var}(A_{it,d}^{\beta}),
		\qquad
		V_{it,d}^{\lambda\beta}
		=
		\operatorname{Cov}(A_{it,d}^{\lambda},A_{it,d}^{\beta}).
		\]
		The three components are
		\[
		V_{it,d}^{\lambda}
		=
		\frac{1}{T_d}
		f_t'
		\left(\frac{F_d'F_d}{T_d}\right)^{-1}
		\left(\frac{F_d'\Omega_{i,d}F_d}{T_d}\right)
		\left(\frac{F_d'F_d}{T_d}\right)^{-1}
		f_t,
		\]
		\[
		V_{it,d}^{\beta}
		=
		\frac{1}{n_0T_d}
		h_{it,d}'S_{xx,F,d}^{-1}
		\left[
		\frac{1}{n_0T_d}
		\sum_{j=1}^{n_0}
		X_{j,d}'M_{F_d}\Omega_{j,d}M_{F_d}X_{j,d}
		\right]
		S_{xx,F,d}^{-1}h_{it,d},
		\]
		and
		\[
		V_{it,d}^{\lambda\beta}
		=
		\frac{1}{n_0T_d}
		f_t'
		\left(\frac{F_d'F_d}{T_d}\right)^{-1}
		\left[
		\frac{F_d'\Omega_{i,d}M_{F_d}X_{i,d}}{T_d}
		\right]
		S_{xx,F,d}^{-1}h_{it,d}.
		\]
		
		Now consider the feasible block-specific regression variance estimator.
		Let
		\[
		\widehat S_{xx,d}
		:=
		\frac{1}{n_0T_d}
		\sum_{j=1}^{n_0}
		X_{j,d}'M_{\widehat F_d}X_{j,d},
		\]
		and
		\[
		\widehat h_{it,d}
		:=
		X_{it}
		-
		X_{i,d}'\widehat F_d
		(\widehat F_d'\widehat F_d)^{-1}\widehat f_t
		=
		X_{it}
		-
		X_{i,d}'\widehat F_d
		\left(\frac{\widehat F_d'\widehat F_d}{T_d}\right)^{-1}
		\frac{\widehat f_t}{T_d}.
		\]
		Let $\widehat\Omega_{j,d}$ denote the residual covariance estimator. In
		the diagonal heteroskedastic case,
		\[
		\widehat\Omega_{j,d}
		=
		\operatorname{diag}
		\left(
		\widehat\varepsilon_{js}(d)^2:s\in\mathcal T_d
		\right).
		\]
		For notational simplicity, we focus on the heteroskedastic case. Under serial
		dependence, covariance blocks such as
		$\widehat F_d'\widehat\Omega_{i,d}\widehat F_d/T_d$ are estimated by
		consistent HAC estimators of the corresponding long-run covariance blocks.
		The feasible regression variance for block $d$ is
		\[
		\widehat V_{it,d}^{\mathrm{reg}}
		=
		\widehat V_{it,d}^{\lambda}
		+
		\widehat V_{it,d}^{\beta}
		+
		2\widehat V_{it,d}^{\lambda\beta},
		\]
		where
		\[
		\widehat V_{it,d}^{\lambda}
		=
		\frac{1}{T_d}
		\widehat f_t'
		\left(\frac{\widehat F_d'\widehat F_d}{T_d}\right)^{-1}
		\left(\frac{\widehat F_d'\widehat\Omega_{i,d}\widehat F_d}{T_d}\right)
		\left(\frac{\widehat F_d'\widehat F_d}{T_d}\right)^{-1}
		\widehat f_t,
		\]
		\[
		\widehat V_{it,d}^{\beta}
		=
		\frac{1}{n_0T_d}
		\widehat h_{it,d}'\widehat S_{xx,d}^{-1}
		\left[
		\frac{1}{n_0T_d}
		\sum_{j=1}^{n_0}
		X_{j,d}'M_{\widehat F_d}
		\widehat\Omega_{j,d}
		M_{\widehat F_d}X_{j,d}
		\right]
		\widehat S_{xx,d}^{-1}\widehat h_{it,d},
		\]
		and
		\[
		\widehat V_{it,d}^{\lambda\beta}
		=
		\frac{1}{n_0T_d}
		\widehat f_t'
		\left(\frac{\widehat F_d'\widehat F_d}{T_d}\right)^{-1}
		\left[
		\frac{\widehat F_d'\widehat\Omega_{i,d}
			M_{\widehat F_d}X_{i,d}}{T_d}
		\right]
		\widehat S_{xx,d}^{-1}\widehat h_{it,d}.
		\]
		The estimator $\widehat V_{it,d}^{\mathrm{reg}}$ displayed above is
		identical to the compact sandwich estimator in \eqref{eq:block-reg-var}.
		The expanded form is used only to simplify the consistency proof, avoiding
		a direct high-dimensional sandwich-matrix argument when $n_0$ grows.

		Using Lemma~\ref{lem:blockwise-factor-error-bounds}, the following are $o_p(1)$:
		\[
		\frac{\widehat F_d'\widehat F_d}{T_d}
		-
		\frac{F_d'F_d}{T_d}
		=o_p(1),
		\qquad
		\left(\frac{\widehat F_d'\widehat F_d}{T_d}\right)^{-1}
		-
		\left(\frac{F_d'F_d}{T_d}\right)^{-1}
		=o_p(1),
		\]
		\[
		\frac{\widehat F_d'\widehat\Omega_{i,d}\widehat F_d}{T_d}
		-
		\frac{F_d'\Omega_{i,d}F_d}{T_d}
		=o_p(1),
		\]
		\[
		\widehat S_{xx,d}-S_{xx,F,d}=o_p(1),
		\qquad
		\widehat S_{xx,d}^{-1}-S_{xx,F,d}^{-1}=o_p(1),
		\]
		\[
		\frac{1}{n_0T_d}
		\sum_{j=1}^{n_0}
		X_{j,d}'M_{\widehat F_d}
		\widehat\Omega_{j,d}
		M_{\widehat F_d}X_{j,d}
		-
		\frac{1}{n_0T_d}
		\sum_{j=1}^{n_0}
		X_{j,d}'M_{F_d}
		\Omega_{j,d}
		M_{F_d}X_{j,d}
		=o_p(1),
		\]
		\[
		\frac{\widehat F_d'\widehat\Omega_{i,d}
			M_{\widehat F_d}X_{i,d}}{T_d}
		-
		\frac{F_d'\Omega_{i,d}M_{F_d}X_{i,d}}{T_d}
		=o_p(1).
		\]
		Also $\widehat h_{it,d}-h_{it,d}=o_p(1)$.
		
		We now separate the two possible variance scales.
		
		First suppose $f_t\neq0$. Then, by the maintained nondegeneracy
		condition,
		\[
		V_{it,d}^{\lambda}\asymp T_d^{-1},
		\qquad
		V_{it,d}^{\mathrm{reg}}\asymp T_d^{-1}.
		\]
		Using the preceding normalized convergences and $\widehat f_t-f_t=o_p(1)$,
		\[
		\widehat V_{it,d}^{\lambda}-V_{it,d}^{\lambda}
		=
		T_d^{-1}o_p(1)
		=
		o_p(T_d^{-1}).
		\]
		Similarly,
		\[
		\widehat V_{it,d}^{\beta}-V_{it,d}^{\beta}
		=
		(n_0T_d)^{-1}o_p(1)
		=
		o_p(T_d^{-1}),
		\]
		and
		\[
		\widehat V_{it,d}^{\lambda\beta}-V_{it,d}^{\lambda\beta}
		=
		(n_0T_d)^{-1}o_p(1)
		=
		o_p(T_d^{-1}).
		\]
		Therefore
		\[
		\widehat V_{it,d}^{\mathrm{reg}}
		-
		V_{it,d}^{\mathrm{reg}}
		=
		o_p(T_d^{-1})
		=
		o_p(V_{it,d}^{\mathrm{reg}}).
		\]
		
		Next suppose $f_t=0$. Then $V_{it,d}^{\lambda}=0$ and
		$V_{it,d}^{\lambda\beta}=0$. The leading regression variance is the
		pooled-slope component, and by the maintained nondegeneracy condition,
		\[
		V_{it,d}^{\mathrm{reg}}
		=
		V_{it,d}^{\beta}
		\asymp
		(n_0T_d)^{-1}.
		\]
		For the slope component, the normalized sandwich convergence gives
		\[
		\widehat V_{it,d}^{\beta}-V_{it,d}^{\beta}
		=
		(n_0T_d)^{-1}o_p(1)
		=
		o_p((n_0T_d)^{-1}).
		\]
		It remains to check that the feasible loading and covariance pieces,
		which are zero in population when $f_t=0$, are negligible at the
		$(n_0T_d)^{-1}$ scale. Since $f_t=0$,
		\[
		\widehat f_t=O_p(n_1^{-1/2}+T^{-1}).
		\]
		Hence
		\[
		\widehat V_{it,d}^{\lambda}
		=
		T_d^{-1}O_p(\|\widehat f_t\|^2)
		=
		O_p\left(\frac{1}{T_dn_1}+\frac{1}{T_dT^2}\right).
		\]
		Therefore
		\[
		\frac{\widehat V_{it,d}^{\lambda}}
		{(n_0T_d)^{-1}}
		=
		O_p\left(\frac{n_0}{n_1}+\frac{n_0}{T^2}\right)
		=
		o_p(1).
		\]
		The last equality holds because, if $n_0$ is fixed, then
		$n_0/n_1\to0$ and $n_0/T^2\to0$ follow from $n_1\to\infty$ and
		$T\to\infty$; while if $n_0\to\infty$ in the $f_t=0$ case, it follows from
		\eqref{eq:pooled-slope-rate}.
		Thus
		\[
		\widehat V_{it,d}^{\lambda}
		=
		o_p((n_0T_d)^{-1}).
		\]
		Similarly,
		\[
		\widehat V_{it,d}^{\lambda\beta}
		=
		(n_0T_d)^{-1}O_p(\|\widehat f_t\|)
		=
		o_p((n_0T_d)^{-1}).
		\]
		Combining the three pieces gives
		\[
		\widehat V_{it,d}^{\mathrm{reg}}
		-
		V_{it,d}^{\mathrm{reg}}
		=
		o_p((n_0T_d)^{-1})
		=
		o_p(V_{it,d}^{\mathrm{reg}}).
		\]
		
		Thus, in either case,
		\[
		\widehat V_{it,d}^{\mathrm{reg}}
		=
		V_{it,d}^{\mathrm{reg}}
		+
		o_p(V_{it,d}^{\mathrm{reg}}),
		\qquad d=0,1.
		\]
		Summing over $d=0,1$ and using the maintained nondegeneracy of
		$V_{it}^{\mathrm{reg}}=V_{it,1}^{\mathrm{reg}}+V_{it,0}^{\mathrm{reg}}$,
		we obtain
		\[
		\widehat V_{it}^{\mathrm{reg}}
		=
		V_{it}^{\mathrm{reg}}
		+
		o_p(V_{it}^{\mathrm{reg}}).
		\]
		
		It remains to handle the factor-estimation variance component.

		By consistency of the control-unit loading and residual covariance
		estimators,
		\[
		\widehat Q_\lambda-Q_\lambda=o_p(1),
		\qquad
		\widehat Q_\lambda^{-1}-Q_\lambda^{-1}=o_p(1),
		\qquad
		\widehat S_t-S_t=o_p(1).
		\]
		Therefore
		\[
		\widehat Q_\lambda^{-1}
		\widehat S_t
		\widehat Q_\lambda^{-1}
		-
		Q_\lambda^{-1}S_tQ_\lambda^{-1}
		=
		o_p(1),
		\]
		and hence
		\[
		\widehat{\operatorname{Var}}(\widehat f_t)
		-
		\frac{1}{n_1}Q_\lambda^{-1}S_tQ_\lambda^{-1}
		=
		o_p(n_1^{-1}).
		\]
		
		We now split according to whether $\Delta\lambda_i$ is zero.
		
		If $\Delta\lambda_i\neq0$, then by nondegeneracy,
		\[
		V_{it}^{f}
		=
		\frac{1}{n_1}
		\Delta\lambda_i'
		Q_\lambda^{-1}S_tQ_\lambda^{-1}
		\Delta\lambda_i
		\asymp n_1^{-1}.
		\]
		Moreover, by consistency of the treated loading estimators,
		\[
		\widehat{\Delta\lambda}_i-\Delta\lambda_i=o_p(1).
		\]
		Therefore
		\[
		\widehat V_{it}^{f}-V_{it}^{f}
		=
		\frac{1}{n_1}
		\left[
		\widehat{\Delta\lambda}_i'
		\widehat Q_\lambda^{-1}\widehat S_t\widehat Q_\lambda^{-1}
		\widehat{\Delta\lambda}_i
		-
		\Delta\lambda_i'
		Q_\lambda^{-1}S_tQ_\lambda^{-1}
		\Delta\lambda_i
		\right]
		=
		o_p(n_1^{-1}).
		\]
		Since $V_{it}^{f}\asymp n_1^{-1}$,
		\[
		\widehat V_{it}^{f}
		=
		V_{it}^{f}
		+
		o_p(V_{it}^{f}).
		\]
		
		If $\Delta\lambda_i=0$, then $V_{it}^{f}=0$, so relative consistency
		with respect to $V_{it}^{f}$ is not meaningful. We instead show
		\[
		\widehat V_{it}^{f}
		=
		o_p(V_{it}^{\mathrm{reg}}).
		\]
		Since $\Delta\lambda_i=0$,
		\[
		\widehat{\Delta\lambda}_i
		=
		\{\widehat\lambda_i(1)-\lambda_i(1)\}
		-
		\{\widehat\lambda_i(0)-\lambda_i(0)\}.
		\]
		By the treated-loading expansion, and because the first-stage remainder
		is dominated by the usual $T_d^{-1/2}$ loading-estimation rate in this
		comparison,
		\[
		\|\widehat{\Delta\lambda}_i\|^2
		=
		O_p\left(\frac{1}{T_0}+\frac{1}{T_1}\right).
		\]
		Also
		\[
		\widehat Q_\lambda^{-1}\widehat S_t\widehat Q_\lambda^{-1}
		=
		O_p(1).
		\]
		Therefore
		\[
		\widehat V_{it}^{f}
		=
		O_p\left[
		\frac{1}{n_1}
		\left(
		\frac{1}{T_0}+\frac{1}{T_1}
		\right)
		\right].
		\]
		
		If $f_t\neq0$, then
		\[
		V_{it}^{\mathrm{reg}}
		\asymp
		\frac{1}{T_0}+\frac{1}{T_1},
		\]
		and hence, since $n_1\to\infty$,
		\[
		\widehat V_{it}^{f}
		=
		o_p(V_{it}^{\mathrm{reg}}).
		\]
		If $f_t=0$, then
		\[
		V_{it}^{\mathrm{reg}}
		\asymp
		\frac{1}{n_0T_0}+\frac{1}{n_0T_1}.
		\]
		Thus
		\[
		\frac{\widehat V_{it}^{f}}{V_{it}^{\mathrm{reg}}}
		=
		O_p\left(\frac{n_0}{n_1}\right)
		=
		o_p(1).
		\]
		The above conclusion holds automatically when $n_0$ is fixed, since
		$n_0/n_1\to0$. When $n_0\to\infty$ and $f_t=0$, it holds under the additional
		rate restriction \eqref{eq:pooled-slope-rate}.
		Hence again
		\[
		\widehat V_{it}^{f}
		=
		o_p(V_{it}^{\mathrm{reg}}).
		\]
		
		Finally, combine the regression and factor components. If
		$\Delta\lambda_i\neq0$, then
		\[
		\widehat V_{it}-V_{it}
		=
		\left(
		\widehat V_{it}^{\mathrm{reg}}-V_{it}^{\mathrm{reg}}
		\right)
		+
		\left(
		\widehat V_{it}^{f}-V_{it}^{f}
		\right)
		=
		o_p(V_{it}^{\mathrm{reg}})
		+
		o_p(V_{it}^{f})
		=
		o_p(V_{it}).
		\]
		If $\Delta\lambda_i=0$, then $V_{it}^{f}=0$ and
		$V_{it}=V_{it}^{\mathrm{reg}}$, while
		\[
		\widehat V_{it}-V_{it}
		=
		\left(
		\widehat V_{it}^{\mathrm{reg}}-V_{it}^{\mathrm{reg}}
		\right)
		+
		\widehat V_{it}^{f}
		=
		o_p(V_{it}^{\mathrm{reg}})
		+
		o_p(V_{it}^{\mathrm{reg}})
		=
		o_p(V_{it}).
		\]
		Therefore, in all cases,
		\[
		\widehat V_{it}
		=
		V_{it}
		+
		o_p(V_{it}),
		\]
		or equivalently,
		\[
		\frac{\widehat V_{it}}{V_{it}}
		=
		1+o_p(1).
		\]
		This proves the lemma.
	\end{proof}

	\section{Proof of Proposition~\ref{prop:prop-2}}
	
	In addition to Assumptions A.1--A.5, we impose one additional assumption.
	Recall we also assume $n_0/n \to c\in (0,1)$ and $T_0/T \to b\in (0,1)$.

	\begin{assumption}[Potential factors] \label{ass:potential-factors}
		For $d=0,1$, the potential factors and factor loadings satisfy
		$E\|f_t(d)\|^4\leq M<\infty$ and
		$E\|\lambda_i(d)\|^4\leq M<\infty$, for all $t$ and $i$.
		The sample second moments
		$T_0^{-1}\sum_{s=1}^{T_0}f_s(0)f_s(0)'$,
		$T_1^{-1}\sum_{s=T_0+1}^{T}f_s(1)f_s(1)'$,
		$n_0^{-1}\sum_{k=1}^{n_0}\lambda_k(1)\lambda_k(1)'$, and
		$n_1^{-1}\sum_{k=n_0+1}^{n}\lambda_k(0)\lambda_k(0)'$
		converge in probability to positive definite matrices.
		For the fixed treated unit $i$ and fixed post-treatment date $t$ considered in
		Proposition~\ref{prop:prop-2}, the normalized scores
		$n_0^{-1/2}\sum_{k=1}^{n_0}\lambda_k(1)\varepsilon_{kt}(1)$,
		$T_1^{-1/2}\sum_{s=T_0+1}^{T}f_s(1)\varepsilon_{is}(1)$,
		$T_0^{-1/2}\sum_{s=1}^{T_0}f_s(0)\varepsilon_{is}(0)$, and
		$n_1^{-1/2}\sum_{k=n_0+1}^{n}\lambda_k(0)\varepsilon_{kt}(0)$
		satisfy a joint central limit theorem with a nondegenerate covariance matrix.
	\end{assumption}

	\textbf{Proof of Proposition~\ref{prop:prop-2}}. Let
	$T_1=T-T_0$, $n_1=n-n_0$, and define
	$C_{it}(d)=\lambda_i(d)'f_t(d)$, $d=0,1$.  Also let
	\[
	\delta_{nT}=
	\min\{\sqrt{n_0},\sqrt{n_1},\sqrt{T_0},\sqrt{T_1}\}.
	\]
	We first derive the expansion for the post-treatment treated common component.
	Since $\hat C_{it}(1)$ is obtained by principal components using
	$\{Y_{ks}:k\le n_0,\;s>T_0\}$, Bai's common-component expansion gives,
	for fixed $i\le n_0$ and $t>T_0$,
	\begin{align*}
		\hat C_{it}(1)-C_{it}(1)
		&=\lambda_i(1)'
		\left(\frac1{n_0}\sum_{k=1}^{n_0}\lambda_k(1)\lambda_k(1)'\right)^{-1}
		\left(\frac1{n_0}\sum_{k=1}^{n_0}\lambda_k(1)\varepsilon_{kt}(1)\right) \\
		& +f_t(1)'
		\left(\frac1{T_1}\sum_{s=T_0+1}^{T}f_s(1)f_s(1)'\right)^{-1}
		\left(\frac1{T_1}\sum_{s=T_0+1}^{T}f_s(1)\varepsilon_{is}(1)\right)
		+O_p(\delta_{nT}^{-2}).
	\end{align*}
	To make the normalizations explicit, define the sample second-moment matrices
	\[
	Q_{\lambda,1}:=\frac1{n_0}\sum_{k=1}^{n_0}\lambda_k(1)\lambda_k(1)',
	\qquad
	Q_{f,1}:=\frac1{T_1}\sum_{s=T_0+1}^{T}f_s(1)f_s(1)',
	\]
	and the normalized score averages
	\[
	\bar S_{A,t}:=\frac1{n_0}\sum_{k=1}^{n_0}\lambda_k(1)\varepsilon_{kt}(1),
	\qquad
	\bar S_{B,i}:=\frac1{T_1}\sum_{s=T_0+1}^{T}f_s(1)\varepsilon_{is}(1).
	\]
	Let
	\[
	M_A:=\lambda_i(1)'Q_{\lambda,1}^{-1},
	\qquad
	M_B:=f_t(1)'Q_{f,1}^{-1}.
	\]
	Then
	\[
	\hat C_{it}(1)-C_{it}(1)
	= A+B+O_p(\delta_{nT}^{-2}),
	\qquad
	A:=M_A\bar S_{A,t},\quad B:=M_B\bar S_{B,i}.
	\]
	Equivalently,
	\[
	A=n_0^{-1/2}M_A
	\left(\frac1{\sqrt{n_0}}\sum_{k=1}^{n_0}\lambda_k(1)\varepsilon_{kt}(1)\right),
	\qquad
	B=T_1^{-1/2}M_B
	\left(\frac1{\sqrt{T_1}}\sum_{s=T_0+1}^{T}f_s(1)\varepsilon_{is}(1)\right),
	\]
	so Assumption~\ref{ass:potential-factors} applies directly to the normalized sums.
	
	Next consider the untreated common component for the same treated unit.
	The factor estimate $\hat f_t(0)$ is obtained from the control units.
	The product $\lambda_i(d)'f_t(d)$ is invariant to rotations, and under the
	normalization conditions used for the principal-components estimator we can
	write the expansion with the rotation matrix absorbed into the notation.
	Under $\sqrt n/T\to0$,
	\[
	\sqrt{n_1}\{\hat f_t(0)-f_t(0)\}
	=
	\left(\frac1{n_1}\sum_{k=n_0+1}^{n}\lambda_k(0)\lambda_k(0)'\right)^{-1}
	\left(\frac1{\sqrt{n_1}}\sum_{k=n_0+1}^{n}\lambda_k(0)\varepsilon_{kt}(0)\right)
	+o_p(1).
	\]
	The pre-treatment loading estimator for the treated unit is obtained by
	regressing $Y_{is}$ on $\hat f_s(0)$ over $s=1,\ldots,T_0$. Its first-order
	expansion is
	\[
	\sqrt{T_0}\{\hat\lambda_i(0)-\lambda_i(0)\}
	=
	\left(\frac1{T_0}\sum_{s=1}^{T_0}f_s(0)f_s(0)'\right)^{-1}
	\left(\frac1{\sqrt{T_0}}\sum_{s=1}^{T_0}f_s(0)\varepsilon_{is}(0)\right)
	+o_p(1).
	\]
	Define
	\[
	Q_{f,0}:=\frac1{T_0}\sum_{s=1}^{T_0}f_s(0)f_s(0)',
	\qquad
	Q_{\lambda,0}:=\frac1{n_1}\sum_{k=n_0+1}^{n}\lambda_k(0)\lambda_k(0)',
	\]
	\[
	\bar S_{C,i}:=\frac1{T_0}\sum_{s=1}^{T_0}f_s(0)\varepsilon_{is}(0),
	\qquad
	\bar S_{D,t}:=\frac1{n_1}\sum_{k=n_0+1}^{n}\lambda_k(0)\varepsilon_{kt}(0),
	\]
	and
	\[
	M_C:=f_t(0)'Q_{f,0}^{-1},
	\qquad
	M_D:=\lambda_i(0)'Q_{\lambda,0}^{-1}.
	\]
	Then, for fixed $i\le n_0$ and $t>T_0$,
	\begin{align*}
		\hat C_{it}(0)-C_{it}(0)
		&=\{\hat\lambda_i(0)-\lambda_i(0)\}'\hat f_t(0)
		+\lambda_i(0)'\{\hat f_t(0)-f_t(0)\} \\
		&=\{\hat\lambda_i(0)-\lambda_i(0)\}'f_t(0)
		+\lambda_i(0)'\{\hat f_t(0)-f_t(0)\} \\
		&\quad
		+\{\hat\lambda_i(0)-\lambda_i(0)\}'\{\hat f_t(0)-f_t(0)\} \\
		&=M_C\bar S_{C,i}+M_D\bar S_{D,t}+O_p(\delta_{nT}^{-2}) \\
		&\equiv C+D+O_p(\delta_{nT}^{-2}).
	\end{align*}
	Here $C:=M_C\bar S_{C,i}$ and $D:=M_D\bar S_{D,t}$. Hence
	\begin{align*}
		\hat\tau_{it}^*-\tau_{it}^*
		&=\{\hat C_{it}(1)-C_{it}(1)\}-\{\hat C_{it}(0)-C_{it}(0)\} \\
		&=A+B-C-D+O_p(\delta_{nT}^{-2}).
	\end{align*}
	By the joint central limit theorem and the cross-sectional uncorrelatedness
	and weak serial dependence assumptions, the Gaussian limits of $A$, $B$, $C$,
	and $D$ are mutually uncorrelated, hence asymptotically independent. Define
	the leading variance by
	\[
	V_{it}
	:=
	\operatorname{Var}(A)+\operatorname{Var}(B)
	+\operatorname{Var}(C)+\operatorname{Var}(D).
	\]
	Then
	\[
	\operatorname{Var}(\hat\tau_{it}^*-\tau_{it}^*)
	=
	V_{it}+o(\delta_{nT}^{-2}).
	\]
	Under the maintained nondegeneracy
	conditions, $V_{it}$ is of order
	$n_0^{-1}+T_1^{-1}+T_0^{-1}+n_1^{-1}$, while the remainder
	$O_p(\delta_{nT}^{-2})$ is negligible relative to $V_{it}^{1/2}$. Therefore
	\[
	V_{it}^{-1/2}(\hat\tau_{it}^*-\tau_{it}^*)\overset{d}{\to}N(0,1).
	\]
	It remains to describe the feasible variance estimator.
	Define
	\[
	\widehat Q_{\lambda,1}
	:=\frac1{n_0}\sum_{k=1}^{n_0}\hat\lambda_k(1)\hat\lambda_k(1)',
	\qquad
	\widehat Q_{f,1}
	:=\frac1{T_1}\sum_{s=T_0+1}^{T}\hat f_s(1)\hat f_s(1)',
	\]
	\[
	\widehat Q_{f,0}
	:=\frac1{T_0}\sum_{s=1}^{T_0}\hat f_s(0)\hat f_s(0)',
	\qquad
	\widehat Q_{\lambda,0}
	:=\frac1{n_1}\sum_{k=n_0+1}^{n}\hat\lambda_k(0)\hat\lambda_k(0)'.
	\]
	Let
	\[
	\hat M_A:=\hat\lambda_i(1)'\widehat Q_{\lambda,1}^{-1},
	\quad
	\hat M_B:=\hat f_t(1)'\widehat Q_{f,1}^{-1},
	\quad
	\hat M_C:=\hat f_t(0)'\widehat Q_{f,0}^{-1},
	\quad
	\hat M_D:=\hat\lambda_i(0)'\widehat Q_{\lambda,0}^{-1}.
	\]
	With residuals from the corresponding principal-components or regression
	steps, define the normalized middle matrices
	\[
	\widehat\Omega_{A,t}
	:=\frac1{n_0}\sum_{k=1}^{n_0}
	\hat\lambda_k(1)\hat\lambda_k(1)'\hat\varepsilon_{kt}(1)^2,
	\qquad
	\widehat\Omega_{B,i}
	:=\frac1{T_1}\sum_{s=T_0+1}^{T}
	\hat f_s(1)\hat f_s(1)'\hat\varepsilon_{is}(1)^2,
	\]
	\[
	\widehat\Omega_{C,i}
	:=\frac1{T_0}\sum_{s=1}^{T_0}
	\hat f_s(0)\hat f_s(0)'\hat\varepsilon_{is}(0)^2,
	\qquad
	\widehat\Omega_{D,t}
	:=\frac1{n_1}\sum_{k=n_0+1}^{n}
	\hat\lambda_k(0)\hat\lambda_k(0)'\hat\varepsilon_{kt}(0)^2.
	\]
	The four leading variance components are estimated by
	\[
	\begin{aligned}
		\widehat{\operatorname{Var}}(A)
		&=\frac1{n_0}\hat M_A\widehat\Omega_{A,t}\hat M_A',
		&
		\widehat{\operatorname{Var}}(B)
		&=\frac1{T_1}\hat M_B\widehat\Omega_{B,i}\hat M_B', \\
		\widehat{\operatorname{Var}}(C)
		&=\frac1{T_0}\hat M_C\widehat\Omega_{C,i}\hat M_C',
		&
		\widehat{\operatorname{Var}}(D)
		&=\frac1{n_1}\hat M_D\widehat\Omega_{D,t}\hat M_D'.
	\end{aligned}
	\]
	Thus
	\[
	\widehat V_{it}
	=
	\widehat{\operatorname{Var}}(A)+\widehat{\operatorname{Var}}(B)
	+\widehat{\operatorname{Var}}(C)+\widehat{\operatorname{Var}}(D)
	\]
	is the feasible first-order variance estimator. Consistency of the estimated
	factor and loading spaces, together with the law of large numbers for the
	normalized middle matrices, gives $\widehat V_{it}/V_{it}\overset{p}{\to}1$.
	Consequently,
	\[
	\widehat V_{it}^{-1/2}(\hat\tau_{it}^*-\tau_{it}^*)\overset{d}{\to}N(0,1).
	\]
	
	In practice (see Section \ref{sec:mcmc}), we implement the following
	finite-sample degree-of-freedom adjustment. In the normalized notation above,
	replace the four variance components by
	\[
	\begin{aligned}
		\widehat{\operatorname{Var}}(A)
		&=\frac{1}{n_0-2r}\hat M_A\widehat\Omega_{A,t}\hat M_A',
		&
		\widehat{\operatorname{Var}}(B)
		&=\frac{1}{T_1-2r}\hat M_B\widehat\Omega_{B,i}\hat M_B', \\
		\widehat{\operatorname{Var}}(C)
		&=\frac{1}{T_0-2r}\hat M_C\widehat\Omega_{C,i}\hat M_C',
		&
		\widehat{\operatorname{Var}}(D)
		&=\frac{1}{n_1-2r}\hat M_D\widehat\Omega_{D,t}\hat M_D'.
	\end{aligned}
	\]
	The sum of these four adjusted expressions is the finite-sample version of
	$\widehat V_{it}$. 	\textbf{ Q.E.D.}
	
\end{appendix}

\bibliographystyle{econsoc}
\bibliography{causalfactor_bib}

\end{document}